\title{Twin-width VI: the lens of contraction sequences}
\titlerunning{Twin-width VI: the lens of contraction sequences}
\author{\'{E}douard Bonnet}{Univ Lyon, CNRS, ENS de Lyon, Université Claude Bernard Lyon 1, LIP UMR5668, France \and \url{http://perso.ens-lyon.fr/edouard.bonnet/}}{edouard.bonnet@ens-lyon.fr}{https://orcid.org/0000-0002-1653-5822}{}
\author{Eun Jung Kim}{Universit\'{e} Paris-Dauphine, PSL University, CNRS UMR7243, LAMSADE, Paris, France}{eun-jung.kim@dauphine.fr}{https://orcid.org/0000-0002-6824-0516}{}
\author{Amadeus Reinald}{Univ Lyon, CNRS, ENS de Lyon, Université Claude Bernard Lyon 1, LIP UMR5668, France}{amadeus.reinald@ens-lyon.fr}{https://orcid.org/0000-0002-8108-4036}{}
\author{St\'{e}phan Thomass\'{e}}{Univ Lyon, CNRS, ENS de Lyon, Universit\'{e} Claude Bernard Lyon 1, LIP UMR5668, France}{stephan.thomasse@ens-lyon.fr}{}{}
\authorrunning{\'E. Bonnet, E. J. Kim, A. Reinald, S. Thomassé}
\keywords{Twin-width, contraction sequences, width parameters, model checking}
\newcommand{\cmark}{\ding{51}}%
\newcommand{\xmark}{\ding{55}}%
\renewcommand{\geq}{\geqslant}
\renewcommand{\leq}{\leqslant}
\renewcommand{\preceq}{\preccurlyeq}
\newcommand{\equi}{functionally equivalent\xspace}
\newcommand{\adj}[2]{\text{Adj}_{#1}(#2)}
\DeclareMathOperator*{\bigland}{\bigwedge}
\newtheorem*{theorem*}{Theorem}
\theoremstyle{definition}
\newenvironment{proofofclaim}{\noindent \textsc{Proof of the Claim:}}{\hfill$\Diamond$\medskip}
\newcommand{\tww}{\text{tww}}
\newcommand{\ttww}{\text{ttww}}
\newcommand{\ltww}{\text{ltww}}
\newcommand{\otww}{\text{otww}}
\renewcommand{\P}{{\mathcal P}}
\newcommand{\MC}{{\mathcal C}}
\newcommand{\cC}{\mathcal{C}}
\colorlet{npink}{red!30!pink}
\newcommand{\local}{component\xspace}
\newcommand{\total}{total\xspace}
\newcommand{\Local}{Component\xspace}
\newcommand{\Total}{Total\xspace}
\newcommand{\cbd}[2]{\mathcal D_{#1,#2}}
\newcommand{\cbe}[2]{\mathcal E_{#1,#2}}
\newcommand{\eqfo}[1]{\equiv^{\text{FO}}_{#1}}
\newcommand{\eqmso}[1]{\equiv^{\text{MSO}}_{#1}}
\newcommand{\ltp}[1]{\text{loc-mso-tp}_{#1}}
\newcommand{\tp}[1]{\text{mso-tp}_{#1}}
\newcommand\abs[1]{\lvert #1\rvert}
\begin{document}

\maketitle

\begin{abstract}
  A contraction sequence of a graph consists of iteratively merging two of its vertices until only one vertex remains.
  The recently introduced twin-width graph invariant is based on contraction sequences.
  More precisely, if one puts error edges, henceforth red edges, between two vertices representing non-homogeneous subsets, the twin-width is the minimum integer~$d$ such that a contraction sequence exists that keeps red degree at most~$d$.
  By changing the condition imposed on the trigraphs (i.e., graphs with some edges being red) and possibly slightly tweaking the notion of contractions, we show how to characterize the well-established bounded rank-width, tree-width, linear rank-width, path-width --usually defined in the framework of branch-decompositions--, and proper minor-closed classes by means of contraction sequences.
  
 Contraction sequences hold a crucial advantage over branch-decompositions: While one can scale down contraction sequences to capture classical width notions, the more general bounded twin-width goes beyond their scope, as it contains planar graphs in particular, a class with unbounded rank-width.
As an application we give a transparent alternative proof of the celebrated Courcelle's theorem (actually of its generalization by Courcelle, Makowsky, and Rotics), that MSO$_2$ (resp. MSO$_1$) model checking on graphs with bounded tree-width (resp. bounded rank-width) is fixed-parameter tractable in the size of the input sentence.
  We are hopeful that our characterizations can help in other contexts.

We then explore new avenues along the general theme of contraction sequences both in order to refine the landscape between bounded tree-width and bounded twin-width (via spanning twin-width) and to capture more general classes than bounded twin-width. To this end, we define an oriented version of twin-width, where appearing red edges are oriented away from the newly contracted vertex, and the mere red out-degree should remain bounded.
  Surprisingly, classes of bounded oriented twin-width coincide with those of bounded twin-width.
  This greatly simplifies the task of showing that a class has bounded twin-width.
  As an example, using a lemma by Norine, Seymour, Thomas, and Wollan, we give a 5-line proof that $K_t$-minor free graphs have bounded twin-width.
  Without oriented twin-width, this fact was shown by a somewhat intricate 4-page proof in the first paper of the series.
  Finally we explore the concept of partial contraction sequences, where, instead of terminating on a single-vertex graph, the sequence ends when reaching a particular target class.
  We show that FO model checking (resp.~$\exists$FO model checking) is fixed-parameter tractable on classes with partial contraction sequences to a class of bounded degree (resp.~bounded expansion), provided such a sequence is given.
  Efficiently finding such partial sequences could turn out simpler than finding a (complete) sequence. 
\end{abstract}

\maketitle

\newpage

\section{Introduction}\label{sec:intro}

A~\emph{trigraph} is a graph with some of its edges being distinguished, typically called \emph{red edges}, while the rest of the edges are called \emph{black}.
The (vertex) \emph{contraction} (or \emph{identification}) of two non-necessarily adjacent vertices $u$ and $v$ in a trigraph consists of merging these two vertices into a new vertex $w$, keeping every edge $wx$ black if both $ux$ and $vx$ were black edges, and turning all the other edges incident to $w$ red.
The rest of the trigraph does not change.
A~\emph{contraction sequence} of an $n$-vertex (tri)graph $G$ is a sequence of trigraphs $G=G_n, G_{n-1}, \ldots, G_1$ such that $G_i$ is an $i$-vertex trigraph, obtained by performing one contraction in $G_{i+1}$.
A~\emph{$d$-sequence} is a contraction sequence such that every trigraph of the sequence has maximum red degree at most~$d$.
The \emph{twin-width} of a graph is defined via contraction sequences: It is the minimum integer $d$ such that $G$ admits a $d$-sequence.
See~\cref{fig:contraction-sequence} for an example of a graph with a 2-sequence.
\begin{figure}[h!]
  \centering
  \resizebox{400pt}{!}{
  \begin{tikzpicture}[
      vertex/.style={circle, draw, minimum size=0.68cm}
    ]
    \def\s{1.2}
    \foreach \i/\j/\l in {0/0/a,0/1/b,0/2/c,1/0/d,1/1/e,1/2/f,2/1/g}{
      \node[vertex] (\l) at (\i * \s,\j * \s) {$\l$} ;
    }
    \foreach \i/\j in {a/b,a/d,a/f,b/c,b/d,b/e,b/f,c/e,c/f,d/e,d/g,e/g,f/g}{
      \draw (\i) -- (\j) ;
    }

    \begin{scope}[xshift=3 * \s cm]
    \foreach \i/\j/\l in {0/0/a,0/1/b,0/2/c,1/0/d,2/1/g}{
      \node[vertex] (\l) at (\i * \s,\j * \s) {$\l$} ;
    }
    \foreach \i/\j/\l in {1/1/e,1/2/f}{
      \node[vertex,opacity=0.2] (\l) at (\i * \s,\j * \s) {$\l$} ;
    }
    \node[draw,rounded corners,inner sep=0.01cm,fit=(e) (f)] (ef) {ef} ;
    \foreach \i/\j in {a/b,a/d,b/c,b/d,b/ef,c/ef,c/ef,d/g,ef/g,ef/g}{
      \draw (\i) -- (\j) ;
    }
    \foreach \i/\j in {a/ef,d/ef}{
      \draw[red, very thick] (\i) -- (\j) ;
    }
    \end{scope}

    \begin{scope}[xshift=6 * \s cm]
    \foreach \i/\j/\l in {0/1/b,0/2/c,2/1/g,1/1/ef}{
      \node[vertex] (\l) at (\i * \s,\j * \s) {$\l$} ;
    }
    \foreach \i/\j/\l in {0/0/a,1/0/d}{
      \node[vertex,opacity=0.2] (\l) at (\i * \s,\j * \s) {$\l$} ;
    }
    \draw[opacity=0.2] (a) -- (d) ;
    \node[draw,rounded corners,inner sep=0.01cm,fit=(a) (d)] (ad) {ad} ;
    \foreach \i/\j in {ad/b,b/c,b/ad,b/ef,c/ef,c/ef,ef/g,ef/g}{
      \draw (\i) -- (\j) ;
    }
    \foreach \i/\j in {ad/ef,ad/g}{
      \draw[red, very thick] (\i) -- (\j) ;
    }
    \end{scope}

    \begin{scope}[xshift=9 * \s cm]
    \foreach \i/\j/\l in {0/2/c,2/1/g,0.5/0/ad}{
      \node[vertex] (\l) at (\i * \s,\j * \s) {$\l$} ;
    }
    \foreach \i/\j/\l in {0/1/b,1/1/ef}{
      \node[vertex,opacity=0.2] (\l) at (\i * \s,\j * \s) {$\l$} ;
    }
    \draw[opacity=0.2] (b) -- (ef) ;
    \node[draw,rounded corners,inner sep=0.01cm,fit=(b) (ef)] (bef) {bef} ;
    \foreach \i/\j in {ad/bef,bef/c,bef/ad,c/bef,c/bef,bef/g}{
      \draw (\i) -- (\j) ;
    }
    \foreach \i/\j in {ad/bef,ad/g,bef/g}{
      \draw[red, very thick] (\i) -- (\j) ;
    }
    \end{scope}

    \begin{scope}[xshift=11.7 * \s cm]
    \foreach \i/\j/\l in {0/2/c}{
      \node[vertex] (\l) at (\i * \s,\j * \s) {$\l$} ;
    }
     \foreach \i/\j/\l in {0.5/0/adg,0.5/1.1/bef}{
      \node[vertex] (\l) at (\i * \s,\j * \s) {\footnotesize{\l}} ;
    }
    \foreach \i/\j in {c/bef}{
      \draw (\i) -- (\j) ;
    }
    \foreach \i/\j in {adg/bef}{
      \draw[red, very thick] (\i) -- (\j) ;
    }
    \end{scope}

    \begin{scope}[xshift=13.7 * \s cm]
    \foreach \i/\j/\l in {0.5/0/adg,0.5/1.1/bcef}{
      \node[vertex] (\l) at (\i * \s,\j * \s) {\footnotesize{\l}} ;
    }
    \foreach \i/\j in {adg/bcef}{
      \draw[red, very thick] (\i) -- (\j) ;
    }
    \end{scope}

    \begin{scope}[xshift=15 * \s cm]
    \foreach \i/\j/\l in {1/0.75/abcdefg}{
      \node[vertex] (\l) at (\i * \s,\j * \s) {\tiny{\l}} ;
    }
    \end{scope}
    
  \end{tikzpicture}
  }
  \caption{A 2-sequence witnessing that the initial graph has twin-width at most~2.}
  \label{fig:contraction-sequence}
\end{figure}
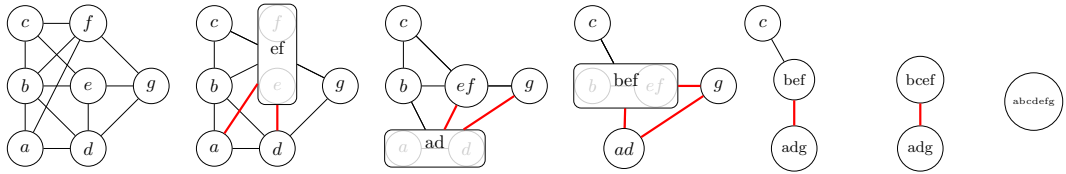
Not to hinder the flow of this introduction, we will try to limit further definitions.
The reader is deferred to~\cref{sec:prelim} if encountering some unknown terminology.\footnote{If a term is still not defined there, it is not important for the rest of the paper.}

Classes of bounded twin-width are surprisingly diverse.
They include for instance classes of bounded tree-width, or even bounded rank-width, proper minor-closed classes, hereditary proper subclasses of permutation graphs, subgraphs of $O(1)$-dimensional grids~\cite{twin-width1}, as well as $\Omega(\log n)$-subdivisions of $n$-vertex graphs, classes with bounded queue or stack number, and some families of expanders~\cite{twin-width2}.
Nevertheless classes of bounded twin-with have interesting properties: If a class $\mathcal C$ has bounded twin-width, then so has any class that is obtained from $\mathcal C$ by a fixed first-order transduction~\cite{twin-width1}, they are $\chi$-bounded~\cite{twin-width3}, and allow a fixed-parameter tractable ({\FPT}) algorithm for FO model checking~\cite{twin-width1}, provided $O(1)$-sequences are given in input, and more practical {\FPT} algorithms on specific problems like \textsc{$k$-Independent Set} or \textsc{$k$-Dominating Set}~\cite{twin-width3}.
Twin-width naturally extends to matrices over finite alphabets and binary structures in general~\cite{twin-width1,twin-width&permutations}.
Efficiently approximating twin-width (that is, returning an $f(d)$-sequence when the twin-width of the input is at most~$d$) can be done for classes of \emph{totally ordered} binary structures~\cite{twin-width4}, and for all the above-mentioned classes of bounded twin-width, but remains an open challenge for \emph{unordered} graphs.

\medskip

This paper investigates variations on the theme of contraction sequences.
We first show that bounded rank-width and bounded linear rank-width can be defined by means of contractions sequences.
Instead of requiring the maximum red degree to be bounded, one shall strengthen the condition to bounded-size red components, and bounded total number of red edges,\footnote{modulo the technicality of introducing red loops (see~\cref{sec:prelim})} respectively.
In the sparse regime of biclique-free classes, this characterizes bounded tree-width and bounded path-width.
See~\cref{subfig:ltww,subfig:ttww} for some illustration.

Let us elaborate on that with an alternative formalism.
A useful equivalent viewpoint on contraction sequences is the notion of \emph{partition sequence}, that is, a sequence $\mathcal P_n, \ldots, \mathcal P_1$ of partitions of the vertex set $V(G)$, where $\mathcal P_n$ is the partition into singletons, $\mathcal P_1=\{V(G)\}$, and each $\mathcal P_i$ is obtained by merging two parts of $\mathcal P_{i+1}$.
Then a \emph{width} $w: \mathcal P(V(G)) \to \mathbb N$ (i.e., a function from the partitions of $V(G)$ to the naturals) can be naturally lifted to the graph $G$ as the minimum integer $t$ such that there exists a partition sequence $\mathcal P_n, \ldots, \mathcal P_1$ satisfying $w(\mathcal P_i) \leqslant t$ for every $i$.
This defines the \emph{width of $G$ associated to} $w$.

What are partitions of ``good quality'' or \emph{small width}?
Probably partitions $\mathcal P_i$ such that the quotient $G/\mathcal P_i$ ``captures'' the edge set of $G$ quite well, minimizing the extra amount of information one needs to fully recover $G$.
In that respect, the ideal scenario is when a pair of parts $X,Y \in \mathcal P_i$ is \emph{homogeneous}, that is, $X$ and $Y$ are completely adjacent or completely non-adjacent in $G$.
Consider the auxiliary graph with vertex set the parts of $\mathcal P_i$, and edge set all pairs of parts $X,Y \in \mathcal P_i$, which are \emph{not} homogeneous.
Note that this auxiliary graph is precisely the \emph{red graph} of trigraph $G_i$ (as previously defined), that is, obtained from $G_i$ by keeping the red edges only.
Three width measures come relatively naturally: the maximum degree, $w_d$, the maximum component size, $w_c$, and the total number of edges, $w_t$, in the red graph of $G_i$.
We obtain the following invariants on graphs.
\begin{compactitem}
\item The \emph{twin-width} of $G$ as the width $\tww(G)$ associated to $w_d$ (see~\cref{subfig:tww}).
\item The \emph{\local twin-width} of $G$ as the width $\ltww(G)$ associated to $w_c$ (see~\cref{subfig:ltww}).
\item The \emph{\total twin-width} of $G$ as the width $\ttww(G)$ associated to $w_t$ (see~\cref{subfig:ttww}).
\end{compactitem}

\begin{figure}[h!]
  \centering
    \begin{subfigure}[t]{0.4\textwidth}
        \centering
        \begin{tikzpicture}[vertex/.style={draw,circle,inner sep=0.08cm},err/.style={red,very thick}]
          \def\t{6}
          \pgfmathtruncatemacro\tm{\t-1}
          \def\s{0.75}
          \foreach \i in {1,...,\t}{
            \foreach \j in {1,...,\t}{ 
              \node[vertex] (v\i\j) at (\i * \s,\j * \s) {} ;
            }
          }
          \foreach \i [count = \ip from 2] in {1,...,\tm}{
            \foreach \j in {1,...,\t}{
              \draw[err] (v\i\j) -- (v\ip\j) ;
              \draw[err] (v\j\i) -- (v\j\ip) ;
            }
          }
          \foreach \i/\j/\k/\l in {1/1/2/2,1/3/3/4,1/4/2/5, 2/1/1/2,2/1/3/2,2/5/1/6, 3/2/4/3,3/4/4/3, 4/1/5/2,4/2/5/1,4/4/5/5,4/5/5/4, 5/3/6/4,5/6/6/5}{
            \draw (v\i\j) -- (v\k\l) ;
          }
        \end{tikzpicture}
        \caption{Twin-width: red degree.}
        \label{subfig:tww}
    \end{subfigure}%
    ~ 
    \begin{subfigure}[t]{0.6\textwidth}
        \centering
         \begin{tikzpicture}[vertex/.style={draw,circle,inner sep=0.08cm},err/.style={red,very thick}]
          \def\t{6}
          \pgfmathtruncatemacro\tm{\t-1}
          \def\s{1.1}
          \foreach \i in {1,...,\t}{
            \foreach \j in {1,...,\t}{ 
              \node[vertex] (v\i\j) at (\i * \s,0,\j * \s) {} ;
            }
          }
          \node[vertex] (u) at (\t/2 * \s + \s/2,2.5,\t/2 * \s + \s/2) {} ;
          \node[vertex] (u2) at (\t/2 * \s + 3 * \s/2,2.5,\t/2 * \s + \s/2) {} ;
          \foreach \i in {1,...,\t}{
            \foreach \j in {1,...,\t}{
              \pgfmathparse{random(5)}
              \ifnum\pgfmathresult=1
              \draw[err,->,opacity=0.5] (v\i\j) -- (u) ;
              \fi
              \pgfmathparse{random(4)}
              \ifnum\pgfmathresult=1
              \draw[err,->,opacity=0.5] (v\i\j) -- (u2) ;
              \fi
            }
          }  
          \foreach \i [count = \ip from 2] in {1,...,\tm}{
            \foreach \j in {1,...,\t}{
              \pgfmathparse{random(3)}
              \ifnum\pgfmathresult=1
              \draw[err,->] (v\i\j) -- (v\ip\j) ;
              \else
              \ifnum\pgfmathresult=2
              \draw[err,<-] (v\i\j) -- (v\ip\j) ;
              \else
              \draw[err,<->] (v\i\j) -- (v\ip\j) ;
              \fi
              \fi
              \pgfmathparse{random(3)}
              \ifnum\pgfmathresult=1
              \draw[err,->] (v\j\i) -- (v\j\ip) ;
              \else
              \ifnum\pgfmathresult=2
              \draw[err,<-] (v\j\i) -- (v\j\ip) ;
              \else
              \draw[err,<->] (v\j\i) -- (v\j\ip) ;
              \fi
              \fi
            }
          }
          \foreach \i/\j/\k/\l in {1/3/3/4,1/4/3/5, 2/1/1/3,2/1/3/2,2/5/1/6, 3/2/4/3,3/4/4/3, 4/1/5/2,4/2/5/1,4/4/5/5,4/5/5/4, 5/3/6/4,5/5/6/6,5/6/6/5}{
            \draw (v\i\j) -- (v\k\l) ;
          }
        \end{tikzpicture}
        \caption{Oriented twin-width: red out-degree.}
        \label{subfig:otww}
    \end{subfigure}

    \begin{subfigure}[t]{0.51\textwidth}
        \centering
        \begin{tikzpicture}[vertex/.style={draw,circle,inner sep=0.08cm},err/.style={red,very thick}]
          \def\s{0.75}
          \foreach \i/\j/\l in {0/0/a,0/1/b,1/0/c,1/1/d, 1/2/e,1/3/f,2/2/g,2/3/h, 2/0/i,2/1/j,3/0/k,3/1/l, 2.5/4/m,3/5/n,3.5/4/o, 4.5/3/p, 4/2/q,3.75/1/r,4.25/1/s, 5/2/t,4.75/1/u,5.25/1/v}{
            \node[vertex] (\l) at (\i * \s, \j * \s) {} ;
          }
          \foreach \i/\j in {a/b,a/c,a/d,b/c,b/d,c/d, i/j,i/l,i/k, r/s,r/q, u/v,u/t,v/t, f/h}{
            \draw[err] (\i) -- (\j) ;
          }
          \foreach \i/\j in {b/e,d/e, j/g,l/g, e/g,e/h,e/f,g/f,g/h, f/m,h/m, m/o,m/n,o/n, o/p, p/q,q/t,p/t, q/s}{
            \draw (\i) -- (\j) ;
          }
        \end{tikzpicture}
        \caption{\Local twin-width: size of red components.}
        \label{subfig:ltww}
    \end{subfigure}%
    ~ 
    \begin{subfigure}[t]{0.49\textwidth}
        \centering
        \begin{tikzpicture}[vertex/.style={draw,circle,inner sep=0.08cm},err/.style={red,very thick}, every loop/.style={}]
          \def\r{1.3}
          \def\rp{1.8}
          \foreach \a in {20,40,...,360}{
            \node[vertex] (p\a) at (\a:\r) {} ;
            \node[vertex] (q\a) at (\a:\rp) {} ;
          }
          \foreach \i/\j in {p260/q260,p260/q280,q260/q280,p280/q260,p300/q300}{
            \draw[err] (\i) -- (\j) ;
          }
          \foreach \i in {p280,p300}{
            \path[err] (\i) edge [loop above] (\i) ;
          }
          \foreach \i in {q260,q280}{
            \path[err] (\i) edge [loop below] (\i) ;
          }
          \foreach \i in {20,40,...,340}{
            \pgfmathtruncatemacro\ip{\i+20}
            \draw (p\i) -- (p\ip) ;
          }
           \foreach \i in {20,40,...,240,280,300,...,340}{
            \pgfmathtruncatemacro\ip{\i+20}
            \draw (q\i) -- (q\ip) ;
           }
           \foreach \i/\j in {p360/p20,q360/q20,p360/q20, p40/q40,p40/q80,q60/p80, p100/q120,p100/q100,p120/q120,p120/q100, p140/q160,p160/q180,p180/q200,p200/q200, p220/q240,p240/q220, q280/p300, p320/q340,p320/q320,p340/q320}{
            \draw (\i) -- (\j) ;
          }
        \end{tikzpicture}
        \caption{\Total twin-width: number of red edges.}
        \label{subfig:ttww}
    \end{subfigure}
    \caption{Examples of graphs with low value of the four corresponding twin-width variants.}
    \label{fig:all-tww}
\end{figure}
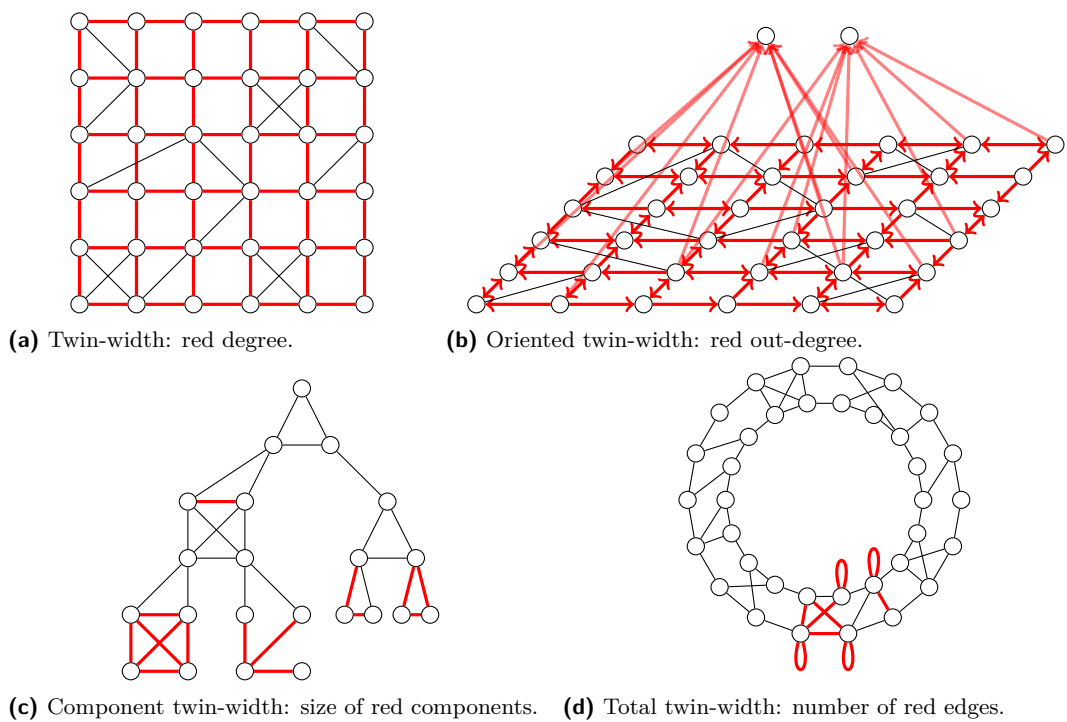


As we will see in~\cref{sec:classical}, the striking fact is that the latter two parameters already exist, up to functional equivalence. 

\begin{theorem}\label{thm:cs-width}
  The following parameters are \equi:
\begin{compactitem}
\item \Local twin-width and rank-width;
\item \Total twin-width and linear rank-width.
\end{compactitem}
\end{theorem}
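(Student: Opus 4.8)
The plan is to prove each of the two equivalences in both directions, with the linear case shadowing the general one throughout; concretely I would exhibit functions $f,g$ with $\mathrm{rw}(G)\le f(\ltww(G))$ and $\ltww(G)\le g(\mathrm{rw}(G))$, and symmetrically for $\mathrm{lrw}$ versus $\ttww$. I work with partition sequences, using that the red graph of $\mathcal P_i$ is the heterogeneity graph of the quotient $G/\mathcal P_i$, and I freely invoke the textbook functional equivalence of rank-width with clique-width (with $\mathrm{cw}(G)\le 2^{\mathrm{rw}(G)+1}$ and $\mathrm{rw}(G)\le\mathrm{cw}(G)$) and of linear rank-width with linear clique-width.

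For the direction ``bounded width $\Rightarrow$ bounded component/total twin-width'', I would start from an $\ell$-expression of $G$ (taking $\ell=2^{\mathrm{rw}(G)+1}$) and traverse its syntactic tree in post-order, maintaining the partition of $V(G)$ in which every already-processed maximal subexpression $\psi$ contributes the $\le\ell$ label classes of the labelled graph $G_\psi$ it builds, while every not-yet-reached vertex stays a singleton. Processing one node becomes a bounded number of elementary two-block contractions (relabel = merge two classes; disjoint union = merge the children's classes according to the combined, relabelled labels; leaf and add-edge = do nothing), so this is a legal sequence from singletons to $\{V(G)\}$. The point to verify is that every red edge at every step joins two (possibly partial) label classes living inside a single processed subtree: this uses the defining property of $\ell$-expressions --- same label in $G_\psi$ implies same $G$-neighbourhood outside $V_\psi$ --- which makes any label class, and any intermediate block (a subset of an eventual label class), homogeneous in $G$ to everything outside its subtree, hence homogeneous to the classes of all other subtrees and to all remaining singletons. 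So every red component has $\le\ell$ parts, giving $\ltww(G)\le\ell$. For the linear statement I would run the same argument on a \emph{linear} $\ell$-expression: its tree is a caterpillar, there is a single processed region growing one vertex at a time, so all red edges and all red loops (recording internal heterogeneity of a class) stay inside the $\le\ell$ current label classes, which bounds the \emph{total} number of red edges and loops by $\binom{\ell}{2}+\ell$, i.e.\ $\ttww(G)=2^{O(\mathrm{lrw}(G))}$.

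For the converse ``bounded component twin-width $\Rightarrow$ bounded rank-width'', given a $c$-component sequence $\mathcal P_n,\dots,\mathcal P_1$ I would pass to the partitions $\mathcal R_i$ obtained from $\mathcal P_i$ by fusing each red component into a single block (isolated parts stay alone). Two facts drive the argument: (a) if a block $U\in\mathcal R_i$ is the union of the $\le c$ parts of a red component, then every part outside that component is homogeneous to every part of it, so the $\F_2$-row of a vertex of $U$ against $V(G)\setminus U$ depends only on which of the $\le c$ parts contains it --- hence the cut $(U,V(G)\setminus U)$ has rank $\le c$; and (b) a contraction never erases a red edge (a red edge $XW$ becomes $ZW$, still red, when $X$ is merged into $Z$) and creates new red edges only at the freshly formed part, so red components only grow and merge, whence $\mathcal R_n\preceq\dots\preceq\mathcal R_1$ and a single contraction fuses at most $c+1$ blocks of the current $\mathcal R_i$. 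Thus the blocks along the sequence form a laminar family, i.e.\ a rooted tree with leaf set $V(G)$, root $V(G)$, and branching $\le c+1$; binarising the high-degree nodes turns it into a rank-decomposition in which every cut is a union of $\le c+1$ of the $\mathcal R$-blocks and so, by subadditivity of the cut-rank, has rank $\le c^2$. Hence $\mathrm{rw}(G)\le c^2$. For linear rank-width the same laminar family must be reorganised into a \emph{caterpillar} rather than an arbitrary tree; this is where most of the effort goes, and I would exploit that a $t$-total sequence keeps at most $2t$ parts incident to a red edge at any step, so there is always a bounded ``heterogeneous core'' to sweep along the spine while the homogeneous remainder hangs off as pendant leaves.

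The routine ingredients are the (linear) clique-width/rank-width equivalences, subadditivity of the cut-rank, and the post-order bookkeeping; the one genuinely delicate step is the linear forward direction, turning a bounded-total-red-edges sequence into a \emph{linear} rank-decomposition rather than an arbitrary one, and it is precisely the red-loop convention in $\ttww$ --- without it the growing prefix block of a linear expression would conceal unbounded internal structure --- that makes this go through.
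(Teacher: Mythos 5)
Your forward directions (rank-width $\Rightarrow$ \local twin-width, and linear rank-width $\Rightarrow$ \total twin-width) route through clique-width expressions; the paper instead works directly from boolean-width and branch-decompositions via \cref{obs:moduleboolean}, contracting two leaves with the same neighbourhood across a small cut. Both are legitimate, the bounds agree up to a factor in the exponent, and your invariant (red edges stay inside a single processed subtree) is correct, though the phrase ``homogeneous in $G$ to everything outside its subtree'' is overstated --- what you really have is that such a block is a module across the cut, and you recover full homogeneity only because the blocks on the other side are themselves modules across their own cuts or singletons; you do eventually say this, but the intermediate claim needs the symmetric observation made explicit. The non-linear converse (\local twin-width $\Rightarrow$ rank-width) via the laminar family of red components, branching $\le c+1$, and cut-rank subadditivity is essentially the paper's argument, which builds the same tree incrementally.

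The genuine gap is the direction \total twin-width $\Rightarrow$ linear rank-width. You describe it as ``reorganising the laminar family into a caterpillar,'' but the laminar family of red components is generically \emph{not} a chain: at a given step there can be several pairwise-disjoint, incomparable red components, none containing another, and no reorganisation of that family produces a caterpillar whose cuts are unions of few family members. The paper's argument abandons that laminar family entirely: it orders $V(G)$ by the step at which each vertex first lands in a non-singleton part, producing the chain $V_{n-1}\subseteq V_{n-2}\subseteq\cdots\subseteq V_1$ of ``already contracted'' prefixes, and then bounds the number of neighbourhood classes across each cut $(V_i,V(G)\setminus V_i)$ by $2^d$ using the fact --- which depends crucially on the red-loop convention you rightly flag --- that at most $d$ parts of $\mathcal P_i$ are incident to any red edge. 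Your ``heterogeneous core sweeping along the spine with the remainder as pendant leaves'' gestures at this, but the specific ordering (by contraction time, not by any tree reorganisation) and the boolean-width bound on the resulting prefix cuts are exactly the missing content, and they are not derivable from the laminar-family picture you set up.
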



This phrases the classic width measures (tree-width, rank-width, path-width, linear rank-width) in the language of twin-width and contraction sequences.
This unifying lens has two main benefits.

The first benefit is simplicity and renewal.
We propose some examples where our characterizations somewhat simplify matters or bring a new, slightly but resolutely different perspective.
We give a short alternative proof of the celebrated theorem by Courcelle, Makowsky, and Rotics~\cite{Courcelle00} that monadic second-order (MSO) model checking (with adjacency relation only) is fixed-parameter tractable on classes with bounded rank-width.
More precisely, $G \models \varphi$ is decidable in time $f(|\varphi|,\text{cw}) |V(G)|$ on graphs $G$ given with a clique-width expression with $\text{cw}$ labels, and MSO sentences $\varphi$.
This is known to generalize Courcelle's theorem that MSO model checking with incidence relation is fixed-parameter tractable on classes with bounded tree-width.

Let us sketch how our algorithm goes.
Instead of parsing a clique-width expression (or a tree-decomposition), we scan the contraction sequence from $G=G_n$ to $G_1$.
We maintain \emph{types} that are ``local to the red graph'', that is, the theory up to quantifier rank $q$ of all the sentences of depth $q$ that are true on a given red component.
As the \local twin-width is bounded, note that there is a bounded number of vertices per red component.
Initially in $G_n=G$, the red components are single vertices, hence the local theory is easy to determine.
Eventually in $G_1$, the whole graph has been merged into a single vertex, thus the local theory of the unique vertex of $G_1$ matches the ``global'' theory of $G$.
As this is precisely what we are after, the crux lies in updating the local theories when moving from trigraph $G_{i+1}$ to trigraph $G_i$.
When contracting $u, v \in V(G_{i+1})$, up to $d+1$ red components of $G_{i+1}$ are fused into one in $G_i$, where $d$ is the upper bound on the \local twin-width.
We show that the local theory on these red components, combined with the black edges sitting on them, is enough to determine unambiguously the local theory of the new red component.
The relative simplicity of \local twin-width brings our proof down to a minimum: one lemma in the vein of the Feferman-Vaught theorem~\cite{Feferman67}.
For completeness, we prove this folklore lemma with Ehrenfeucht-Fraïssé games for MSO.

We also present an analogue to dynamic programming over clique-width expressions or tree-decompositions, with the approach of contraction sequences.
We exemplify it with a practical algorithm for the particular MSO-expressible problem \textsc{$q$-Coloring}.
As the MSO model checking algorithm, it can be described as dynamic programming over the contraction sequence.
However this algorithm is obviously much more practical and simple than the generic MSO model checking.
We then list some advantages that our approach holds over the classic dynamic-programming algorithms. 
In~\cite{twin-width3}, this scheme was used to solve some particular FO-expressible problems on graphs of bounded twin-width given with an $O(1)$-sequence.

This brings us to the second benefit, which is an unexpected collapse of the meta-algorithmic techniques dedicated to handle first-order and monadic second-order logic.
Dynamic programming over contraction sequences tackles in one sweep problems that were seemingly as different and required as disparate techniques as \textsc{$k$-Subgraph Isomorphism} on planar graphs\footnote{See the work of Eppstein~\cite{Eppstein99} building up on Baker's technique~\cite{Baker94} and leading to low tree-width decompositions, and~\cite{twin-width3} for the approach with contraction sequences.} and \textsc{$q$-Coloring} on graphs of bounded rank-width~\cite{complexity}.
Another realization of that collapse is a similar algorithm efficiently solving FO model checking on graphs of bounded twin-width given with $O(1)$-sequences~\cite{twin-width1}, and MSO model checking on graphs of bounded rank-width/bounded \local twin-width. 
This motivates searching for characterizations or generalizations of classes of bounded expansion or nowhere dense classes by means of contraction sequences, as a way to push further their unifying power.

We also define a notion of~\emph{spanning twin-width},\footnote{The exact definition of spanning twin-width is somewhat technical and deferred to~\cref{sec:spanningtww}.} intermediate between bounded tree-width and bounded twin-width, which exactly captures classes excluding a minor, among monotone classes.
\begin{theorem}\label{thm:spanning-tww}
A monotone graph class $\MC$ has bounded spanning twin-width if and only if it is proper minor-closed.
\end{theorem}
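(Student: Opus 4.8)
The plan is to prove \cref{thm:spanning-tww} by treating the two directions with quite different tools: the ``proper minor-closed $\Rightarrow$ bounded spanning twin-width'' implication via the Robertson--Seymour structure theorem fed into a single global contraction sequence, and the converse via a clique-minor obstruction together with minor-monotonicity of spanning twin-width. Throughout, one exploits that spanning twin-width sits strictly between bounded tree-width and bounded twin-width: unlike the former it tolerates a bounded number of ``large but well-structured'' red components (in particular a bounded number of apex-like vertices), and unlike the latter it is rigid enough to still exclude a minor.

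For the forward direction, fix $t$ with $K_t \notin \MC$ as a minor. By the structure theorem of Robertson and Seymour there is a constant $k = k(t)$ such that every $G \in \MC$ admits a tree-decomposition of adhesion at most $k$ whose torsos are $k$-almost-embeddable in surfaces of Euler genus at most $k$. First I would show that bounded spanning twin-width is preserved under clique-sums of bounded adhesion: processing the decomposition tree from the leaves, contract each leaf torso down onto its (bounded-size) adhesion set, which can then be absorbed into the parent torso's sub-sequence at the cost of only an additive $O(k)$ in the width. This reduces the statement to a single almost-embeddable piece. There I would set aside the $O(k)$ apex vertices, reinserting each only at the very end of the sequence (which raises the width by $O(1)$ per apex --- exactly the slack spanning twin-width permits and ordinary tree-width does not), treat each of the $O(k)$ bounded-depth vortices as a bounded-pathwidth gadget handled by a linear sub-sequence, and contract the genuinely surface-embedded part by BFS-layering: repeatedly contract the current outermost radius-ball into a single vertex, as in the planar-graph twin-width argument, using bounded genus to bound how non-consecutive layers interact. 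One then verifies that every trigraph in the combined sequence meets the spanning-width bound, the key point being that the only ``large'' red components ever created are the apex and vortex gadgets, which are structured enough to be admissible.

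For the converse I would argue the contrapositive of the ``excludes a minor'' part: if the monotone class $\MC$ excludes no fixed $K_t$, then it contains arbitrarily large $K_t$-minor models, and I would exhibit on these a forced growth of spanning twin-width. The mechanism parallels the tree-width case: a large clique minor is robustly dense --- it stays connected and highly connected after any bounded number of contractions --- so at some step of any contraction sequence it must produce a red component that is simultaneously large and not of the admissible ``structured'' form, contradicting a fixed bound. Separately, to upgrade ``excludes some minor'' to genuine proper minor-closedness, I would prove that spanning twin-width is minor-monotone up to a function, $\stww(H) \le f(\stww(G))$ whenever $H$ is a minor of $G$, by lifting an $O(1)$-spanning-sequence of $G$ to one of $H$ (deletions are free; each edge contraction only forces one extra part-merge). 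Hence the minor-closure of $\MC$ still has bounded spanning twin-width, and combined with monotonicity of $\MC$ this pins $\MC$ down to exactly a proper minor-closed class.

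The hard part will be the structural direction: threading the Robertson--Seymour decomposition through one global contraction sequence while keeping the spanning-width measure bounded. The delicate points are (i) amortizing the clique-sum glue so that adhesion sets do not accumulate across the decomposition tree, (ii) showing that apex vertices and vortices can be handled without ever creating a red component outside the admissible family, and (iii) checking that BFS-layering on the bounded-genus part never spreads a red component ``too thin'' for the spanning condition. The converse is comparatively routine once the right robustness-under-contraction property of spanning twin-width is isolated --- and that same property is precisely what powers the minor-monotonicity invoked above.
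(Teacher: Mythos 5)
Your proposal rests on a misreading of what spanning twin-width actually is, and this causes the converse direction to fail. The definition in the paper is simply: $\stww(G)$ is the minimum of $\tww(G,\preceq)$ over all tree orders $\preceq$ compatible with $G$, where compatibility means the Hasse diagram of $\preceq$ is an oriented rooted spanning tree. There is no notion of ``admissible structured red components'' or apex/vortex gadgets that are allowed to be large --- the condition is the standard red-degree bound for the binary structure $(G,\preceq)$. Consequently your mechanism for the hard direction --- ``a large clique minor is robustly dense, so it must eventually produce a large non-admissible red component'' --- collapses: the complete graph $K_n$ itself has twin-width $0$ (iteratively contract twins), and a clique-minor model can be contracted away just as cheaply. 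Nothing about density per se drives the width up. What the paper actually exploits is \emph{monotonicity} of $\MC$: arbitrarily large clique minors plus closure under subgraphs give you, as a clean claim, \emph{induced subdivisions of every cubic graph}. Then a fixed FO transduction recovers the cubic graph $H$ from $(S,\preceq)$ (the tree order is used to detect the subdivided paths via conditions on $\prec^c$-chains), and since FO transductions preserve bounded twin-width and cubic graphs have unbounded twin-width, bounded spanning twin-width of $\MC$ is impossible. This FO-transduction step is the genuine content of the theorem and is entirely missing from your sketch.

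Your auxiliary lemma --- that $\stww(H)\le f(\stww(G))$ when $H$ is a minor of $G$ --- is false, and the paper says so explicitly: spanning twin-width is not monotone even under \emph{induced subgraphs}, because deleting vertices can destroy the spanning trees you relied on. (This is also why the theorem is stated only for monotone classes.) You cannot lift a compatible tree order of $G$ to a subgraph in general, so the ``deletions are free'' step does not go through. For the backward direction, the paper does not redo the graph-minor structure theorem at all: it cites the Lex-DFS construction from the first twin-width paper, which produces a compatible DFS tree order $\preceq$ with $\tww(G,\preceq)$ bounded in terms of $t$. Your Robertson--Seymour route is not obviously wrong in spirit, but it is a much heavier tool than needed, and you would still have to check that the global contraction sequence you build respects a \emph{single} compatible tree order (a constraint your BFS-layering/clique-sum scheme never addresses). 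I would recommend revisiting the definition of spanning twin-width, then re-deriving the converse via the transduction argument.
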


So far we explored what happens when \emph{restricting} the notion of bounded twin-width.
Here is an attempt to \emph{generalize} it.
One may observe that homogeneity of two distinct parts $X, Y$ of $\mathcal P_i$ is in fact a directed relation.
Let us say that $Y$ is \emph{homogeneous to $X$} if the pair $X,\{y\}$ is homogeneous for all $y \in Y$.
We can now form the \emph{directed red graph} $D_i$ whose vertices are the parts of $\mathcal P_i$, and with arcs $X \rightarrow Y$ for all pairs $X,Y \in \mathcal P_i$ with $Y$ \emph{not} homogeneous to $X$.
The \emph{oriented width} of $\mathcal P_i$ is the maximum out-degree $w_o$ of $D_i$.
This defines the \emph{oriented twin-width} of $G$ denoted $\otww(G)$ (see~\cref{subfig:otww}).
Note that $\otww(G) \leqslant \tww(G)$ since contracting parts can only create red arcs, which are directed from the contracted vertex.  

Oriented twin-width is in some sense ``fairer'' than twin-width, as far as the ``error count'' is concerned.
When an error (red edge) occurs while contracting two parts, it is only accounted to the newly contracted part, and not to other adjacent parts (since only their red in-degree may increase).
Surprisingly, we will see that:
\begin{theorem}\label{thm:otww-tww}
Oriented twin-width and twin-width are \equi.
\end{theorem}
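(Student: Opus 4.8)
The inequality $\otww(G)\leq\tww(G)$ is immediate (and was noted above), so the content is the reverse bound $\tww(G)\leq f(\otww(G))$, and the plan is to turn an \emph{oriented $d$-sequence} of $G$ into an ordinary $f(d)$-sequence, where $d:=\otww(G)$. Recall that an oriented $d$-sequence is a partition sequence $\mathcal P_n,\dots,\mathcal P_1$ whose directed red graphs $D_n,\dots,D_1$ all have maximum out-degree at most $d$. The observation that makes the reverse bound plausible is a degeneracy bound: since the restriction of $D_i$ to any set $S$ of parts again has out-degree at most $d$, it has at most $d|S|$ arcs, so the underlying undirected red graph $R_i$ is $2d$-degenerate; in particular most parts of $\mathcal P_i$ already have ordinary red degree $O(d)$, and the only way $\mathcal P_n,\dots,\mathcal P_1$ can fail to be an $f(d)$-sequence is through a few \emph{heavy} parts: a part $Y$ split by many other parts (many arcs $X\to Y$) while splitting few itself. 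The prototype is the centre of a spider whose legs are each contracted first: in the oriented reading all out-degrees are $1$, but the centre collects an unbounded ordinary red degree.

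The idea is to remove heavy parts on the fly, mimicking the ``good'' handling of that spider --- process its legs, absorbing them into a single growing part, \emph{before} touching the centre. Concretely, scan the oriented sequence and, whenever a part $Y$ is about to exceed red degree $\Theta(d)$, insert extra contractions that merge $Y$ with the parts splitting it, swallowing $Y$ into a growing blob. The point is that one must not absorb an arbitrary in-neighbour: merging two parts can raise the red out-degree of the result arbitrarily (e.g.\ merging the spider centre with one leg too early), so the extra merges must be chosen to agree with the \emph{future} of the oriented sequence --- the parts splitting $Y$ are all eventually absorbed into one part, and, when the original sequence performs those merges, they are cheap. The accounting is an amortization: every contraction the modified sequence performs is either one of the $n-1$ original ones or a brought-forward merge; along an oriented $d$-sequence only $O(dn)$ red arcs are ever created in total; and a potential that, for each part, counts the still-pending arcs pointing into it, pays for keeping the red degree $O(d)$ at all times.

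The step I expect to be the real obstacle is making this surgery globally coherent. The modified sequence must remain a genuine contraction sequence of $G$ (monotone, two parts at a time, terminating on the single part $V(G)$); and, more delicately, once an extra contraction is performed the current partition no longer appears in the original oriented sequence, so the orientation guarantee is gone and cannot simply be ``resumed''. The remedy is to maintain a self-sustaining invariant, weaker than ``appears in some oriented $d$-sequence'' --- for instance that the current partition refines a partition of the original sequence and that its directed red graph still has out-degree $O(d)$ --- and to prove that absorbing a heavy part preserves the invariant while strictly decreasing the potential. Verifying that, identifying precisely which parts may be merged without exploding the out-degree (this is where the tree structure of the original contraction sequence is used), and then tuning constants to extract an explicit $f$, is where the technical weight of the argument lies.
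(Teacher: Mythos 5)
Your proposal takes a genuinely different route from the paper's, but it stops exactly where a proof would have to begin: you yourself label the decisive step ``the real obstacle'' and do not carry it out. The gap is not a bookkeeping detail. Absorbing a heavy part $Y$ by merging it with its $k\gg d$ red in-neighbours $X_1,\dots,X_k$ produces a single part whose red out-degree can reach $(k+1)d$, since each constituent may contribute up to $d$ out-arcs toward distinct targets; the surgery can therefore manufacture new heavy configurations, and nothing in your sketch controls the resulting cascade or shows the proposed invariant is self-sustaining. Your amortization also rests on an implication that does not hold: the $O(dn)$ bound on total arc creations is an aggregate and does not prevent a single part from accumulating $\Omega(n)$ in-arcs over its lifetime, so there is no reason the potential you propose pays for each absorption at a bounded rate. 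The $2d$-degeneracy remark is correct but does not help, since the parameter is the worst-case red degree over the whole sequence, not an average at a snapshot. As written this is a programme, not a proof.

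The paper sidesteps the direct construction entirely. By \cref{thm:mixedmatrixgraph} (ultimately the Marcus--Tardos grid theorem), twin-width is \equi to the mixed value, so it suffices to prove the contrapositive: if every ordering of the adjacency matrix of $G$ has a $(2d+2)$-mixed minor, then in every partition sequence some directed red graph $D_\ell$ has a vertex of out-degree at least $d$. Fixing a compatible ordering $\sigma$, a $(2d+2)$-mixed minor $\mathcal D$ of $\adj{\sigma}{G}$, and the first step $\ell$ at which a part of the sequence engulfs a block of $\mathcal D$ (say a column block $C_j$), one picks in each row block $R_i$ a witness row $r_i$ non-homogeneous to $C_j$; because only one block of $\mathcal D$ is engulfed at step $\ell$, at least $d$ of the alternating witnesses $r_1,r_3,\dots,r_{2d+1}$ land in distinct parts of $\mathcal P_\ell$, so the part containing $C_j$ emits at least $d$ red arcs in $D_\ell$. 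This indirection is short because the combinatorial core is hidden inside \cref{thm:mixedmatrixgraph}, and the price is the bound $\tww(\mathcal C)\le \exp(\exp(O(\otww(\mathcal C))))$. The paper even warns that its proof ``does not immediately provide an $O(1)$-sequence from a partition sequence with bounded oriented width'' --- precisely the direct construction your plan targets --- and if your programme could be completed with the $O(d)$ blowup you aim for, it would strictly improve the paper's doubly-exponential dependence, a further sign that the missing step is far from routine.
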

The proof simply revisits the equivalence between bounded twin-width and so-called \emph{$O(1)$-mixed freeness} (see~\cref{sec:prelim}), and can almost integrally be found in~\cite{twin-width1}.
Our contribution here is mainly conceptual, in identifying the overlooked oriented twin-width.
The indirect nature of the proof, which does not immediately provide an $O(1)$-sequence from a partition sequence with bounded oriented width, suggests that something non-trivial is at play.
Indeed this greatly simplifies, as we then exemplify, the proof that $K_t$-minor free graphs have bounded twin-width presented in~\cite{twin-width1}.
We also observe that planar graphs have oriented twin-width at most~9.
Together with~\cref{thm:otww-tww}, this gives a very direct proof that planar graphs have bounded \emph{twin-width}. 

Another direction to generalize bounded twin-width is to allow contraction sequences to end at ``simple'' (tri)graphs instead of the 1-vertex graph.
Of course for this notion to be new, ``simple'' should not imply bounded twin-width.
Bounded-degree and bounded-expansion are reasonably ``tractable'' classes with unbounded twin-width~\cite{twin-width2}.
We say that a class $\mathcal C$ is \emph{collapsible to a class $\mathcal D$} if graphs of $\mathcal C$ admit \emph{partial} $O(1)$-sequences to (tri)graphs in $\mathcal D$.
We showcase the flexibility of the FO model-checking algorithm in~\cite{twin-width1}: Collapsible classes to bounded degree and collapsible classes to bounded expansion admit respectively a fixed-parameter tractable FO and $\exists$FO model-checking algorithm, provided a corresponding partial $O(1)$-sequence is given. 
This is a relatively elementary fusion of the algorithm in~\cite{twin-width1} and classic techniques from the meta-algorithmic toolbox, namely \emph{Gaifman's locality theorem} and \emph{low tree-depth covers}.
On the one hand, it can be seen as a first attempt to unify and extend tractable FO model-checking algorithms on ``sparse'' classes (bounded degree, bounded expansion) and on possibly ``dense'' classes (bounded twin-width). 
On the other hand, we explain why efficiently finding the corresponding partial $O(1)$-sequences, may turn out simpler than computing (complete) $O(1)$-sequences.

\paragraph*{Organization of the paper.}
In~\cref{sec:prelim} we recall the relevant background.
In~\cref{sec:classical} we show \cref{thm:cs-width}, present an alternative proof of Courcelle's theorems, and give a practical algorithm for \textsc{$q$-Coloring} on graphs of bounded \local twin-width (i.e., bounded rank-width).
In~\cref{sec:orientedtww}, we prove \cref{thm:otww-tww} and use it to bound the twin-width of $K_t$-minor free graphs.
In~\cref{sec:partial}, we present some FO model-checking algorithms using partial contraction sequences to classes of bounded degree and bounded expansion.
In~\cref{sec:spanningtww}, we show the equivalence between proper minor-closed and bounded spanning twin-width, for monotone classes.

\section{Preliminaries}\label{sec:prelim}

We denote by $[i,j]$ the set of integers $\{i,i+1,\ldots,j-1,j\}$, and $[k]$ is a short-hand for $[1,k]$.
We use the standard graph-theoretic definitions and notations.
Given a graph $G$, its vertex set is denoted by $V(G)$ and its edge set by $E(G)$.
Given a subset $S$ of $V(G)$, $G[S]$ denotes the subgraph of $G$ induced by $S$.

Two parameters $w$ and $w'$ defined on graphs (or more generally on matrices) are \emph{\equi} if there exists a function $f$ such that for every graph (or matrix) $G$, we have $w(G)\leq f(w'(G))$ and $w'(G)\leq f(w(G))$.
Among classical pairs of \equi parameters, let us mention branch-width and tree-width, or rank-width and clique-width.
When speaking of a \emph{class} $\cal C$ of graphs, we mean closed under isomorphism.
When $\cal C$ is furthermore closed under taking subgraphs, we speak of a \emph{monotone class} and when $\cal C$ is closed under induced subgraphs, we speak of a \emph{hereditary class}.
A class $\cal C$ of graphs is \emph{sparse} if there is a $t$ for which no graph in $\cal C$ contains the complete bipartite graph $K_{t,t}$ as a subgraph (i.e., not necessarily induced).

To avoid unnecessary notations, a~\emph{connected component} of a graph may either refer to a (maximal connected) vertex set or its induced subgraph.
More often our connected components will be vertex sets, contrary to the standard definition.

\subsection{Branch decompositions}

A \emph{branch decomposition} of a graph $G$ is a ternary tree $T$ in which the leaves are in one-to-one correspondence with $V(G)$.
In particular every edge $e$ of $T$ corresponds to the vertex bipartition $B_e$ of $G$ defined by the two sets of leaves of the connected components of $T \setminus e$.
Given now any function $m$ from graph bipartitions to the non-negative integers, we obtain a parameter $b_m(G)$, which is the minimum, over all branch-decompositions $T$ of $G$ of the maximum of $m(B_e)$ over all edges $e$ of $T$.

For instance, when $m(B_e)$ is the rank (computed in $\mathbb{F}_2$) of the adjacency matrix of the bipartite subgraph of $G$ spanned by the vertex bipartition $B_e$, the parameter $b_m$ is the \emph{rank-width} of $G$.
We obtain the \emph{linear rank-width} of $G$ by keeping the same rank function $m$ but insisting that branch-decompositions $T$ are ternary trees where the internal vertices form a path. 
A class has bounded tree-width (resp. bounded path-width) if and only if it is sparse (i.e., $K_{t,t}$-free) and has bounded rank-width (resp. bounded linear rank-width)~\cite{Gurski00}.
Hence tree-width and path-width can be seen as the sparse restrictions of rank-width and linear rank-width.

Let us now introduce a parameter, which is equivalent to rank-width.
The \emph{boolean-width} $\text{bool}(X,Y)$ of a bipartition $(X,Y)$ of $V(G)$ is the logarithm in base 2 of the number of subsets of~$Y$ (equivalently, of~$X$) that are the neighborhood of some subset of~$X$ (resp. of~$Y$).
The \emph{boolean-width} of a graph $G$ is the parameter $b_{\text{bool}}(G)$.
Boolean-width is \equi to rank-width. 
When we only consider branch decompositions in which internal vertices form a path, we speak of \emph{linear boolean-width}, which is similarly equivalent to \emph{linear rank-width}. 

We will use the next observation implicitly, which is proved in~\cite{BUIXUAN20115187}.
\begin{observation}\label{obs:moduleboolean}
  Let $(X,Y)$ be a vertex bipartition of $G$, and let $q$ be the maximum number of vertices in $X$, which have distinct neighborhoods in the bipartite graph $G(X,Y)$ (i.e., ignoring edges within $X$ and within $Y$).
  Then, the boolean-width of $(X,Y)$ is at least $\log_2{q}$ and at most $q$. 
\end{observation}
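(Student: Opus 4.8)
The plan is to argue directly from the definition of the boolean-width of a single bipartition. For $S\subseteq X$ write $N(S)=\bigcup_{v\in S}(N_G(v)\cap Y)$ for the neighborhood of $S$ inside the bipartite graph $G(X,Y)$, and set $\mathcal{N}=\{N(S):S\subseteq X\}$, so that by definition $\text{bool}(X,Y)=\log_2|\mathcal{N}|$. The entire statement then amounts to sandwiching $|\mathcal{N}|$ between $q$ and $2^q$, after which one simply takes $\log_2$.

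First I would reinterpret $q$: it equals the number of distinct sets in the family $\{N_G(x)\cap Y:x\in X\}$. Indeed, choosing one vertex per distinct neighborhood exhibits that many vertices with pairwise distinct neighborhoods, and by pigeonhole one cannot exhibit more. Denote these distinct neighborhoods by $A_1,\dots,A_q$. The lower bound $\text{bool}(X,Y)\ge\log_2 q$ is then immediate: each $A_i$ equals $N(\{x\})$ for some $x\in X$, hence $A_i\in\mathcal{N}$, and the $A_i$ are pairwise distinct, so $|\mathcal{N}|\ge q$. For the upper bound $\text{bool}(X,Y)\le q$, observe that for any $S\subseteq X$ and any $x\in S$ the set $N_G(x)\cap Y$ is one of $A_1,\dots,A_q$; therefore $N(S)=\bigcup_{j\in J}A_j$ for some $J\subseteq[q]$, i.e.\ $\mathcal{N}$ is contained in the family of unions of subfamilies of $\{A_1,\dots,A_q\}$, which has at most $2^q$ members. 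Combining, $q\le|\mathcal{N}|\le 2^q$, and taking logarithms yields $\log_2 q\le\text{bool}(X,Y)\le q$.

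I do not expect a genuine obstacle here; both inequalities are one line once the setup is fixed. The only point requiring care is the bookkeeping around the definitions: that the boolean-width of $(X,Y)$ counts the neighborhoods \emph{in $Y$ of subsets of $X$} (one may use either side, and we fix one), and that ``distinct neighborhoods'' in the statement means distinct within the bipartite graph $G(X,Y)$, that is, ignoring edges inside $X$ and inside $Y$. With that reading, the argument above is complete and matches the reference~\cite{BUIXUAN20115187}.
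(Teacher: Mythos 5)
Your argument is correct, and since the paper cites \cite{BUIXUAN20115187} for this observation rather than proving it, there is no in-paper proof to compare against. Your sandwich $q\le|\mathcal N|\le 2^q$ — lower bound from singletons, upper bound from the fact that every $N(S)$ is a union of the $q$ distinct single-vertex neighborhoods — is exactly the expected one-line justification and is sound (including the edge cases: $N(\emptyset)=\emptyset$ is itself an empty union, so it does not break the $2^q$ bound).
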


It is often convenient to root a branch-decomposition $T$ of $G$ at some arbitrary non leaf node.
Given then an internal node $v$ of $T$, the leaves of the subtree of $T$ rooted at $v$ is denoted by $A_v$, and the bipartition $(A_v,V(G)\setminus A_v)$ of $G$ is said to be \emph{associated with $v$}.
Note that there is a unique bipartition associated with an internal node of $T$. 

Branch-decompositions are a very popular concept and can be adapted to partitions of the edge set instead of the vertex set (leading to the branch-width), or to the ground set of a matroid using as parameter its connectivity function (yielding matroid branch-width).
One very appealing feature is that parameters defined by branch-decompositions admit dual parameters via tangles~\cite{OumS06}.
However, the main limitation is that planar graphs have unbounded value for parameters defined with branch-decompositions.
Indeed every balanced bipartition of a planar graph is complex, and thus return a large value for the classical parameters $m$, which in turn gives an unbounded $b_m$.
The main way to overcome the ``planar barrier'' is to measure the complexity of vertex partitions instead of vertex bipartitions.

\subsection{Partition sequences}\label{subsec:part-seq}

A \emph{partition sequence} of a graph $G$ is a sequence ${\mathcal S}=\mathcal P_n, \ldots, \mathcal P_1$ of partitions of $V(G)$ where $\mathcal P_n:=\{\{v\}:v\in V(G)\}$ is the partition of $V(G)$ into singletons, $\mathcal P_1=\{V(G)\}$ is the whole set, and each $\mathcal P_i$ is obtained by merging two parts of $\mathcal P_{i+1}$.
In particular each $\mathcal P_i$ consists of $i$ subsets of $V(G)$.
A \emph{vertex ordering compatible with ${\mathcal S}$} is any total order $\leq$ on $V(G)$ (often also seen as a permutation $\sigma$) such that for every $\mathcal P_i$ and every part $X$ in $\mathcal P_i$, the elements of $X$ are consecutive along $\leq$.

A function $w$ from vertex-partitioned graphs into the non-negative integers is called a \emph{width}.
The \emph{partition-width associated to $w$} of a graph $G$ is the minimum integer $t$ such that there exists a partition sequence $\mathcal P_n, \ldots, \mathcal P_1$ of $G$ such that $w(\mathcal{P}_i) \leqslant t$ for every $i \in [n]$.
We denote it by $p_w(G)$. 

We say that two disjoint subsets of vertices $X, Y$ of a graph $G$ are \emph{homogeneous} if there are all edges or no edge between them.
More generally, if $G$ is a binary structure, we insist that any two ordered pairs $(x,y)$ and $(x',y')$ with $x,x'\in X$ and $y,y' \in Y$ induce the same structure.
By extension, $X$ is homogeneous with $X$ when $X$ is a singleton, but is not homogeneous with $X$ when $X$ has at least two elements.
Indeed if $x,x'\in X$, the isomorphism type of $x,x$ and the one of $x,x'$ are not the same. 

Given a graph $G$ and a partition $\mathcal P_i$, we consider an auxiliary trigraph $G_i$, called \emph{quotient trigraph} and denoted by $G/\mathcal P_i$, with vertices the parts of $\mathcal P_i$, \emph{red edges} all pairs of parts $X,Y$, which are not homogeneous, and \emph{black edges} all pairs of parts $X,Y$ for which $X,Y$ is a complete bipartite graph in $G$.
The \emph{red graph} $\mathcal R(G_i)$ (resp.~\emph{black graph}) of $G_i$ has vertex set $V(G_i)$, and edge set its red edge set $R(G_i)$ (resp.~black edge set $E(G_i)$).
A~\emph{red component} of a trigraph is a connected component in its red graph $\mathcal R(G_i)$, while its \emph{(maximum) red degree} is the (maximum) degree of $\mathcal R(G_i)$.   

By our convention, we add a loop $(X,X)$ to every part $X$, which is not a singleton.
Loops count as degree 1.
Note that a vertex $u$ of $G_i$ corresponds to a subset of vertices of $G$, which we denote by $u(G)$.
It will also be convenient to speak of the \emph{total degree} of $u$ in $G_i$, which is the total number of red and black edges incident to $u$.
The graph obtained from $G_i$ by forgetting the colors (i.e., the graph that is the union of the red and black edges) is called the \emph{total graph} of $G_i$.


A \emph{module} in a graph $G$ is a subset of vertices $X$ such that for every $x,x'\in X$ and $y \in V(G)\setminus X$, both $xy$ and $x'y$ induce an edge or both induce a non-edge.
Here we do not impose that $X$ is a \emph{maximal} subset satisfying the latter property. 
More generally, if $G$ is a binary structure, we insist that the ordered pairs $xy$ and $x'y$ induce the same structure.
When $X,Y$ are two disjoint subsets of vertices, we say that \emph{$Y$ is homogeneous to $X$} if $Y$ is a module in $G[X\cup Y]$.
We obtain a directed version of $G_i$ as follows: The \emph{directed trigraph} $D_i$ is obtained from $G_i$ by orienting red edges $XY$ as $X\rightarrow Y$ whenever $X$ is not homogeneous to $Y$ and keeping black edges unchanged.
Note that some red edges of $D_i$ can be directed in both ways.

Given $G$ and a partition $\mathcal P_i$, we define four possible widths: $w_o$ is the maximum red out-degree of $D_i$, $w_d$ is the maximum red degree of $G_i$, $w_c$ is the maximum number of vertices in a red connected component of $G_i$, and $w_t$ is the total number of red edges in $G_i$.
Note that $w_o \leq w_d \leq w_c \leq w_t$.
We now obtain the associated width-parameters:
\begin{compactitem}
	\item The \emph{oriented twin-width} of $G$ as the partition-width $\otww(G)$ associated to $w_o$. 
	\item The \emph{twin-width} of $G$ as the partition-width $\tww(G)$ associated to $w_d$. 
	\item The \emph{\local twin-width} of $G$ as the partition-width $\ltww(G)$ associated to $w_c$. 
	\item The \emph{\total twin-width} of $G$ as the partition-width $\ttww(G)$ associated to $w_t$. 
\end{compactitem}

Observe that $\otww(G) \leq \tww(G) \leq \ltww(G) \leq \ttww(G)$.
Note that there is a slight variation with our original definition of twin-width since loops add one to the degree.
For the sake of consistency, we will actually drop the red loops for oriented twin-width and twin-width.
We will see in~\cref{sec:classical,sec:orientedtww} that we did not create new parameters:
\begin{theorem*} The following parameters are \equi:
	\begin{compactitem}
		\item Twin-width and oriented twin-width.
		\item \Local twin-width and rank-width.
		\item \Total twin-width and linear rank-width.
	\end{compactitem}
\end{theorem*}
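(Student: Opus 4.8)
The plan is to handle the three equivalences in turn, exploiting that one inequality in each is already recorded in the chain $\otww(G) \le \tww(G) \le \ltww(G) \le \ttww(G)$. So, writing $\mathrm{rw}$ and $\mathrm{lrw}$ for rank-width and linear rank-width, it remains to produce functions with $\tww(G) \le f_1(\otww(G))$, $\mathrm{rw}(G) \le f_2(\ltww(G))$, $\ltww(G) \le f_3(\mathrm{rw}(G))$, $\mathrm{lrw}(G) \le f_4(\ttww(G))$ and $\ttww(G) \le f_5(\mathrm{lrw}(G))$. Throughout I would work with boolean-width in place of rank-width, and linear boolean-width in place of linear rank-width, which is legitimate since these are pairwise \equi; the payoff is \cref{obs:moduleboolean}, which lets me translate freely (up to an exponential) between ``the cut $(X,Y)$ has boolean-width $\le b$'' and ``the vertices of $X$ realise at most $2^b$ distinct neighbourhoods on $Y$''. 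The other recurring device is the dictionary between a partition sequence $\mathcal P_n,\dots,\mathcal P_1$ and its \emph{merge tree} $T$: the parts ever appearing form a laminar family, so $T$ is a binary tree with leaf set $V(G)$; unrooting it yields a branch-decomposition (a caterpillar merge tree yields a linear one), and conversely any rooted branch-decomposition can serve as a merge tree, with each antichain sweeping it from the leaves upward giving a $\mathcal P_i$.

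For $\tww \le f_1(\otww)$ I would essentially lift the argument of \cite{twin-width1}. Recall that $G$ has bounded twin-width iff some vertex ordering of $G$ yields an adjacency matrix with no $k$-mixed minor, for a constant $k$ depending only on the twin-width; the ``only if'' direction is the Marcus--Tardos-type core of \cite{twin-width1}, and I would use it as a black box. Given a partition sequence with $w_o(\mathcal P_i) \le d$ throughout, fix a vertex ordering $\sigma$ compatible with it (one exists by laminarity), and show that the adjacency matrix in order $\sigma$ has no $(2d+3)$-mixed minor: from a hypothetical mixed minor with consecutive row-intervals $R_1 \prec \dots \prec R_k$ and column-intervals $C_1 \prec \dots \prec C_k$, one walks down the sequence to a step $i$ at which a single part $X$ still contains a whole interval $R_a$ while $d+1$ further parts each contain a whole interval $C_j$; mixedness of the cell $R_a \times C_j$ then says $X$ has two vertices with different neighbourhoods inside that $C_j$-part, hence $X$ is not homogeneous to it, hence $X$ has red out-degree $> d$ in $D_i$ — a contradiction. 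Then $\tww(G) \le f_1(d)$ by \cite{twin-width1}. The content here is conceptual (isolating $\otww$ as the parameter implicit in that proof), and I expect the indirection — no $O(1)$-sequence is produced — to be intrinsic to going through Marcus--Tardos.

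For $\mathrm{rw} \equiv \ltww$: in the direction $\ltww \le f_3(\mathrm{rw})$, start from a rooted branch-decomposition $T$ of boolean-width $\le b$, use it as a merge tree, and bound $w_c(\mathcal P_i)$. Fix a red component $\mathcal K$ of $G/\mathcal P_i$, let $u$ be the lowest common ancestor in $T$ of the nodes furnishing its parts, so all parts of $\mathcal K$ lie inside $A_u$; walking along a spanning tree of $\mathcal K$ and invoking at each red edge the boolean-width bound of the associated cut, a component $\mathcal K$ with more than $2^{O(b)}$ parts would exhibit a cut of $T$ with more than $2^b$ distinct neighbourhood traces, contradicting \cref{obs:moduleboolean}. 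Conversely, for $\mathrm{rw} \le f_2(\ltww)$, read the merge tree of a partition sequence of \local twin-width $\le d$ as a branch-decomposition $T$; for a node $v$ born at step $i$, every part of $\mathcal P_i$ outside the red component of $A_v$ is homogeneous to $A_v$, so the trace on $V(G)\setminus A_v$ of any $S \subseteq A_v$ is determined by its traces on the at most $d-1$ red neighbours of $A_v$ together with whether $S=\emptyset$; iterating this localisation down through the red component and counting realisable traces bounds the boolean-width of the cut at $v$ by some $h(d)$, hence bounds $\mathrm{rw}(G)$.

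The linear statements $\mathrm{lrw} \equiv \ttww$ run in parallel, now with caterpillars on the branch-decomposition side — equivalently, a vertex order $v_1,\dots,v_n$ all of whose prefix cuts have boolean-width $\le b$ — and caterpillar merge trees on the partition side, so that bounded red-component size is replaced by bounded \emph{total} red-edge count $w_t$. Going from $w_t \le d$ to bounded linear boolean-width is as above, now summing traces over the $\le d$ red edges globally rather than per component. The reverse passage is, I expect, the genuine obstacle: processing $v_1,\dots,v_n$ left to right, one must absorb each new interval into the growing prefix \emph{and} promptly merge together the parts to the right that have become homogeneity-equivalent, so that the set of parts not homogeneous to the prefix — and thus the total number of red edges — stays within $2^{O(b)}$ rather than drifting up to $n$. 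I would flag this construction (and, secondarily, the trace-counting in the $\ltww \to \mathrm{rw}$ direction) as the hard part; once the merge-tree/branch-decomposition dictionary and \cref{obs:moduleboolean} are set up, the remaining steps are bookkeeping.
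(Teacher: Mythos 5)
Your equivalence between oriented twin-width and twin-width is essentially the paper's argument (a detour through mixed minors and \cref{thm:mixedmatrixgraph}), and it is sound in spirit, though the phrasing has a slip: at the critical step you cannot expect $d+1$ parts each to \emph{contain a whole} column interval $C_j$; in a symmetric contraction/division sequence, the step where the first mixed-minor block becomes contained has at most two contained blocks (one row, one column). The paper instead fixes one contained column block $C_j$ inside a part $Z$ and observes that the $d+1$ parts containing a single row $r_i\in R_i$ for odd $i$ are pairwise distinct up to one collision, which is what forces large red out-degree at $Z$. With that repair the argument is the same as the paper's.

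The genuine gap is in the rank-width equivalence, and it breaks in both directions. Your plan identifies the branch-decomposition with the merge tree of the contraction sequence, but these are \emph{not} interchangeable, and the half-graph $H_n$ (on $v_1,\dots,v_n,w_1,\dots,w_n$ with $v_iw_j\in E$ iff $i\le j$) kills both directions concretely. For $\mathrm{rw}\to\ltww$: take the caterpillar branch-decomposition with leaf order $v_1,w_1,v_2,w_2,\dots$; every cut of this decomposition has at most two distinct neighbourhood traces, so its boolean-width is $1$. But using it as a merge tree, the very first merge $\{v_1,w_1\}$ is red-adjacent to \emph{all} of $w_2,\dots,w_n$, so the red component already has $n$ parts. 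Your claim that a large red component ``would exhibit a cut of $T$ with more than $2^b$ distinct neighbourhood traces'' is false here, because all those cuts still have $\le 2$ traces. For $\ltww\to\mathrm{rw}$: the half-graph has a $2$-component partition sequence ($v_1+v_2$, then $w_1+w_2$, then $\{v_1,v_2\}+v_3$, and so on, interleaving), but its merge tree contains the node with subtree $\{v_1,\dots,v_n\}$, whose associated cut has $n$ distinct neighbourhoods and hence boolean-width $\ge\log_2 n$ by \cref{obs:moduleboolean}. So the ``iterate the localisation down the red component'' step cannot possibly yield a bound $h(d)$ on all merge-tree cuts. The paper avoids both traps: going from a branch-decomposition of boolean-width $d$, it does \emph{not} use $T$ as a merge tree; it maintains an auxiliary branch-decomposition $T_j$ alongside the trigraphs while preserving an invariant ($\clubsuit$) ensuring that red edges stay confined inside subtrees with at most $2^{d+1}$ leaves, and at each step it picks \emph{any} two leaves of a small subtree that agree on the outside (found by pigeonhole via \cref{obs:moduleboolean}); these contractions do not respect $T$'s sibling structure. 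Conversely, from a component-bounded partition sequence, the paper builds a new tree whose internal nodes correspond to \emph{red components} (not individual parts), so that every cut is of the form $(\bigcup_{j\in I}C_j,\,\cdot\,)$ with each $C_j$ a red component, hence a union of $\le d$ modules for that cut — which is what keeps the boolean-width bounded.

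The linear statements inherit the same problem: the paper again runs the $(\clubsuit)$-maintenance argument (now observing that the linear structure forces a unique small ``live'' subtree, so the number of red edges, not just component size, is bounded), and for $\ttww\to\mathrm{lrw}$ it chooses the linear order from the sets $V_i$ of vertices in non-singleton parts, not from the merge tree. You correctly flag $\mathrm{lrw}\to\ttww$ as the hard step, but the obstacle is the same merge-tree confusion, not a new one. In short: you have reduced to the right statements and you have the correct functional chain, but the central device of the paper's proofs — using a branch-decomposition to \emph{guide} but not \emph{dictate} contractions in one direction, and using \emph{red components} rather than parts as tree nodes in the other — is missing, and the naive identification you propose is refuted by the half-graph.
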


It is somewhat comforting to be on charted territory since the choices of the widths $w_o,  w_d, w_c, w_t$ are natural.
The equivalence between oriented twin-width and twin-width is very handy, as we will see in~\cref{sec:orientedtww}. 

\subsection{The matrix viewpoint}

Given an ${n\times m}$ matrix $M$, a \emph{row-partition} (resp.~\emph{column-partition}) is a partition of the rows (resp.~columns) of~$M$.
A \emph{$(k,\ell)$-partition} (or simply \emph{partition}) of a matrix $M$ is a pair $({\mathcal R}=\{R_1,\dots ,R_k\}, {\mathcal C}=\{C_1,\dots ,C_\ell\})$ where $\mathcal R$ is a row-partition and $\mathcal C$ is a column-partition.
A \emph{contraction} or \emph{merge} of a partition ${\mathcal P}$ consists of replacing two parts in $\mathcal P$ by their union.
A \emph{contraction} of a partition $({\mathcal R},{\mathcal C})$ of a matrix $M$ is obtained by performing one contraction in the row-partition ${\mathcal R}$ or in the column-partition ${\mathcal C}$.

We distinguish two extreme partitions of an $n \times m$ matrix $M$: the \emph{finest partition} where $({\mathcal R},{\mathcal C})$ have size $n$ and $m$, respectively, and the \emph{coarsest partition} where they both have size one.
A \emph{contraction sequence} of an $n \times m$ matrix $M$ is a sequence of partitions $({\mathcal R}^1,{\mathcal C}^1),\dots ,({\mathcal R}^{n+m-1},{\mathcal C}^{n+m-1})$ where
\begin{compactitem}
	\item $({\mathcal R}^1,{\mathcal C}^1)$ is the finest partition,
	\item $({\mathcal R}^{n+m-1},{\mathcal C}^{n+m-1})$ is the coarsest partition, and
	\item for every $i \in [n+m-2]$, $({\mathcal R}^{i+1},{\mathcal C}^{i+1})$ is a contraction of $({\mathcal R}^i,{\mathcal C}^i)$. 
\end{compactitem}

Given a subset $R$ of rows and a subset $C$ of columns in a matrix $M$, the \emph{zone $R \cap C$} denotes the submatrix of all entries of $M$ at the intersection between a row of $R$ and a column of $C$.
A \emph{zone} of a partition pair $({\mathcal R},{\mathcal C})=(\{R_1, \ldots, R_k\},\{C_1, \ldots, C_\ell\})$ is any $R_i \cap C_j$ for $i \in [k]$ and $j \in [\ell]$.
A zone is \emph{constant} if all its entries are identical.
The \emph{error value} of $R_i$ is the number of non-constant zones among all zones in $\{R_i \cap C_1, \ldots, R_i\cap C_\ell\}$.
We adopt a similar definition for the \emph{error value} of $C_j$.
The \emph{error value} of $({\mathcal R},{\mathcal C})$ is the maximum error value taken over all $R_i$ and~$C_j$. The \emph{twin-width} of a matrix $M$ is the minimum $t$ for which there exists a contraction sequence of $M$ consisting of partitions with error value at most $t$. 
In particular, the twin-width of matrices over finite alphabets extends the scope of twin-width from graphs to arbitrary binary structures over finite signatures.
One may define the \emph{twin-width of a binary structure} as the twin-width of the matrix canonically encoding that structure (by placing at row $x$ column $y$ a letter encoding the isomorphism type of $(x,y)$, i.e., the atomic formulas satisfied by $(x,y)$). 

In a contraction sequence of a matrix $M$, one can always reorder the rows and the columns of $M$ in such a way that all parts of all partitions in the contraction sequence consist of consecutive rows or consecutive columns.
To mark this distinction, a \emph{row-division} is a row-partition where every part consists of consecutive rows; with the analogous definition for \emph{column-division}.
A \emph{$(k,\ell)$-division} (or simply \emph{division}) of a matrix $M$ is a pair $({\mathcal R},{\mathcal C})$ of a row-division and a column-division with respectively $k$ and $\ell$ parts.
A \emph{division sequence} is a contraction sequence in which all partitions are divisions.

A matrix $M=(m_{i,j})$ is \emph{vertical} (resp. \emph{horizontal}) if $m_{i,j}=m_{i+1,j}$ (resp. $m_{i,j}=m_{i,j+1}$) for all $i,j$.
Observe that a matrix that is both vertical and horizontal is constant.
We say that $M$ is \emph{mixed} if it is neither vertical nor horizontal.
A crucial remark is that a matrix is mixed if and only if it contains a \emph{corner}, i.e., a contiguous 2-by-2 mixed submatrix.
A \emph{$t$-mixed minor}  in $M$ is a division $({\mathcal R},{\mathcal C})=(\{R_1,\dots ,R_t\},\{C_1,\dots ,C_t\})$ such that every zone $R_i\cap C_j$ is mixed (hence contains a corner).
A matrix without $t$-mixed minor is \emph{$t$-mixed free}. The minimum $t$ for which one can reorder the column and the rows of $M$ to form a $t$-mixed free matrix is called the \emph{mixed value} of $M$. 

\begin{theorem}[\cite{twin-width1}]\label{thm:gridtheorem}
	Twin-width and mixed value are \equi for matrices.
\end{theorem}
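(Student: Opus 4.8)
The plan is to prove the two directions of the functional equivalence separately, the second being the genuinely difficult one. Write $\text{mv}(M)$ for the mixed value of $M$.

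\emph{Direction 1: a cheap contraction sequence yields small mixed value.} Suppose $\tww(M)=d$ and fix a contraction sequence of $M$ of error value at most $d$. By the reordering observation recalled above, after permuting the rows and columns of $M$ I may assume this sequence is a division sequence $\mathcal D^1,\dots,\mathcal D^N$ with $\mathcal D^1$ the finest and $\mathcal D^N$ the coarsest division, each of error value at most $d$. I will show that in this order $M$ is $O(d)$-mixed free, which gives the first inequality. Assume otherwise and fix a $t$-mixed minor $(\{R_1,\dots,R_t\},\{C_1,\dots,C_t\})$ with $t$ of order $d$; each $R_i$ and each $C_j$ is an interval of the current order, so the minor determines $t-1$ row seams and $t-1$ column seams. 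Every such seam is a boundary in $\mathcal D^1$ and none is a boundary in $\mathcal D^N$, so there is a first step $j$ destroying a seam, say the column seam between $C_a$ and $C_{a+1}$. Minimality of $j$ forces that in $\mathcal D^{j-1}$ all seams are still present, i.e.\ $\mathcal R^{j}=\mathcal R^{j-1}$ refines $\{R_i\}_i$ and $\mathcal C^{j-1}$ refines $\{C_j\}_j$, so the new column part $C$ of $\mathcal D^j$ is the union of a part of $\mathcal C^{j-1}$ inside $C_a$ and a part inside $C_{a+1}$. I then lower-bound the error value of $C$: for each $i\in[t]$ the zone $R_i\cap(C_a\cup C_{a+1})$ is mixed, hence contains a corner; using that a corner is inherited by every contiguous supermatrix, together with the refinement of $\mathcal R^{j}$, I locate for each $i$ a part $R\subseteq R_i$ of $\mathcal R^{j}$ with $C\cap R$ non-constant, and these $t$ witnesses are distinct since the $R_i$ are pairwise disjoint. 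Thus $\mathcal D^j$ has error value $t-O(1)>d$, a contradiction. The one delicate point is to guarantee that $C$ itself — not merely $C_a\cup C_{a+1}$ — witnesses the non-constancy; I would handle this by a suitable minimal choice of the $t$-mixed minor (aligning its column seams with parts of $\mathcal C^{j-1}$), absorbing the loss into the $O(d)$.

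\emph{Direction 2: small mixed value yields a cheap contraction sequence.} Here I may assume $M$ is $t$-mixed free in its order, and I must produce a division sequence of error value bounded by a function of $t$. This is the substantial direction and is exactly the grid-like argument of~\cite{twin-width1}, whose engine is the Marcus--Tardos theorem (the resolved F\"uredi--Hajnal conjecture). The strategy I would follow is the one used there: pass to a balanced coarse division of $M$ into $k$ row intervals and $k$ column intervals for a suitable $k=k(t)$; invoke Marcus--Tardos to bound the number of ``irregular'' zones of this coarse division by $O_t(k)$; deduce that some row interval (or column interval) meets only $O_t(1)$ irregular zones; contract that interval in a bounded number of steps without ever exceeding error value $O_t(1)$; and recurse on the resulting smaller matrix, which is again $t$-mixed free. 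Iterating assembles the entire division sequence with error value $f(t)$, i.e.\ $\tww(M)\le f(\text{mv}(M))$.

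\emph{Main obstacle.} The crux is Direction 2, and inside it the quantitative use of Marcus--Tardos: one must first isolate the right notion of ``irregular zone'' so that $t$-mixed-freeness of $M$ genuinely translates into a \emph{linear} (in $k$) bound on the number of irregular zones of a $k\times k$ coarsening, and then organize the contractions so that merging the ``regular'' part never blows up the error value. Both the definition that makes Marcus--Tardos applicable and the bookkeeping of the recursion are where the real difficulty lies; this is carried out in full in~\cite{twin-width1}, and combined with Direction 1 it gives the stated functional equivalence.
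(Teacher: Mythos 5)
This theorem is cited from~\cite{twin-width1} and not re-proved in this paper; the closest in-paper argument is the ``twin-width bounded implies mixed value bounded'' direction reproduced inside the proof of \cref{thm:oriented}. Your Direction~1 has a genuine gap. You locate the first step $j$ of the division sequence that destroys a seam of the $t$-mixed minor, producing a new column part $C = C'_1 \cup C'_2$ with $C'_1 \subseteq C_a$ and $C'_2 \subseteq C_{a+1}$, and then argue that for each $i$ some part $R \subseteq R_i$ has $C \cap R$ non-constant, invoking ``a corner is inherited by every contiguous supermatrix''. But the inclusion runs the wrong way: $C \cap R_i$ is a contiguous \emph{sub}matrix of the mixed zone $(C_a \cup C_{a+1}) \cap R_i$, and corners do not descend to submatrices. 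In fact $C'_1$ and $C'_2$ may each be a single column, in which case $C$ is a two-column strip that can be constant against every row part, so the error value at column $C$ in $\mathcal D^j$ can be $0$. Your suggested repair --- aligning the minor's column seams with the parts of $\mathcal C^{j-1}$ --- is circular, since $j$ is chosen as a function of the minor.

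The clean fix, used in~\cite{twin-width1} and repeated in this paper's proof of \cref{thm:oriented}, is to stop at the first index $\ell$ at which an \emph{entire} part of the mixed minor, say $C_j$, becomes contained in a part $Q$ of the partition. Every mixed zone $R_i \cap C_j$ has a non-constant row, so one picks $r_i \in R_i$ with $M[r_i, C_j]$ non-constant; since $C_j \subseteq Q$, the zone between the part holding $r_i$ and $Q$ is automatically non-constant, with no alignment problem. Because a single contraction merges only two parts, at most one $R_h$ becomes contained at step $\ell$, so among $r_1, r_3, \dots, r_{2d+1}$ at least $d$ lie in pairwise distinct parts, each non-homogeneous to $Q$; a $(2d+2)$-mixed minor therefore forces error value (and even red out-degree of $Q$) at least $d$ in every contraction sequence. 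Your Direction~2 is a correct high-level outline of the hard direction --- Marcus--Tardos applied to a suitable notion of irregular zone, plus a recursion --- and matches what~\cite{twin-width1} does; the present paper does not re-prove that part.
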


Given a graph $G$ and a permutation $\sigma$ of its vertex set, we denote by $Adj_{\sigma}(G)$ the adjacency matrix of $G$ in which the columns and the rows are ordered according to $\sigma$.
As usual, given two vertices $u,v$, the entry $Adj_{\sigma}(G)_{u,v}$ is equal to 1 if $uv$ is an edge and $0$ otherwise.
By extension, we say that the \emph{mixed value} of a graph $G$ is the minimum $t$ for which $Adj_{\sigma}(G)$ is $t$-mixed free, taken over all permutations $\sigma$. The link between mixed value and twin-width for graphs was proved in~\cite{twin-width1}:

\begin{theorem}[\cite{twin-width1}]\label{thm:mixedmatrixgraph}
	Twin-width and mixed value are \equi for graphs.
\end{theorem}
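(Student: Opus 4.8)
The plan is to derive the statement from its matrix counterpart \cref{thm:gridtheorem}, exploiting the fact that $Adj_\sigma(G)$ is a \emph{symmetric} matrix whose rows and columns are both indexed by $V(G)$. So I need to move between contraction sequences of $G$ (equivalently, partition sequences of $V(G)$) and contraction sequences of the matrix $Adj_\sigma(G)$, the extra difficulty being that the latter may merge rows and columns independently whereas the former must keep them synchronized.

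\emph{Twin-width bounds mixed value.} Suppose $\tww(G)\le d$, fix a partition sequence $\mathcal P_n,\dots,\mathcal P_1$ of $G$ of red degree at most $d$, and let $\sigma$ be a vertex ordering compatible with it. I would first note that this partition sequence induces a contraction sequence of $Adj_\sigma(G)$ of error value $O(d)$: for each merge $\mathcal P_{i+1}\to\mathcal P_i$ of two parts $X,Y$, perform the corresponding row-merge and then the corresponding column-merge, so that all partitions appearing are the ``diagonal'' ones $(\mathcal P_i,\mathcal P_i)$ and the ``half-step'' ones $(\mathcal P_i,\mathcal P_{i+1})$. Since a zone $R\cap C$ of such a partition is non-constant only when the vertex sets $R$ and $C$ are non-homogeneous in $G$ (or one of $R,C$ contains the other and carries a stray edge, which a short case analysis shows costs only an additive constant), the red-degree bound gives error value $d+O(1)$ for every such partition; hence $Adj_\sigma(G)$ has matrix twin-width $O(d)$, and by \cref{thm:gridtheorem} one can reorder its rows and columns to be $O_d(1)$-mixed free. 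The only subtlety is that I really want $Adj_\sigma(G)$ itself --- with the single order $\sigma$ on both sides --- to be $O_d(1)$-mixed free; this is the point handled directly in~\cite{twin-width1}: a large mixed minor in $Adj_\sigma(G)$ would persist along the contraction sequence until the first time a part of some $\mathcal P_i$ straddles one of the minor's row- or column-divisions, at which moment counting corners in the zones that this part meets exhibits more than $d$ pairwise non-homogeneous parts, a contradiction. In any case the mixed value of $G$ is bounded in terms of $\tww(G)$.

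\emph{Mixed value bounds twin-width.} This is the direction I would develop in full. Let $t$ be the mixed value of $G$ and $\sigma$ a permutation with $Adj_\sigma(G)$ $t$-mixed free; then the matrix $Adj_\sigma(G)$ has mixed value at most $t$, so by \cref{thm:gridtheorem} it admits a contraction sequence $(\mathcal R^1,\mathcal C^1),\dots,(\mathcal R^N,\mathcal C^N)$ of error value at most $t':=g(t)$ for some fixed function $g$. Since $Adj_\sigma(G)$ is a $V(G)\times V(G)$ matrix, each $\mathcal R^j$ and each $\mathcal C^j$ is a partition of $V(G)$, and I would consider the common refinements $\mathcal Q^j:=\mathcal R^j\wedge\mathcal C^j$ (the partition into the nonempty sets $R\cap C$ with $R\in\mathcal R^j$, $C\in\mathcal C^j$). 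This sequence runs from the partition into singletons to $\{V(G)\}$ and is monotone --- a single row- or column-merge only merges parts of the common refinement --- so, after serialising each step into single two-part merges, it is a genuine partition sequence of $G$. The heart of the argument is that every $\mathcal Q^j$ has red degree at most $2t'^2$: if $P=R_a\cap C_b$ and $P'=R_{a'}\cap C_{b'}$ are non-homogeneous in $G$, then, read inside $Adj_\sigma(G)$, the submatrix on rows $P$ and columns $P'$ is non-constant, hence so is the larger zone $R_a\cap C_{b'}$, forcing $C_{b'}$ to be one of the at most $t'$ column-parts forming a non-constant zone with $R_a$; by the \emph{symmetry} of $Adj_\sigma(G)$ the transposed submatrix on rows $P'$ and columns $P$ is also non-constant, forcing $R_{a'}$ to be one of the at most $t'$ row-parts forming a non-constant zone with $C_b$; thus there are at most $t'^2$ possibilities for $P'=R_{a'}\cap C_{b'}$, and the factor $2$ absorbs the intermediate partitions created while serialising a multi-merge, where each such $P'$ may be split into at most two pieces. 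Therefore $\tww(G)\le 2g(t)^2$, and together with the previous direction this proves functional equivalence.

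\emph{Main obstacle.} The delicate point throughout is the mismatch between matrix contractions, which move rows and columns independently, and graph contractions, which must move them together: naively projecting a matrix contraction sequence onto a single partition of $V(G)$ risks an unbounded red degree, and it is only the \emph{two-sided} count above --- bounding both the column error and the row error, which is legitimate precisely because $Adj_\sigma(G)$ is symmetric --- that keeps the red degree quadratically bounded; symmetrically, in the first direction it is the fact that the ordering compatible with a $d$-sequence is \emph{itself} mixed free (not merely some independent reordering of its rows and columns) that needs the straddling argument of~\cite{twin-width1}. The remaining verifications (error value of the diagonal and half-step partitions, monotonicity of the common refinement, and the bookkeeping for the intermediate partitions) are routine.
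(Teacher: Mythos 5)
The paper states this theorem with a citation to \cite{twin-width1} and does not prove it in-text, so strictly speaking there is no in-paper proof to compare against. The one fragment that does appear is the contrapositive argument for the forward direction (bounded twin-width forces bounded mixed value), reproduced inside the proof of \cref{thm:oriented}: pick a compatible vertex ordering, suppose there is a large mixed minor, look at the first trigraph of the sequence whose partition swallows a whole part of that minor, and exhibit a vertex of large red (out-)degree there. Your sketch of that direction is this same argument, and you are right that your first attempt --- pushing the $d$-sequence to an error-$O(d)$ matrix contraction sequence and invoking \cref{thm:gridtheorem} --- cannot conclude by itself, precisely because the matrix mixed value allows the rows and the columns to be reordered independently while the graph mixed value insists on a single ordering $\sigma$.

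What you add is a full, self-contained argument for the reverse direction (mixed value bounds twin-width), which the paper defers entirely to \cite{twin-width1}. Your route --- apply \cref{thm:gridtheorem} to $Adj_\sigma(G)$ to get a matrix contraction sequence of error value $t'$, pass to the common refinements $\mathcal Q^j = \mathcal R^j \wedge \mathcal C^j$, serialise them into a genuine vertex-partition sequence of $G$, and bound the red degree by charging each non-homogeneous pair $R_a\cap C_b$, $R_{a'}\cap C_{b'}$ to the non-constant zone $R_a\cap C_{b'}$ and, by symmetry of the adjacency matrix, also to $R_{a'}\cap C_b$ --- is correct, and the two-sided count is exactly what makes it work: a one-sided count would leave one of the two indices of $P'$ unconstrained and give an unbounded red degree. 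One small slack worth noting: on an intermediate partition produced by serialising a multi-merge, the bound should read $2t'^2+1$ rather than $2t'^2$, since a part may also be red-adjacent to its own sibling inside the same coarse part; this of course does not affect functional equivalence.
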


\subsection{Bounded expansion and tree-depth covers}\label{subsec:graph-theory}

We recall some definitions from a paper by Plotkin, Rao, and Smith~\cite{Plotkin94} and from the sparsity program of Nešetřil and Ossona de Mendez~\cite{sparsity}.
One possible way of defining a minor of a graph $G$ is by a collection of disjoint sets $B_1, B_2, \ldots, B_h \subseteq V(G)$, called \emph{branch sets}, such that $G[B_i]$ is connected for all $i \in [h]$.
A~\emph{minor} of $G$ is then any graph $H$, say on vertex set~$[h]$, such that $ij \in E(H)$ implies that there is an edge in $G$ with one endpoint in $B_i$ and the other endpoint in $B_j$.
A \emph{depth-$r$ minor} (also called \emph{$r$-shallow minor}) of a graph $G$ is a minor $H$ of $G$ obtainable in such a way that each branch set induces in $G$ a subgraph with radius at most~$r$.
Let us denote by $\nabla_r(G)$ the set of all the depth-$r$ minors of $G$.
In particular $\nabla_r(G)$ is subgraph-closed.
Given a non-decreasing function $f: \mathbb N \to \mathbb N$, we say that a graph $G$ has \emph{expansion $f$} if for every $r \in \mathbb N$, every graph $H \in \nabla_r(G)$ has (maximum) average degree at most~$f(r)$.
A graph class $\mathcal C$ has \emph{expansion $f$} if all its graphs have expansion $f$, and $\mathcal C$ has \emph{bounded expansion} if it has expansion $f$ for some function $f$.
Note that saying that a graph has bounded expansion is meaningless (they all do, individually) but the fact that, for a specific function $f$, a single graph has expansion~$f$ is meaningful.

The \emph{tree-depth} of a graph $G$ is the minimum integer $k$ such that there is rooted forest $F$ of height $k$ on vertex set $V(G)$ with every edge of $G$ being in an ancestor-descendant relationship in $F$.
Bounded tree-depth is more restrictive than bounded tree-width, so in particular, bounded tree-depth graphs have bounded twin-width.
There is a very useful connection between bounded tree-depth and bounded expansion, in the form of \emph{low tree-depth covers}, or the related low tree-depth decompositions~\cite{Nesetril06,Nesetril15}.
A \emph{low tree-depth cover} with parameters $k, f$ of a graph $G$ is a family of $h = f(k)$ subsets $X_1, \ldots, X_h \subseteq V(G)$ such that, for every $i \in [h]$, $G[X_i]$ has tree-depth at most~$k$, and every subset of $V(G)$ of size at most $k$ is fully included in at least one $X_i$.
A graph class $\mathcal C$ has \emph{low tree-depth covers} if there is a function $f$ depending only on $\mathcal C$ such that for every $G \in \mathcal C$ and integer $k$, $G$ has a low tree-depth cover with parameters $k, f$.

\begin{theorem}[\cite{NesetrilM06,Nesetril15}]\label{thm:td-exp}
  Let $\mathcal C$ be a monotone graph class.
  Then $\mathcal C$ has low tree-depth covers if and only if $\mathcal C$ has bounded expansion.
  Furthermore if $\mathcal C$ has bounded expansion, then there is a function $f$ and an algorithm that given a graph $G \in \mathcal C$ and an integer $k$, outputs a low tree-depth cover of $G$ with parameters $k, f$ in linear time $O_{k,f(k)}(|V(G)|)$.
\end{theorem}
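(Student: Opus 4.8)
The plan is to route the proof through the classical notion of a \emph{$p$-centered coloring} of~$G$ (equivalently, a low tree-depth \emph{coloring} of order~$p$): a vertex coloring in which every connected subgraph either contains a vertex whose color is unique among its vertices, or else uses more than $p$ colors. Two elementary facts open the door. First, a graph of tree-depth at most~$k$ is $(k-1)$-degenerate --- order its vertices by their depth in a height-$k$ elimination forest and process that order bottom-up, so each vertex sees only its at most $k-1$ ancestors --- and the depth function of that forest is itself a $k$-centered coloring with at most $k$ colors, since in any connected subgraph the minimum-depth vertex is an ancestor of all the others and hence uniquely colored. Second, from a $p$-centered coloring of~$G$ with $N$ colors and any $k \le p$, taking as $X_1,\dots,X_h$ all unions of at most $k$ color classes yields $h \le N^k =: f(k)$ sets: every subset of $V(G)$ of size at most $k$ meets at most $k$ color classes, hence lies in some $X_i$, and each $G[X_i]$ has tree-depth at most~$k$ (it uses $\le k \le p$ colors, so by the centered property each connected piece has a uniquely colored vertex; delete it, charge one level, recurse with a strictly smaller palette). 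Thus ``low tree-depth covers'' and ``bounded $p$-centered coloring number for every~$p$'' are interchangeable, and the theorem reduces to the equivalence of the latter with bounded expansion, plus an effective construction.

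For the direction from bounded expansion to bounded centered colorings --- the technical heart --- I would use transitive--fraternal augmentations. The expansion bound at depth~$0$ bounds the average degree of every subgraph of~$G$, so $G$ is $O(1)$-degenerate; fix an acyclic orientation $\vec G$ of bounded out-degree. Iterate $p$ times the augmentation step: add a newly oriented edge between any two in-neighbors of a common vertex (fraternal), and an edge $u \to w$ whenever $u \to v \to w$ (transitive). The decisive lemma, and the only place the full strength of bounded expansion is used, is that the out-degree of the $p$-th augmentation stays bounded by some $g(p,f)$: edges created in round~$i$ correspond to short directed walks in~$G$, so the augmentation is sandwiched between $G$ and a shallow minor of~$G$ of controlled depth, whose edge count the expansion function bounds; one makes this precise by induction on~$p$. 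Greedily coloring the (still $O(g)$-degenerate) underlying graph of the $p$-th augmentation with $O(g)$ colors gives, read back on~$G$, a centered coloring of order roughly~$p$: a connected subgraph using few colors, viewed with its augmentation edges, has a source that became adjacent to all the others, forcing a repeated color unless the subgraph is small. Every step here is linear --- a degeneracy ordering and orientation in $O(n)$, each augmentation round in $O(n)$ since out-degrees stay bounded, the greedy coloring in $O(n)$, and enumerating the $\le N^k$ class-unions in $O_{k,f(k)}(n)$ --- which is the ``furthermore'' part.

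For the converse, that low tree-depth covers imply bounded expansion: fix~$r$ and use the cover with parameter $2r+2$. Given $H \in \nabla_r(G)$ with branch sets $B_v$ of radius $\le r$ and centers $c_v$, I would first replace $H$ by a depth-$r$ \emph{topological} minor of comparable density --- a standard reduction after passing to a subgraph of large minimum degree --- so that each edge of the pattern is certified by a path on at most $2r+1$ vertices of~$G$. Were the pattern too dense relative to $h = f(2r+2)$, pigeonholing its certifying paths into a single $X_i$ would exhibit inside $G[X_i]$ a long path or large subdivided star incompatible with its tree-depth being at most $2r+2$; hence the (maximum) average degree of $\nabla_r(G)$ is bounded, for every~$r$.

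I expect the main obstacle to be the augmentation lemma of the second step --- quantitatively controlling how the out-degree blows up over $p$ rounds purely in terms of the expansion function --- which is exactly the technical core of the Ne\v{s}et\v{r}il--Ossona de Mendez theory; the cover/coloring dictionary and the converse direction are comparatively routine bookkeeping.
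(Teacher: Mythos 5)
This statement is cited in the paper (to Ne\v{s}et\v{r}il and Ossona de Mendez) and is not proved there, so there is no internal proof to compare against; what you have written is a sketch of the classical argument from those references, and in outline it is correct. Your dictionary between low tree-depth covers and $p$-centered colorings (unions of $\leq k$ colour classes one way, the depth function of an elimination forest the other way, with the elementary tree-depth and degeneracy facts supporting it) is exactly the standard bridge, and the transitive--fraternal augmentation route to the hard direction, with the out-degree control lemma identified as the technical core, is the right shape; you are also correct that the quantitative augmentation lemma is where all the work is, and your sketch understandably leaves it to the cited sources.

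One local inaccuracy in the converse direction: a ``large subdivided star'' is \emph{not} incompatible with tree-depth at most $2r+2$ --- a star with every edge subdivided $\leq 2r$ times has tree-depth only about $2r+2$, so this branch of your dichotomy yields no contradiction. The ``long path'' branch does work (a graph of tree-depth $\leq k$ has no path on $2^k$ vertices, and a subdivision of a pattern with min degree $\geq 2^k$ contains such a path), but it only gives an exponential bound. The cleaner and quantitatively better route, which is also closer to what Ne\v{s}et\v{r}il--Ossona de Mendez actually do, is to use that tree-depth $\leq k$ implies tree-width $\leq k-1$, hence $K_{k+1}$-minor-freeness of $G[X_i]$; any topological pattern whose subdivision sits inside $G[X_i]$ is then itself $K_{k+1}$-minor-free and so has average degree $O(k\sqrt{\log k})$, and summing over the $h=f(k)$ cover classes bounds the density of the shallow (topological) minor. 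I would also make explicit that the reduction from shallow minors to shallow topological minors ``of comparable density'' is Dvo\v{r}\'ak's theorem and does shift the depth parameter --- the shallow topological minor you extract lives at some depth $r'$ depending on $r$ and the density, not at depth $r$ itself --- which is harmless here because you have covers for every $k$, but worth stating so the quantifiers are clearly on the right side.
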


\subsection{Finite model theory}

We recall some relevant background from finite model theory.
We denote by FO$_\tau$ and MSO$_\tau$ the set of first-order, respectively monadic second-order, formulas on signature $\tau$.
In first-order, every variable is interpreted as an element of the universe.
In monadic second-order, a first-order variable is interpreted as an element, while a second-order variable is interpreted as a subset of the universe.
We will mainly consider signatures consisting of unary and binary relation symbols only.
Typically the signature $\tau$ will be one of the following:
\begin{compactitem}
\item $\{E\}$, where $E$ is binary: the language of graphs with possible edge orientations and loops;
\item $\{E,\sim\}$, where $\sim$ is interpreted as an equivalence relation: the language of graphs with an unlabeled partition;
\item $\{E,U_1,\ldots,U_d\}$, where $U_1, \ldots, U_d$ are unary relations interpreted as a partition of the universe: the language of colored graphs, or graphs with a labeled partition;
\item $\{\text{inc}\}$, where $\text{inc}$ is interpreted as a vertex-edge incidence relation of a graph. 
\end{compactitem}
MSO$_{\{E\}}$ is usually denoted by MSO$_1$, and MSO$_{\{\text{inc}\}}$ by MSO$_2$, when a theory interprets $\text{inc}$ as above.
A \emph{sentence} is a formula without free variables.
A \emph{relational $\tau$-structure} $\mathscr A$ on universe $A$ gives an interpretation $R^{\mathscr A} \subseteq A^r$ to every $r$-ary relation symbol $R \in \tau$.
A structure $\mathscr A$ is a \emph{model} of a sentence $\varphi$, denoted by $\mathscr A \models \varphi$, if $\varphi$ holds when interpreted on $\mathscr A$.
We will only consider \emph{finite models} where the \emph{universe} $A$ is a finite set.
The \emph{FO model checking} (resp.~\emph{MSO model checking}) asks given a $\tau$-structure $\mathscr A$ and a sentence $\varphi \in$ FO$_\tau$ (resp. $\varphi \in$ MSO$_\tau$) whether $\mathscr A \models \varphi$ holds.
The fragment $\exists \text{FO}$ (existential first-order logic) consists of the formulas with no universal quantifier and all the negations pushed down to atomic formulas.

The \emph{Gaifman graph} of a $\tau$-structure $\mathscr A$ has vertex set its universe $A$, and edges $ab$ whenever $a$ and $b$ appear in the same relation $R^{\mathscr A}$ for some $R \in \tau$. 
The \emph{quantifier rank} (or \emph{quantifier depth}) of a formula $\varphi$ is the largest number of quantifiers that are nested in $\varphi$.
FO$_\tau[q]$ (resp.~MSO$_\tau[q]$) denotes the set of formulas in FO$_\tau$ (resp.~MSO$_\tau$) with quantifier rank at most~$q$.
When the signature is irrelevant or clear from the context, we may omit it, and simply write FO, MSO, FO$[q]$, MSO$[q]$.

If two finite $\tau$-structures are not isomorphic, then there is a sentence that holds in one but not in the other (for instance the sentence that fully describes the former structure).
However it is very well possible that two non-isomorphic $\tau$-structures satisfy the exact same sentences of FO$[q]$ or MSO$[q]$, for some (finite) integer $q$.
Ehrenfeucht-Fraïssé games characterize exactly when that happens.
Initially the game was defined for first-order logic.
We call it the \emph{EF game} and start with its description.
We will then present its extension \emph{MSO-EF} for monadic second-order.

In the EF game, two players \emph{Spoiler} and \emph{Duplicator} confront each other over two $\tau$-structures $\mathscr A$ and $\mathscr B$.
They play a succession of rounds, when Spoiler wants to show that $\mathscr A$ and $\mathscr B$ are not isomorphic, whereas Duplicator tries to argue the opposite.
The $i$-th round goes like this.
Spoiler chooses a structure $\mathscr A$ or $\mathscr B$, and picks one element in it, say $a_i \in A$ (or~$b_i \in B$).
Duplicator answers by picking an element in the other structure, say $b_i \in B$ (resp.~$a_i \in A$).
If after $q$ rounds, $a_i \mapsto b_i$ (for $i \in [q]$) is still an isomorphism between the induced substructures $(\mathscr A,=)[a_1,\ldots,a_q]$ and $(\mathscr B,=)[b_1,\ldots,b_q]$, we say that Duplicator has \emph{survived $q$ rounds} of the EF game.

We write $\mathscr A \eqfo{q} \mathscr B$ if Duplicator has a strategy such that she can survive (at least) $q$ rounds.
The Ehrenfeucht-Fraïssé theorem states that this is equivalent to $\mathscr A$ and $\mathscr B$ agreeing on all the sentences of FO$_\tau[q]$.

\begin{lemma}[Ehrenfeucht-Fraïssé, see Theorem 3.9 in~\cite{Libkin}]\label{lem:ef}
  Let $\mathscr A$ and $\mathscr B$ be two $\tau$-structures.
  Then, $\mathscr A$ and $\mathscr B$ satisfy the same sentences of FO$_\tau[q]$ if and only if $\mathscr A \eqfo{q} \mathscr B$.
\end{lemma}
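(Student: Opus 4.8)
The plan is to prove the classical Ehrenfeucht--Fra\"iss\'e theorem by the standard route: strengthen the statement to configurations carrying pebbles, induct on the number of rounds $q$, and mediate between the two sides via rank-$q$ \emph{types} and their syntactic counterparts, the \emph{Hintikka formulas}. For a tuple $\bar a=(a_1,\dots,a_k)\in A^k$ I write $(\mathscr A,\bar a)$ for $\mathscr A$ with $k$ distinguished elements, and extend the notation $\eqfo{q}$ to pebbled structures: $(\mathscr A,\bar a)\eqfo{q}(\mathscr B,\bar b)$ means Duplicator survives $q$ further rounds of the EF game started from the position matching $a_i$ to $b_i$. The lemma is then the case $k=0$. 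First I would establish the key finiteness fact, by induction on $q$: since $\tau$ is a finite relational signature, up to logical equivalence there are only finitely many formulas of $\mathrm{FO}_\tau[q]$ with free variables among $x_1,\dots,x_k$. This lets me define, for each $(\mathscr A,\bar a)$, the Hintikka formula $\varphi^q_{\mathscr A,\bar a}(x_1,\dots,x_k):=\bigwedge\{\psi\in\mathrm{FO}_\tau[q]:\mathscr A\models\psi(\bar a)\}$ as a genuine finite formula of quantifier rank $q$, with the defining property that $\mathscr B\models\varphi^q_{\mathscr A,\bar a}(\bar b)$ iff $(\mathscr A,\bar a)$ and $(\mathscr B,\bar b)$ agree on all of $\mathrm{FO}_\tau[q]$.

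For the implication from games to logic, I would show by induction on the structure of $\psi\in\mathrm{FO}_\tau[q]$ that $(\mathscr A,\bar a)\eqfo{q}(\mathscr B,\bar b)$ implies $\mathscr A\models\psi(\bar a)\iff\mathscr B\models\psi(\bar b)$. Atomic $\psi$: surviving even $0$ rounds forces $\bar a\mapsto\bar b$ to be a partial isomorphism. Boolean connectives: immediate. Existential $\psi=\exists x\,\chi$ with $\chi$ of rank $\le q-1$: if $\mathscr A\models\psi(\bar a)$, pick a witness $a\in A$, have Spoiler play it, let $b\in B$ be Duplicator's reply under a surviving strategy so that $(\mathscr A,\bar a a)\eqfo{q-1}(\mathscr B,\bar b b)$, and apply the induction hypothesis to $\chi$; the symmetric Spoiler move in $\mathscr B$ gives the converse. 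This step uses the recursive unfolding of the game: $(\mathscr A,\bar a)\eqfo{q}(\mathscr B,\bar b)$ is equivalent to ``for every $a\in A$ there is $b\in B$ with $(\mathscr A,\bar a a)\eqfo{q-1}(\mathscr B,\bar b b)$, and symmetrically with $\mathscr A,\mathscr B$ swapped.''

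For the converse I would prove, by induction on $q$, the stronger statement: if $\mathscr B\models\varphi^q_{\mathscr A,\bar a}(\bar b)$ then $(\mathscr A,\bar a)\eqfo{q}(\mathscr B,\bar b)$. The base case $q=0$ is again the partial-isomorphism condition. For $q+1$: assume $\mathscr B\models\varphi^{q+1}_{\mathscr A,\bar a}(\bar b)$ and consider a Spoiler move $a\in A$ (a move in $\mathscr B$ is symmetric). Duplicator answers with a $b\in B$ satisfying $\mathscr B\models\varphi^q_{\mathscr A,\bar a a}(\bar b b)$: such $b$ exists because the rank-$(q+1)$ sentence $\exists x\,\varphi^q_{\mathscr A,\bar a a}(\bar a,x)$ holds in $(\mathscr A,\bar a)$, hence is implied by $\varphi^{q+1}_{\mathscr A,\bar a}$ and holds in $(\mathscr B,\bar b)$. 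The induction hypothesis then gives $(\mathscr A,\bar a a)\eqfo{q}(\mathscr B,\bar b b)$, i.e.\ Duplicator can survive the remaining $q$ rounds, as required. Chaining the two implications with the defining property of Hintikka formulas yields the equivalence, and in particular the case $k=0$.

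The only genuine obstacle is the finiteness of rank-$q$ types and the associated bookkeeping: one must check that each Hintikka formula is a finite object of rank exactly $q$, and that introducing one extra pebble per round does not disrupt the induction. Both are handled by the induction on $q$, using that a finite relational $\tau$ has finitely many atomic formulas, hence finitely many rank-$q$ formulas up to equivalence for each fixed finite set of free variables; since $k$ grows by one while $q$ drops by one per round, the double induction is well-founded. Everything else is routine. For the use in \cref{sec:classical} only the case $k=0$ is needed, and the MSO analogue follows from the same argument applied to the MSO-EF game (with set moves treated exactly like element moves, and ``partial isomorphism'' read in the expanded signature with the guessed sets as unary predicates).
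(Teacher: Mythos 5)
The paper does not prove \cref{lem:ef}: it states it and cites Libkin's textbook (Theorem~3.9 therein) for the proof. Your argument is the standard textbook proof of the Ehrenfeucht--Fra\"iss\'e theorem — pebbled configurations, finiteness of rank-$q$ types over a finite relational signature, Hintikka formulas, and the back-and-forth induction — and is essentially the proof in the cited reference, so there is nothing to compare against within the paper itself; the proposal is correct and matches the reference. Two minor points worth tightening if you were to write it out in full: the Hintikka formula has quantifier rank \emph{at most} $q$ (not ``exactly''), and the ``symmetric'' Spoiler-moves-in-$\mathscr B$ case of the converse direction really uses the implied $\forall x\,\bigvee_a \varphi^q_{\mathscr A,\bar a a}$ disjunction (finite by the finiteness fact) rather than being literally symmetric to the $\exists$ case; both are routine but should be made explicit in a careful write-up.
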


The MSO-EF game is similar to the EF-game, but Spoiler can (in each round) alternatively decide to play a subset of $A$ (or a subset of $B$), to which Duplicator answers with a subset of $B$ (resp. of~$A$).
Now after $q$ rounds, a tuple of $e$ elements have been played in both $\mathscr A$ and $\mathscr B$, say $(a_1,\ldots,a_e)$ and $(b_1,\ldots,b_e)$ in this order, as well as a tuple of $s$ sets, say $(A_1,\ldots,A_s)$ in $\mathscr A$ and $(B_1,\ldots,B_s)$ in $\mathscr B$, with $q=e+s$.
Duplicator has survived these~$q$ rounds if $a_i \mapsto b_i$ (for $i \in [e]$) is an isomorphism between $(\mathscr A,=,A_1,\ldots,A_s)[a_1,\ldots,a_e]$ and $(\mathscr B,=,B_1,\ldots,B_s)[b_1,\ldots,b_e]$.
Similarly we write $\mathscr A \eqmso{q} \mathscr B$ if Duplicator has a strategy allowing her to survive (at least) $q$ rounds of the MSO-EF game.
The same characterization holds for MSO and the MSO-EF game.

\begin{lemma}[Ehrenfeucht-Fraïssé for MSO, see Corollary 7.8 in~\cite{Libkin}]\label{lem:ef-mso}
  Let $\mathscr A$ and $\mathscr B$ be two $\tau$-structures.
  Then, $\mathscr A$ and $\mathscr B$ satisfy the same sentences of MSO$_\tau[q]$ if and only if $\mathscr A \eqmso{q} \mathscr B$.
\end{lemma}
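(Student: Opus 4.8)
The plan is to prove the stronger statement allowing free variables, by induction on $q$, and then read off Lemma~\ref{lem:ef-mso} as the case of no free variables. Fix a game position: $\tau$-structures $\mathscr A,\mathscr B$ together with element tuples $\bar a=(a_1,\dots,a_e)\in A^e$, $\bar b\in B^e$ and set tuples $\bar A=(A_1,\dots,A_s)$ in $\mathscr A$, $\bar B$ in $\mathscr B$ of matching arities. Write $(\mathscr A,\bar a,\bar A)\eqmso{q}(\mathscr B,\bar b,\bar B)$ if Duplicator survives $q$ further rounds from this position, and call the two augmented structures \emph{$q$-equivalent} if they satisfy exactly the same MSO$_\tau[q]$ formulas whose free variables are among $x_1,\dots,x_e,X_1,\dots,X_s$ (the $x_i$ interpreted by the $a_i$ or $b_i$, the $X_j$ by the $A_j$ or $B_j$). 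The goal is that these two notions coincide for every $q$ and every arity $(e,s)$; Lemma~\ref{lem:ef-mso} is the instance $e=s=0$.

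For $q=0$, surviving zero rounds means exactly that $a_i\mapsto b_i$ is an isomorphism between the substructures induced on the played elements, equipped with equality and with each played $A_j$ seen as a unary predicate; this is precisely agreement on all quantifier-free formulas in the free variables, i.e.\ on MSO$_\tau[0]$. For the inductive step, observe first that every MSO$_\tau[q]$ formula is logically equivalent to a Boolean combination of MSO$_\tau[0]$ formulas and formulas of the form $\exists x\,\psi$ or $\exists X\,\psi$ with $\psi\in$ MSO$_\tau[q-1]$ (rewrite $\forall y\,\psi$ as $\neg\exists y\,\neg\psi$ and push negations outward), so $q$-equivalence is the same as agreement on all formulas of these three shapes. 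Suppose Duplicator survives $q$ rounds, and $(\mathscr A,\bar a,\bar A)\models\exists x\,\psi(x)$ with witness $a$; let Spoiler open the round by playing $a$ in $\mathscr A$. Duplicator's winning reply $b$ reaches a position from which she still survives $q-1$ rounds, so by induction $(\mathscr A,\bar a a,\bar A)$ and $(\mathscr B,\bar b b,\bar B)$ are $(q-1)$-equivalent; since $\psi$ has quantifier rank at most $q-1$, we get $(\mathscr B,\bar b,\bar B)\models\exists x\,\psi$. The set-quantifier case, the symmetric direction (Spoiler opening in $\mathscr B$), and Boolean combinations are handled identically, so Duplicator surviving $q$ rounds entails $q$-equivalence.

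For the converse I would use the one ingredient that is not purely syntactic: over a \emph{finite} relational signature, for a fixed number of free variables and a fixed quantifier rank $r$, there are only finitely many MSO$_\tau[r]$ formulas up to logical equivalence. (This is a short induction on $r$: finitely many atomic formulas over the available variables; the Boolean closure of a finite set of formulas is finite up to equivalence; and quantifying over finitely many formulas produces finitely many.) Granting this, suppose the two augmented structures are $q$-equivalent and Spoiler plays an element $a\in A$. Let $\vartheta(x_1,\dots,x_e,x,X_1,\dots,X_s)$ be the conjunction of all MSO$_\tau[q-1]$ formulas in these variables satisfied by $(\mathscr A,\bar a a,\bar A)$ --- a finite conjunction up to equivalence, hence a genuine MSO$_\tau[q-1]$ formula. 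Then $(\mathscr A,\bar a,\bar A)\models\exists x\,\vartheta$, which is an MSO$_\tau[q]$ formula, so by $q$-equivalence there is $b\in B$ with $(\mathscr B,\bar b b,\bar B)\models\vartheta$. The augmented structures $(\mathscr A,\bar a a,\bar A)$ and $(\mathscr B,\bar b b,\bar B)$ are then $(q-1)$-equivalent, since for every MSO$_\tau[q-1]$ formula $\chi$ in these variables, either $\chi$ or $\neg\chi$ is a conjunct of $\vartheta$ and hence holds in both. Duplicator answers $b$ and, by induction, wins the remaining $q-1$ rounds. When Spoiler plays in $\mathscr B$ instead of $\mathscr A$, or plays a subset rather than an element (use $\exists X$ in place of $\exists x$), the argument is symmetric; thus $q$-equivalence lets Duplicator survive all $q$ rounds, completing the induction.

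The main obstacle is exactly this finiteness of the number of MSO$_\tau[q-1]$-types: it is what turns ``the conjunction of everything true'' into a single formula and thereby makes the type-transfer step of the previous paragraph legitimate. Everything else is bookkeeping, and the perfect parallel between the element case and the set case is precisely why the MSO-EF game is set up to let Spoiler pick, in each round, either an element or a subset.
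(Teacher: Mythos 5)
The paper does not actually prove this lemma; it cites it from Libkin (Corollary 7.8), so there is no in-paper proof to compare against. Your argument is correct and is the standard back-and-forth proof found in that reference: the forward direction (game survival implies logical equivalence) is a clean structural induction on $q$, and the converse hinges, as you rightly isolate, on the finiteness of the number of MSO$_\tau$ formulas of bounded quantifier rank over a fixed finite tuple of free element and set variables, up to logical equivalence, over a finite relational signature. This is exactly the ingredient that lets you form the rank-$(q-1)$ ``Hintikka-style'' formula $\vartheta$ describing the position after Spoiler's move and transfer it across via $\exists x\,\vartheta$ (or $\exists X\,\vartheta$). Your base case (partial isomorphism with played sets as unary predicates coincides with agreement on quantifier-free formulas) and your handling of the symmetric cases are also correct. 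The only thing worth making fully explicit, though you clearly have it in mind, is that the finiteness induction must carry the number of free variables of each sort as a parameter, since quantifying strips one off; the bookkeeping you describe does this correctly.
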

  
\section{From branch-decompositions to contraction sequences}\label{sec:classical}

We start this section by showing~\cref{thm:cs-width}, that is, the functional equivalence between boolean-width (equivalently rank-width) and \local twin-width, and between linear boolean-width and \total twin-width. 

\subsection{Classical width parameters as contraction sequences}

  
\begin{theorem}\label{thm:local}
Boolean-width and \local twin-width are \equi.
\end{theorem}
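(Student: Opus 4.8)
The plan is to prove the two inequalities separately, exploiting \cref{obs:moduleboolean} to pass freely between boolean-width of a bipartition and the number of distinct neighborhoods across that bipartition.

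\textbf{From \local twin-width to boolean-width.}
Suppose $G$ admits a partition sequence $\mathcal P_n, \ldots, \mathcal P_1$ with every $w_c(\mathcal P_i) \le d$, i.e.\ every red component of $G/\mathcal P_i$ has at most $d$ vertices. First I would record the standard correspondence between partition sequences and branch-decompositions: contracting down to a single part builds a binary tree $T$ whose leaves are $V(G)$, where the internal node created at step $i$ (fusing two parts of $\mathcal P_{i+1}$ into one part $X$ of $\mathcal P_i$) is labelled by $X = A_v$, the set of leaves below it. For an internal node $v$ born at step $i$, consider the bipartition $(A_v, V(G) \setminus A_v)$. The point is that $A_v$ is one part of $\mathcal P_i$, and its red component in $G/\mathcal P_i$ has at most $d$ parts, say $A_v = X_1, X_2, \ldots, X_k$ with $k \le d$; every other part of $\mathcal P_i$ is homogeneous to each $X_j$. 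Hence a vertex $u \in A_v$ has its neighborhood in $V(G) \setminus A_v$ completely determined by: (i) which part $X_j$ contains $u$, and (ii) for each of the $k$ parts $X_1,\dots,X_k$ in the red component, whether $u$ is adjacent to each of the remaining $\le d-1$ of them (these are red pairs internal to the component, so adjacency may vary vertex-by-vertex) — but crucially, $u$'s adjacency to all parts \emph{outside} the red component is the same for all $u \in A_v$ within a fixed $X_j$. So the number of distinct neighborhoods (in the bipartite graph $G(A_v, V(G)\setminus A_v)$) realized by vertices of $A_v$ is at most $\sum_{j=1}^k 2^{k-1} \le d\,2^{d-1}$, a bound depending only on $d$. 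By \cref{obs:moduleboolean}, $\text{bool}(A_v, V(G)\setminus A_v) \le d\,2^{d-1}$. Taking the max over all internal nodes, $b_{\text{bool}}(G) \le d\,2^{d-1}$.

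\textbf{From boolean-width to \local twin-width.}
Conversely, take a branch-decomposition $T$ of $G$ with $b_{\text{bool}}(T) \le k$; root it arbitrarily. Process $T$ bottom-up: whenever both children of an internal node $v$ have been "completed," contract the two parts $A_{c_1}, A_{c_2}$ into $A_v$. This yields a partition sequence (choosing, at each step, the ordering of contractions consistently with $T$). I need to bound the red-component size of each $G/\mathcal P_i$. Fix a partition $\mathcal P_i$; its parts are exactly the sets $A_v$ for $v$ ranging over an antichain $F$ of $T$ (the "frontier"). For a part $A_v \in \mathcal P_i$, consider its bipartition $(A_v, V(G)\setminus A_v)$; since $v$ is a node of $T$, $\text{bool}(A_v, V(G)\setminus A_v) \le k$, so by \cref{obs:moduleboolean} at most $2^k$ distinct neighborhoods in $V(G)\setminus A_v$ are realized by vertices of $A_v$. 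Now the key observation: if $A_w \in \mathcal P_i$ is another part and $A_v$ is \emph{not} homogeneous with $A_w$ (a red edge), then two vertices of $A_v$ have different neighborhoods when restricted to $A_w \subseteq V(G)\setminus A_v$, hence a fortiori different neighborhoods in all of $V(G)\setminus A_v$. So distinct red-neighbors $A_w$ of $A_v$ "cost" a refinement of the partition of $A_v$ by its $\le 2^k$ neighborhood-classes — more precisely, any two red-neighbors $A_w, A_{w'}$ of $A_v$ that induce the same partition of $A_v$ into neighborhood classes may coincide, so this alone bounds only the red degree, not the component size. To get the component bound I would instead argue: the whole red component $C$ containing $A_v$ spans a vertex set $S = \bigcup_{A_w \in C} A_w$, and within $G[S]$ the parts of $C$ form a partition none of whose pairs is homogeneous; applying \cref{obs:moduleboolean} to the bipartitions $(A_w, S \setminus A_w)$ for each $w$ (whose boolean-width is still $\le k$ since $A_w$ is a node of $T$) bounds the number of parts in $C$. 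Concretely, pick any part $A_w$ in $C$; the remaining parts of $C$ are all non-homogeneous with $A_w$ via paths in the red graph, which forces — by a counting argument over the $\le 2^k$ neighborhood-classes of $A_w$ combined with connectivity of the red graph on $C$ — that $|C| \le g(k)$ for some function $g$, and thus $w_c(\mathcal P_i) \le \sum_{A_w \in C} |A_w|$ is bounded; this last sum is bounded because each part $A_w$ in $C$, being a non-singleton in general, still has its \emph{size} controlled only if we also recurse, so in fact the cleanest route is to bound the number of \emph{vertices of $G$} in $C$ directly by noting $C$ itself is the part-set of a smaller instance and iterating — I will instead bound $|S|$ by showing $G[S]$ has boolean-width $\le k$ and few distinct neighborhoods, forcing $S$ to decompose into $\le 2^{O(k)}$ modules, which caps the number of red parts.

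\textbf{Main obstacle.}
The delicate direction is boolean-width $\Rightarrow$ \local twin-width: bounding red \emph{degree} via distinct-neighborhood counts is immediate, but bounding red \emph{component size} requires showing the red component cannot grow a long path, which is where the full strength of small boolean-width on \emph{every} bipartition $(A_v, V(G)\setminus A_v)$ along $T$ must be used simultaneously, not node-by-node. I expect the heart of the argument to be a lemma stating: in $G/\mathcal P_i$ built from a branch-decomposition of boolean-width $\le k$, each red component has at most $f(k)$ vertices of $G$, proved by associating to each red component a bounded-size "signature" and ruling out arbitrarily large components via a Ramsey- or pigeonhole-type contradiction on neighborhood classes. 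The other direction, and the reductions to the sparse cases (tree-width, path-width via $K_{t,t}$-freeness and \cite{Gurski00}), should be routine given \cref{obs:moduleboolean} and the known equivalence of boolean-width and rank-width.
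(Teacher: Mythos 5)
Your high-level plan — pass between partition sequences and branch-decompositions using Observation 2.1 to convert between boolean-width and counts of distinct neighborhoods — is the right framework, and you correctly identify the red-component-size bound as the delicate point. However, both directions as you have written them contain genuine gaps.

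\textbf{\Local twin-width $\Rightarrow$ boolean-width.} You take the merge tree of the partition sequence, so a node $v$ born at step $i$ has leaf set $A_v$ equal to a \emph{single part} of $\mathcal P_i$, and you claim the number of distinct neighborhoods of vertices of $A_v$ in $V(G)\setminus A_v$ is at most $d\,2^{d-1}$. That is false. The parts $X_2,\ldots,X_k$ of $A_v$'s red component lie \emph{outside} $A_v$, and a vertex $u\in A_v$ may meet $X_j$ in an arbitrary subset — it is not a binary ``adjacent or not'' — so these neighborhoods are not controlled by $d$ alone. A concrete counterexample: let $V=A\cup B\cup\{u\}$ with $A=\{a_1,\ldots,a_m\}$, $B=\{b_1,\ldots,b_m\}$, edges $a_ib_i$ for all $i$ and $ua_i,ub_i$ for all $i$. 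The sequence that alternately grows a prefix of $A$ and then the matching prefix of $B$ keeps every red component of size at most $3$ (plus loops), yet the merge-tree node with $A_v=A$ sees $m$ pairwise distinct neighborhoods in $B\cup\{u\}$, hence boolean-width at least $\log_2 m$, unbounded. The paper sidesteps this by building the tree on the coarsening $\mathcal P'_i$ of $\mathcal P_i$ into \emph{red components}: each part of $\mathcal P'_i$ is a union of at most $d$ parts of $\mathcal P_i$, each of which \emph{is} a module with respect to the complement of its red component, so any sub-union across a cut realizes at most $d$ distinct neighborhoods. The tree is maintained incrementally by hanging a subcubic gadget whenever several red components fuse into one.

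\textbf{Boolean-width $\Rightarrow$ \local twin-width.} Here you process an arbitrary branch decomposition $T$ bottom-up, contracting the two children of a node once both are done, and you acknowledge you do not know how to bound red component size; your proposed fixes (Ramsey/pigeonhole on neighborhood classes, decomposing $G[S]$ into $2^{O(k)}$ modules) do not work. If $T$ is a balanced binary branch-decomposition of the path $P_n$ (boolean-width at most $2$) and you carry out the contractions level by level, then when all depth-one subtrees have been completed the frontier is $\{1,2\},\{3,4\},\ldots,\{n-1,n\}$, a single red path with $n/2$ parts. Having few neighborhood classes per cut never prevents a long red path from accumulating. The paper's construction is structurally different: at every step it picks the \emph{deepest} node $v$ of the current tree with $2^d+1\le|A_v|\le 2^{d+1}$, and contracts two \emph{leaves} of $A_v$ that have identical neighborhoods outside $A_v$ (these exist by Observation 2.1, since the cut at $v$ has boolean-width at most $d$, hence at most $2^d$ neighborhood classes, while $|A_v|\ge 2^d+1$). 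Because the contracted pair agree outside $A_v$, the only new red edges lie strictly below $v$, where there are at most $2^{d+1}$ leaves, and the invariant ($\clubsuit$) — red edges confined to subtrees of size at most $2^{d+1}$ — is what caps the component size. That careful choice of the contraction pair, rather than any fixed sweep of $T$, is the idea your proposal is missing.
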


\begin{proof}
  Let $G$ be a graph.
  We first show that the \local twin-width of $G$ is bounded in terms of the boolean-width of $G$.
  This was essentially done in the first paper of the series~\cite{twin-width1}, but only sketched and without the explicit notion of \local twin-width.


  Let $T$ be a rooted branch-decomposition of $G$ whose leaves are bijectively mapped to $V(G)$, and assume that $T$ has boolean-width at most $d$.
  We make a sequence of contractions $G_n, \ldots, G_{\ell}$ such that the size of any red component in the trigraph sequence is at most $2^{d+1}$ and $G_{\ell}$ has at most $2^{d+1}$ vertices. 
  We may assume $G_n$ has at least $2^{d+1}+1$ vertices, as otherwise we are done. 
  Henceforth, a rooted branch-decomposition $T_j$ for each trigraph $G_j$ on at least $2^{d+1}+1$ vertices is constructed along with the contraction sequence while the following invariant, which clarifies what we actually mean by \emph{branch-decomposition of a trigraph}, is maintained:

\begin{center}
($\clubsuit$) For each node $v$ of $T_j$ with $\abs{A_v}\geq 2^{d}+1$, all edges of $G_j$ crossing $(A_v, V(G_j)\setminus A_v)$ are black, and the boolean-width of the bipartition $(A_v, V(G_j)\setminus A_v)$ is at most $d$.
\end{center} 

The invariant $(\clubsuit)$ clearly holds for $G_n=G$ and $T_n=T$.
Let $G_{i+1}$ be a trigraph on $i+1\geq 2^{d+1}+1$ vertices and $T_{i+1}$ be a rooted branch-decomposition for which  $(\clubsuit)$ holds. 
We construct $G_{i}$ and $T_{i}$ satisfying the invariant.

Observe that there exists a node $v$ of $T_{i+1}$ such that $2^d+1\leq \abs{A_v} \leq 2^{d+1}$; a node $v$ such that 
$A_v$ has size at least $2^{d}+1$ and which is furthest from the root meets the condition.
By~\cref{obs:moduleboolean} and the second part of  $(\clubsuit)$ applied to $v$, there are two distinct vertices $x,y$ of $G_{i+1}$, which belong to $A_v$ such that $x,y$ have the same (black) neighborhood in $V(G_{i+1})\setminus A_v$.
Now contract $x,y$ to yield $G_i$.
Let $T_i$ be a branch-decomposition of $G_i$ obtained by deleting~$y$ and identifying the node $x$ to the new vertex of $G_i$ resulting from the contraction of $x$ and~$y$. 
Due to the choice of $x,y$ and the first part of ($\clubsuit$) on $i+1$, the edges between $A_v$ (of the new tree $T_i$) and $V(G_i)\setminus A_v$ are all black.
This means that any bipartition of $V(G_i)$ that can potentially contain a newly created red edge of $G_i$ must be associated with a strict descendant of $v$.
By the definition of $v$, any strict descendant of $v$ has at most $2^d$ leaves (both in $T_{i+1}$ and $T_i$) 
and is thus out of the scope of the invariant $(\clubsuit)$. Therefore, the first part of $(\clubsuit)$ is maintained. 

This also means that for any node $u$ of $T_{i}$, which is not a strict descendant of $v$, the bipartition $(A_u,V(G_i)\setminus A_u)$ associated with $u$ 
is the same as the bipartition associated with $u$ in $T_{i+1}$ after deleting one vertex of $G_{i+1}$, namely $y$. Since 
the boolean-width of a bipartition does not increase after  vertex deletion, we conclude that the second part of $(\clubsuit)$ is maintained as well.
Finally, we observe that the invariant $(\clubsuit)$ indicates that $G$ has \local twin-width at most $2^{d+1}$ since any red component of $G_i$ is included inside some $A_v$ with size at most $2^{d+1}$.

To see the other direction, let $\mathcal P_n, \ldots, \mathcal P_1$ be a partition sequence of $G$ such that every connected component of the red graph $G_i$ has at most $d$ vertices. 
Let $\mathcal P'_i$ be the coarsening of $\mathcal P_i$ such that each part of $\mathcal P'_i$ corresponds to a red component of $G_i$, i.e., is the union of parts of $\mathcal P_i$, which form a red component in $G_i$.
Slightly abusing the notation, we call a part of $\mathcal P'_i$ a red component of $\mathcal P_i$.

Let $T_n$ be a star tree rooted at its center $r$, whose $n$ leaves are bijectively mapped to $V(G)$. 
We will iteratively transform a rooted tree $T_{i+1}$ to $T_{i}$ in a way that mirrors the 
merging of parts in $\mathcal P'_i$. The root $r$ will be unchanged throughout the transformations. 
During iterative transformations we maintain the following invariants: 
\begin{itemize}
\item[(a)] The leaves of each connected component of $T_j- r$ are mapped to each part of $\mathcal P'_j$. 
\item[(b)] The root $r$ has as many children as $\abs{\mathcal P'_j}$ and all other internal nodes have two children.
\item[(c)] For every edge of $T_j$, the associated bipartition has boolean-width at most $2^d$. 
\end{itemize}

The invariants~(a)-(c) clearly hold for $j=n$. Suppose $T_n, \ldots , T_{i+1}$ satisfy the invariants~(a)-(c), with $i \in [1,n-1]$.
Notice that $\mathcal P'_i$ is a coarsening of  $\mathcal P'_{i+1}$ (possibly $\mathcal P'_{i+1}=\mathcal P'_i$), that there is a unique red component $C\in \mathcal P'_i$, obtained as the union of some (possibly one) red components $C_1, \ldots, C_s$ of $\mathcal P'_{i+1}$, and that $\mathcal P'_i\setminus \{C\}=\mathcal P'_{i+1}\setminus \{C_1, \ldots, C_s\}$. 
By the invariant~(a), there are subtrees of $T_{i+1}-r$ whose leaves are mapped to parts $C_1, \ldots, C_s$ of $\mathcal P'_{i+1}$. 
Let $t(C_j)$ be the root of the subtree of $T_{i+1}-r$ corresponding to $C_j$ for $j\in [s]$. 
Now, we construct $T_i$ from $T_{i+1}$ as follows: replace the edges connecting the root~$r$ and $t(C_j)$ for $j\in [s]$ by a subcubic tree rooted at $t(C)$ with~$s$ leaves, whose root $t(C)$ becomes the child of~$r$ and whose~$s$ leaves are identified (arbitrarily) with $t(C_j)$ for $j\in [s]$. 

By the induction hypothesis and the construction of $T_i$, the invariants~(a)-(b) are maintained. 
Furthermore, since $C=\bigcup_{i=1}^s C_i$, the red component $C\in \mathcal P'_{i}$ consists of at most $d$ parts of $\mathcal P_{i}$, thus at most $d+1$ parts of $\mathcal P_{i+1}$, and we have $s\leq d+1$.
To see that the boolean-width of $T_i$ is at most $2^d$, it suffices to check that for all $I\subseteq [s]$, the boolean-width of the bipartition $(\bigcup_{j\in I}C_j, V(G)\setminus \bigcup_{j\in I}C_j)$ is at most $2^d$. 
For a proper subset $I$ of $[s]$, we know that $\bigcup_{j\in I}C_j$ consists of at most $d$ parts of $\mathcal P_{i+1}$ and each of these parts of $\mathcal P_{i+1}$ has the same neighborhood across the bipartition $(\bigcup_{j\in I}C_j, V(G)\setminus \bigcup_{j\in I}C_j)$ since each $C_j$ is a red component. Hence, the vertex set 
$\bigcup_{j\in I}C_j$ has at most $d$ vertices with distinct neighborhood across the bipartition and thus the boolean-width is at most $2^d$. 
For $I=[s]$, the same argument applies once it is noted that $\bigcup_{j\in I}C_j=C$ consists of at most $d$ parts of $\mathcal P_i$. 

As $\mathcal P'_1=\mathcal P_1=\{V(G)\}$, the invariants (a) and (b) at $i=1$ imply that $T_1$ is 
a subcubic tree whose leaves are bijectively mapped to $V(G)$. With the invariant (c), 
we conclude that $T_1$ and the bijection form a boolean decomposition of width at most $2^d$. 
%
\end{proof}

\begin{theorem}\label{thm:linearlocal}
Linear boolean-width and \total twin-width are \equi.
\end{theorem}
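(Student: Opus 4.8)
The plan is to prove this exactly like \cref{thm:local}, only with ``branch-decomposition'' replaced by ``caterpillar branch-decomposition'' (internal vertices forming a path, equivalently a vertex ordering together with its prefix cuts) and with ``bounded red component size'' replaced by ``bounded total number of red edges, loops included''. So I will establish the two inequalities separately: linear boolean-width bounded in terms of \total twin-width, and conversely.

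For the direction \emph{linear boolean-width $\le d$ implies bounded \total twin-width}, I would read off a vertex ordering $v_1<\dots<v_n$ from the caterpillar, so that every prefix cut $(\{v_1,\dots,v_k\},\{v_{k+1},\dots,v_n\})$ has boolean-width at most $d$, and then process the vertices left to right. The invariant is that the already-seen prefix is partitioned into the classes of vertices having the same neighborhood toward the unseen suffix; by \cref{obs:moduleboolean} there are at most $2^d$ such classes, and each one is a module with respect to the suffix. Hence every red edge and every loop of the current quotient trigraph lies inside this cluster of at most $2^d$ parts, so the total is at most $\binom{2^d+1}{2}+2^d+1$, a function of $d$ alone. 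Advancing by one vertex means inserting it as a fresh singleton and re-clustering toward the now-smaller suffix; this is a coarsening, hence realizable by single merges that transiently keep a cluster of at most $2^d+1$ parts, each still a module with respect to the remaining suffix, so the bound is preserved, and the sequence terminates on $\{V(G)\}$ automatically once the suffix empties. I do not expect this direction to pose difficulties.

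For the converse, \emph{\total twin-width $\le d$ implies bounded linear boolean-width}, the natural attempt again mirrors the second half of \cref{thm:local}. Since loops count, each quotient trigraph has at most $d$ non-singleton parts and each red component carrying a red edge has $O(d)$ parts; as in the proof above red edges never vanish along a contraction, so the red-component partitions $\mathcal P'_n,\dots,\mathcal P'_1$ form a nested (laminar) family from singletons to $\{V(G)\}$, which one can carve into a branch-decomposition following the resulting ``red-component tree'', expanding each high-degree node into a path; every edge-cut of this tree splits off a union of $O(d)$ red components whose parts are modules across the cut, giving boolean-width $O(\log d)$ for those cuts. The main obstacle — and the reason this is genuinely harder than \cref{thm:local} — lives exactly in the cuts that fall \emph{inside} a node expansion: unlike \local twin-width, a red component can here contain unboundedly many vertices (a large, already-contracted chunk), so one cannot recurse into it without the parameter blowing up with $|V(G)|$; and such a cut can expose unboundedly many isolated (``free'') singleton parts with pairwise distinct neighborhoods across it, so the ``count of parts'' argument of \cref{thm:local} fails outright. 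Relatedly, a vertex ordering merely compatible with the given sequence need not be good — for a half-graph $H_n$, a bounded-total-red-edge contraction is forced to group all $a_i$'s before all $b_j$'s, a layout of cut-rank $\Omega(n)$, whereas the interleaved layout has cut-rank $O(1)$ — so direction 2 cannot just harvest a compatible ordering. I expect the resolution to be a charging argument: whenever the contraction sequence is eventually forced to merge two such free singletons, it must create, and thereafter keep, a red edge toward the other side of the cut, so the \emph{total} red-edge bound (rather than the \emph{component-size} bound that sufficed in \cref{thm:local}) caps the number of distinct neighborhoods a prefix can expose; pinning down this charging, together with the accompanying construction of the linear layout, is the crux.
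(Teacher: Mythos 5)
Your direction~1 (linear boolean-width bounds \total twin-width) is sound and follows the same spirit as the paper's, although the paper phrases it as rerunning the construction of \cref{thm:local} and observing that in a linear branch-decomposition the minimal subtree with more than $2^d$ leaves is unique, confining all red edges (and loops) to its leaves.

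Direction~2 is where you have a genuine gap: you diagnose the obstacles correctly but do not resolve them, and the resolution you speculate about (a ``charging argument'' over a layout you never construct) is not what is needed. The key missing idea is the choice of linear layout: order the vertices by the time at which they are first \emph{absorbed into a non-singleton part}. Concretely, let $V_i$ be the union of the non-singleton parts of $\mathcal P_i$; then $V_1\supseteq V_2\supseteq\cdots\supseteq V_n=\emptyset$, each step removes at most two vertices, and one takes any linear order $\prec$ with $u\prec v$ whenever $u\in V_i$ and $v\notin V_i$ for some $i$. This is \emph{not} an ordering compatible with the partition sequence, so your half-graph concern, which is a correct observation about compatible orderings, simply does not apply here. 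With this layout, every prefix cut is $(V_i,V(G)\setminus V_i)$ or $(V_i\cup\{v\},V(G)\setminus(V_i\cup\{v\}))$ with $v\in V_i\setminus V_{i+1}$ still a singleton at time~$i$. Since each non-singleton part of $\mathcal P_i$ carries a red loop, there are at most $d$ of them, they lie entirely inside $V_i$, and every vertex outside $V_i$ is a singleton; each non-singleton part is a module toward the suffix except for the red edges it sends across, and there are at most $d$ red edges in total, so the number of distinct neighborhoods from $V_i$ across the cut is at most $2^{d-1}+1$, and \cref{obs:moduleboolean} gives boolean-width at most $2^d$. Your intuition that ``the total red-edge bound caps the number of distinct neighborhoods a prefix can expose'' is the right one, but without the absorption-time layout you have not turned it into a proof, and that layout is the crux of this direction.
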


\begin{proof}
  Let $G$ be a graph.
  We first bound \total twin-width in terms of linear boolean-width.
Let $T$ be a linear branch-decomposition of $G$ (i.e., in which the internal nodes form a path~$P$) with boolean-width at most $d$. We root $T$ 
at an internal node that is an endpoint of the path~$P$. We follow the same proof as for~\cref{thm:local} and observe that every tree $T_i$ is now a linear branch-decomposition. Indeed, the linearity of $T_i$ implies that there is a unique choice of a minimal rooted subtree of $T_i$ with at least $2^d+1$ leaves. Moreover, the invariant $(\clubsuit)$ of~\cref{thm:local} means that the endpoints of any red edge are restricted to the leaves of this subtree. Note 
that there are at most $2^d+1+ {2^d+1 \choose 2}$ red edges in any $G_i$.

Let $\mathcal P_n, \ldots, \mathcal P_{1}$ be a partition sequence of $G$ achieving \total twin-width at most $d$. 
Let $V_i\subseteq V(G)$ be the union of all non-singleton parts of $\mathcal P_i$. 
As every red edge which is not a loop has at least one endpoint with a loop, at most $d$ parts of $\mathcal P_i$ are incident to red edges. 
Note that $V_{i}\setminus V_{i+1}$ has at most two vertices, and it has two vertices only when we contract two singleton parts. 
Consider now any total order $\prec$ on $V(G)$ such that $u\prec v$ 
if there exists $i\in [n]$ with $u\in V_i$ and $v\notin V_i$. 
Indeed, we have $V_{n-1} \prec V_{n-2}\setminus V_{n-1} \prec  \cdots  \prec V_2\setminus V_3 \prec V_1\setminus V_2$. 
Now consider the linear branch decomposition $T$ corresponding to $\prec$ and we argue  that the boolean-width of $T$ is bounded by $d$.
Every bipartition corresponds either to a cut $(V_i,V(G)\setminus V_i)$ or to $(V_i\cup \{v\},V(G)\setminus (V_i\cup \{v\}))$ for some $v\in V_i\setminus V_{i+1}$. 
Observe that each non-singleton part of $\mathcal P_i$ is entirely contained in $V_i$. Moreover, each red edge crossing the cut incident with a non-singleton part $P\in \mathcal P_i$ 
multiplies the number of equivalence classes in $V_i$ by a factor of 2, where each class has the same neighborhood across the cut. This implies 
that the number of equivalence classes in $V_i$ is at most $2^{d-1}+1\leq 2^d$. 
Therefore, with~\cref{obs:moduleboolean} we deduce that  the boolean-width of $T$ is at most $2^d$. 
\end{proof}

Here again, we can find equivalent parameters in the sparse regime:

\begin{theorem} In the class of $K_{t,t}$-free graphs: 
\begin{compactitem}
\item tree-width and \local twin-width are \equi.
\item path-width and \total twin-width are \equi.
\end{compactitem}
\end{theorem}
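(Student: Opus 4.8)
The plan is to derive both statements directly from \cref{thm:local,thm:linearlocal} and the classical ``sparsification'' result recalled in~\cref{sec:prelim}, due to Gurski and Wanke~\cite{Gurski00}: tree-width (resp.\ path-width) is the sparse restriction of rank-width (resp.\ linear rank-width). Quantitatively, what I would use is that for each fixed $t$ there is a function $f_t$ such that every $K_{t,t}$-free graph $G$ satisfies $\mathrm{tw}(G) \le f_t(\mathrm{rw}(G))$ and $\mathrm{pw}(G) \le f_t(\mathrm{lrw}(G))$, whereas $\mathrm{rw}(G) \le \mathrm{tw}(G)+1$ and $\mathrm{lrw}(G)$ is bounded by a function of $\mathrm{pw}(G)$ for \emph{all} graphs. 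Since functional equivalence is transitive (the witnessing functions compose) and boolean-width is \equi to rank-width, linear boolean-width to linear rank-width, it then remains only to combine \cref{thm:local} (boolean-width, equivalently rank-width, is \equi to \local twin-width) and \cref{thm:linearlocal} (linear boolean-width, equivalently linear rank-width, is \equi to \total twin-width).

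Concretely, for the first item I would argue as follows. One direction needs no sparsity hypothesis: for every graph $G$ we have $\ltww(G) \le g(\mathrm{rw}(G)) \le g(\mathrm{tw}(G)+1)$, where $g$ is the function witnessing $\ltww \le g(\mathrm{rw})$ in~\cref{thm:local}. For the converse, when $G$ is $K_{t,t}$-free and $g'$ witnesses $\mathrm{rw} \le g'(\ltww)$ in~\cref{thm:local}, we get $\mathrm{tw}(G) \le f_t(\mathrm{rw}(G)) \le f_t(g'(\ltww(G)))$. Together these two bounds say that tree-width and \local twin-width are \equi on the class of $K_{t,t}$-free graphs, with a witnessing function depending on $t$. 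The second item is handled identically: replace rank-width by linear rank-width, tree-width by path-width, \local twin-width by \total twin-width, and \cref{thm:local} by \cref{thm:linearlocal}, using the bound of $\mathrm{lrw}$ in terms of $\mathrm{pw}$ for the sparsity-free direction and $\mathrm{pw}(G) \le f_t(\mathrm{lrw}(G)) \le f_t(g'(\ttww(G)))$ for the other.

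I do not expect a genuine obstacle here: the statement is essentially a corollary, and all the real work is already contained in \cref{thm:local,thm:linearlocal} and in the cited equivalence between the sparse restrictions of (linear) rank-width and path-/tree-width. The only points that need a little care are to invoke the \emph{quantitative} form of the Gurski--Wanke result rather than its class-level phrasing, and to notice that $K_{t,t}$-freeness is used in exactly one of the two directions (bounding tree-width, resp.\ path-width, from above), the reverse inequality holding unconditionally; correspondingly the equivalence is witnessed by functions depending on the fixed parameter $t$.
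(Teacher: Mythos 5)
Your proposal is correct and is exactly the intended route: the paper itself gives no proof of this theorem, evidently treating it as an immediate corollary of \cref{thm:local,thm:linearlocal} together with the Gurski--Wanke sparsification result quoted just below \cref{obs:moduleboolean}. You correctly identify that the $K_{t,t}$-free hypothesis is needed only to bound tree-width (resp.\ path-width) from above by (linear) rank-width, that the reverse bounds hold unconditionally, and that one must invoke the quantitative version of~\cite{Gurski00} (tree-width at most polynomial in clique-width and $t$) rather than the class-level boundedness statement; composing the witnessing functions then gives functional equivalence with a dependence on the fixed~$t$, exactly as you say.
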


Note that if we do not add red loops to contracted vertices, linear rank-width and \total twin-width are not equivalent because of cographs.
Indeed, still without red loops, there is always a contraction sequence that produces no red edge at all: iteratively contract two twins.
However cographs have unbounded linear rank-width (this folklore fact can be derived from the unboundedness of linear rank-width for trees~\cite{AdlerK15}).
Thus the addition of loops may seem a bit artificial and even made to force the equivalence.
There is however a good reason for loops: From the adjacency-matrix viewpoint (which is both used for rank-width and twin-width), the main diagonal represents equality, a different predicate than the edge (or non-edge) predicate.

\subsection{Alternative proof of Courcelle's theorems}\label{subsec:courcelle}

We will give an alternative proof to the celebrated result by Courcelle, Makowsky, Rotics~\cite{Courcelle00} that MSO$_1$ model checking can be solved in linear time on bounded clique-width graphs, given with an $O(1)$-expression.
This generalizes the original Courcelle's theorem~\cite{Courcelle90} that MSO$_2$ model checking can be solved in linear time on bounded tree-width graphs.
Indeed MSO$_2$ is not more expressible than MSO$_1$ on graphs of bounded tree-width~\cite[Theorem 9.37]{Courcelle12}, and there is a linear-time {\FPT} algorithm returning a tree-decomposition of optimal width~\cite{Bodlaender96}.
We observe that there are other alternative proofs to the central result of Courcelle, Makowsky, Rotics; for instance, one based on automata~\cite{Ganian10}, and one game-theoretic~\cite{Langer11}.
Let us also mention that the incoming presentation does \emph{not} follow the exposition of the first-order model checking algorithm in~\cite{twin-width1}, but rather its streamlined revisitation by Gajarský, Pilipczuk, Reidl, and Toru\'nczyk~\cite{fomc-alt}. 

We call \emph{MSO rank-$q$ type} (or \emph{type} for short) any set of sentences
$$\tp{q}(G) := \{\varphi \in \text{MSO}_{\{E\}}[q]~:~ G \models \varphi\}$$
where $G$ is a graph, and we recall, $\text{MSO}_{\{E\}}[q]$ denotes the set of MSO sentences on a signature with a single binary relation $E$, and quantifier rank at most~$q$.
Then $\tp{q}(G)$ is called the MSO rank-$q$ type of $G$, or type of $G$ for short.

We fix a positive integer~$d$, upperbounding the \local twin-width on the class we want to tackle.
We call \emph{local MSO rank-$q$ partitioned type} (or \emph{local partitioned type} for short) any set of sentences
$$\ltp{q,d}(G,\mathcal P^o,C) := \{\varphi \in \text{MSO}_{\{E,U_1,\ldots,U_d\}}[q]~:~(G[C],\mathcal P^o[C]) \models \varphi\}$$
where $G$ is a graph with a labeled vertex partition $\mathcal P^o$, and $C \subseteq V(G)$ is the set of initial vertices of $G$ landing in a same red component $D$ with at most~$d$ parts.
The unary relations $U_1, \ldots, U_d$ are interpreted as the labeled partition $\mathcal P^o[C]$ of $G[C]$.
Thus some $U_{d'+1}, U_{d'+2}, \ldots, U_d$ may possibly be empty if $\mathcal P^o[C]$ has $d'<d$ parts.
We use the superscript $o$ (for \textbf{o}rdered) for the labeled partition $\mathcal P^o$ since we will usually fix the labeling by giving an ordering of the parts.
The (unlabeled) vertex partition $\mathcal P$ stands for $\mathcal P^o$ ignoring the labels.
We may sometimes identify $C$ with the red component $D$.

It is not difficult to show that there are only finitely many MSO sentences of quantifier rank~$q$ on finitary signatures, up to logical equivalence (see for instance~\cite[Proposition 7.5]{Libkin}).
Furthermore there is an algorithm (running in time function of $q$ and signature $\tau$ only) that lists all the sentences of depth $q$, up to logical equivalence.  
Therefore the number of (local) MSO rank-$q$ (partitioned) types is bounded by a function of $q$ and $d$ only.
A superset of the (local) MSO rank-$q$ (partitioned) types can be listed in time function of $q$ and $d$ only, by listing all the subsets of sentences of depth $q$.
Instead of deciding $G \models \varphi$ for a particular sentence $\varphi$ with quantifier rank~$q$, we will compute~$\tp{q}(G)$.
By the previous observation, this allows to decide $G \models \varphi$ for \emph{every} sentence $\varphi$ with quantifier rank~$q$ (by simply checking if $\varphi \in \tp{q}(G)$) in constant time, if $q$ and $d$ are treated as a constant.

The algorithm will only compute local partitioned types.
Note that for an $n$-vertex graph~$G$ with a partition sequence $\mathcal P_n, \ldots, \mathcal P_1$, $\ltp{q,d}(G,$ $\mathcal P^o_n = \{\{v\}~:~v \in V(G)\},C)$ is easy to determine, with any labeling $\mathcal P^o_n$ of $\mathcal P_n$, since $G/\mathcal P_n$ is isomorphic to the \emph{graph} $G$ and each possible $C$ is a singleton $\{w\}$ (for some $w \in V(G)$).
Thus these local partitioned types all coincide to the one of the 1-vertex graph with its unique labeled partition.
Furthermore $\ltp{q,d}(G,\mathcal P_1 = \{V(G)\},V(G))$ matches\footnote{Strictly speaking $\ltp{q,d}(G,\mathcal P_1 = \{V(G)\},V(G))$ is a superset of $\tp{q}(G)$, but its projection to sentences ignoring the (trivial) labeled partition is exactly $\tp{q}(G)$.} $\tp{q}(G)$, which is the set we are after.

Thus we only need to compute all the local partitioned types $\left(\ltp{q,d}(G,\mathcal P^o_i,C)\right)_C$ from the knowledge of $\left(\ltp{q,d}(G,\mathcal P^o_{i+1},C')\right)_{C'}$.
We will prove that this is possible since the local partitioned types, the contracted pair of parts $(X,X')$, and the black edges of the quotient trigraph are enough to reconstitute the local partitioned type of the new red component containing $X \cup X'$.   
We show that fact with the characterization via the Ehrenfeucht-Fraïssé game for MSO (see~\cref{lem:ef-mso}).
Recall that given two input structures $\mathscr A, \mathscr B$, Duplicator has a strategy to survive $q$ rounds of the MSO-EF game if and only if $\mathscr A$ and $\mathscr B$ satisfy the same sentences of MSO$[q]$, hence have the same rank-$q$ type.

The subsequent~\cref{lem:merge-game} has a technical and lengthy pre-condition that we chose to state outside the lemma environment for the sake of legibility.
It starts here.
Let $(G^1,(X^1_1,\ldots,X^1_{\ell_1})), \ldots,  (G^k,(X^k_1,\ldots,X^k_{\ell_k}))$ be $k$ graphs $G^j$ given with a labeled partition of size $\ell_j$.
Let $(H^1,(Y^1_1,\ldots,Y^1_{\ell_1})), \ldots,  (H^k,(Y^k_1,\ldots,Y^k_{\ell_k}))$ be such that $$(G^j,(X^j_1,\ldots,X^j_{\ell_j})) \eqmso{q} (H^j,(Y^j_1,\ldots,Y^j_{\ell_j})),~~\text{for every}~j \in [k].$$

Let $(G,\mathcal P^o)$ be a graph with a labeled vertex partition made from the disjoint union $$\bigcup_{j \in [k]} (G^j,(X^j_1, \ldots,X^j_{\ell_j}))$$ with parts labeled by the order $\sigma$, say, $$(X^1_1, \ldots,X^1_{\ell_1}, X^2_1, \ldots,X^2_{\ell_2}, \ldots, X^k_1, \ldots,X^k_{\ell_k}),$$ and adding the biclique between some pairs of parts $X^j_h, X^{j'}_{h'}$ prescribed by a meta-graph $B$ on vertex set $\mathcal P^o$. 

The natural bijection $\iota: X^j_h \mapsto Y^j_h$ (for $j \in [k]$ and $h \in [\ell_j]$) allows to transpose $\sigma$ and $B$ to the union of the $H^j$.
Let $(H,\mathcal Q^o)$ be the graph with a labeled partition made from the disjoint union $\bigcup_{j \in [k]} (H^j,(Y^j_1,\ldots,Y^j_{\ell_j}))$ with the parts labeled along $\iota(\sigma)$, and adding the bicliques prescribed by $\iota(B)$.
Finally we distinguish two parts (the parts to be contracted) $X,X'$ in $\mathcal P^o$, and we distinguish the \emph{homologous} parts $Y := \iota(X), Y' := \iota(X')$ in $\mathcal Q^o$.

\begin{lemma}\label{lem:merge-game}
   $(G,\mathcal P^o,X,X') \eqmso{q} (H,\mathcal Q^o,Y,Y')$.
\end{lemma}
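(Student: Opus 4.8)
The plan is a composition argument in the spirit of the Feferman--Vaught theorem, carried out on the MSO-EF game. By \cref{lem:ef-mso} it suffices to give Duplicator a strategy surviving $q$ rounds of the MSO-EF game played on $(G,\mathcal P^o,X,X')$ and $(H,\mathcal Q^o,Y,Y')$; here the common signature consists of the binary symbol $E$, one unary symbol per part of $\mathcal P^o$ (identified via $\iota$ with the parts of $\mathcal Q^o$), and two further unary symbols marking the designated parts $X,X'$ (resp.\ $Y,Y'$). For each $j\in[k]$ fix, by hypothesis, a Duplicator strategy $\mathfrak{s}_j$ surviving $q$ rounds on $(G^j,(X^j_1,\ldots,X^j_{\ell_j}))$ versus $(H^j,(Y^j_1,\ldots,Y^j_{\ell_j}))$.

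Duplicator plays the product strategy: she maintains $k$ games in parallel and routes each of Spoiler's moves into them. An element move (say $a$ in $G$) lies in a unique component $G^{j_0}$; Duplicator forwards it to game $j_0$, lets $\mathfrak{s}_{j_0}$ answer $b\in V(H^{j_0})$, and plays $b$. A set move $A\subseteq V(G)$ is forwarded to every game $j$ as $A\cap V(G^j)$; Duplicator collects the answers $B_j\subseteq V(H^j)$ and plays $\bigcup_{j\in[k]}B_j$; moves played in $H$ are handled symmetrically. Each parallel game advances by at most one round per global round, so after $q$ global rounds every game $j$ has run at most $q$ rounds; hence in each component $j$ the played elements induce, under the bijection given by $\mathfrak{s}_j$, a partial isomorphism of $(G^j,\ldots)$ onto $(H^j,\ldots)$ that is consistent with the played sets.

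It remains to verify that the union over $j\in[k]$ of these $k$ local partial isomorphisms is a partial isomorphism of the global structures; this is where the hypotheses on $\sigma$, $\iota$, and $B$ enter, though each point is immediate. \textbf{Unary predicates.} The part of a played element of $G^j$ is the same whether read in $G$ or in $G^j$, and since $\mathcal Q^o$ is labeled along $\iota(\sigma)$ the bijection $\iota$ matches each part to its homologous part; as $X,X'$ (resp.\ $Y=\iota(X)$, $Y'=\iota(X')$) are themselves parts, their marking predicates are respected as well. \textbf{Edges within a component.} The disjoint union neither adds nor deletes edges inside any $G^j$, so for two played elements of the same $G^j$, adjacency in $G$ equals adjacency in $G^j$, which $\mathfrak{s}_j$ transfers to $H^j$ and hence to $H$. \textbf{Edges across components.} For $a\in V(G^j)$ and $a'\in V(G^{j'})$ with $j\ne j'$, the construction makes $aa'$ an edge of $G$ exactly when the parts of $a$ and of $a'$ are joined in the meta-graph $B$, a condition depending only on those two parts; since part membership is preserved and $H$ is built using the transported meta-graph $\iota(B)$, the homologous pair $b,b'$ is adjacent in $H$ under the same condition.

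Therefore Duplicator survives $q$ rounds, and \cref{lem:ef-mso} yields $(G,\mathcal P^o,X,X')\eqmso{q}(H,\mathcal Q^o,Y,Y')$. The one place to be slightly careful when writing out the details is the round bookkeeping of the parallel games together with the observation that cross-component adjacency is ``coarse'', i.e.\ constant on each pair of parts of $\mathcal P^o$; granting these two points, the rest is routine disjoint-union composition and I do not expect a genuine obstacle.
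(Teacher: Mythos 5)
Your proposal is correct and takes essentially the same approach as the paper: a Feferman--Vaught-style product of local Duplicator strategies, where element moves are routed to their unique component, set moves are projected onto every component and the answers reassembled by union, and the resulting map is checked to be a partial isomorphism using that cross-component adjacency depends only on the pair of parts (governed by $B$ and its transport $\iota(B)$). The paper phrases the unary/part-preservation step in terms of the local labels $U_h$ contradicting the local partial isomorphism, while you observe that $\iota$ matches homologous parts --- the same point; your extra remark that cross-component adjacency is ``coarse'' on parts and your explicit round bookkeeping are both correct and harmless additions.
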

\begin{proof}
  The \emph{global} strategy of Duplicator simply follows the corresponding local strategy if a vertex is played, and if a set $S$ is played, the union of the local answers to each projection of~$S$ on the red components is replied.

  More precisely, if Spoiler plays $x_s \in X^j_h$ (or~$y_s \in Y^j_h$), then Duplicator answers $y_s \in Y^j_h$ (resp.~$x_s \in X^j_h$) accordingly to her local strategy on $(G^j,(X^j_1,\ldots,X^j_{\ell_j})), (H^j,(Y^j_1,\ldots,Y^j_{\ell_j}))$.
  Importantly we know that Duplicator replies a vertex of $Y^j_h$ to a vertex of $X^j_h$ played by Spoiler, since otherwise the \emph{local} unary relation $U_h$ over $G^j$ contradicts the partial isomorphism ensured by $(G^j,(X^j_1,\ldots,X^j_{\ell_j})) \eqmso{q} (H^j,(Y^j_1,\ldots,Y^j_{\ell_j}))$. 
  (Duplicator also remembers that moves $(x_s,y_s)$ have been added to the local $j$-th game, in case more moves are played there.)
  If Spoiler plays a set $S_p \subseteq V(G)$, Duplicator considers all the non-empty sets $S_p \cap V(G^j)$ (for $j \in [k]$) and replies $T_p := \bigcup_j A_j$ where $A_j$ is the local answer to $S_p \cap V(G^j)$.
  Duplicator builds similarly an answer $T_p \subseteq V(H)$ to a move $S_p \subseteq V(G)$ by Spoiler.
  
  Since $(G^j,(X^j_1,\ldots,X^j_{\ell_j})) \eqmso{q} (H^j,(Y^j_1,\ldots,Y^j_{\ell_j}))$, the projection of the mapping $x_s \mapsto y_s$ (for $s$ indexing the vertex moves) onto $G^j, H^j$ is a partial isomorphism between the two corresponding local structures.
  Since there is the same (black) graph $B$ on the parts of $\mathcal P$, as $\iota(B)$ on the parts of $\mathcal Q$, there is an edge in $G$ between $x_s \in X^j_h$ and $x_{s'} \in X^{j'}_{h'}$ if and only if there is an edge in $H$ between $y_s \in X^j_h$ and $x_{s'} \in X^{j'}_{h'}$.
  For every $s$ and $p$, $x_s \in S_p$ if and only if $y_s \in T_p$ otherwise the moves $(x_s,y_s)$ and $(S_p \cap V(G^j),T_p \cap V(H^j))$, played in some order, where $x_s \in V(G^j)$ would make Duplicator lose the local $j$-th game.  
  Finally since the parts $X, X'$ and $Y, Y'$ are homologous (under the bijection $\iota$), $x_s \in X$ (resp.~$x_s \in X'$) if and only if $y_s \in Y$ (resp.~$y_s \in Y'$).
  Otherwise we already observed that $(x_s,y_s)$ would have been a losing pair of moves for Duplicator in the corresponding local game.
  Thus the mapping $x_s \mapsto y_s$ (for $s$ indexing the vertex moves) is a partial isomorphism between $(G,\mathcal P^o,X,X')$ and $(H,\mathcal Q^o,Y,Y')$.
\end{proof}

By~\cref{lem:ef-mso}, we have just established that the local partitioned type of a new red component $C'$ obtained by the merge of two parts $X,X'$ is function of the local partitioned type of every component $C_1, \ldots, C_\ell$ ending up in $C'$ after the contraction, the contracted pair $(X,X')$, and the transversal black edges (bicliques) linking some pairs of parts in two distinct $C_i$'s.

The crucial place where the upper bound~$d$ on the \local twin-width comes into play is in the time that the update from $\left(\ltp{q,d}(G,\mathcal P^o_{i+1},C')\right)_{C'}$ to $\left(\ltp{q,d}(G, \mathcal P^o_i,C)\right)_C$ takes.
Let $Z \in \mathcal P_i$ be the result of the merge of the two parts $X,X' \in \mathcal P_{i+1}$.
Since all the red components have size at most~$d$, the set $Z$ is in a red component with a set~$\mathcal Q$ of at most $d-1$ other parts of~$\mathcal P_i$. 
The black edges in $G/\mathcal P_{i+1}$ on the vertex subset $\mathcal Q' := \{X,X'\} \cup \mathcal Q$, the pair of parts $(X,X')$, and the local partitioned type of each red component within~$\mathcal Q'$, account for less than $2^{{d+1 \choose 2}}{d+1 \choose 2}(d+1)f(q)$ outcomes, for some function~$f$.
Thus the transition table giving the new local partitioned type can be precomputed in time depending only on~$d$ and~$q$.
(The red components of $G/\mathcal P_i$ not containing $Z$ do not need an update.)
Treating~$d$ and~$q$ as constant, the update takes constant time, and the overall algorithm, $O(n)$ time.

If the partition sequence is not given with the input graph, we rely on an algorithm approximating rank-width~\cite{Oum08} to find the sequence.
Our proof looks like the original one by Courcelle, Makowsky, Rotics~\cite{Courcelle00}, except it does not need to use transductions to deal with the label-joins and relabelings of the clique-width expression.
Instead everything is concentrated in~\cref{lem:merge-game}, a statement similar to the Feferman-Vaught theorem~\cite{Feferman67}, which is invoked in~\cite{Courcelle00} to handle the disjoint union of two labeled graphs.

It is known that every MSO transduction of a class of bounded clique-width has itself bounded clique-width~\cite{Courcelle12}.
        This fact alternatively follows from the above algorithm combined with the same arguments as used to show that any FO transduction of a class of bounded twin-width has bounded twin-width~\cite{twin-width1}.  

\medskip

\textbf{Extensions.}
This approach can also be used to the solve optimization versions of MSO.
Say, we want to maximize the size of a set $S \subseteq V(G)$ such that $G \models \psi(S)$ for some formula $\psi(X)$ with one free set variable and quantifier rank~$q$.
We now consider the \emph{local MSO rank-$q$ partitioned (0,1)-type} (same as before but for formulas with one free set variable), or \emph{local partitioned 1-type}, for short:
$$\ltp{q,d}(G,\mathcal P^o,C,S) := \{\varphi(X) \in \text{MSO}_{\{E,U_1,\ldots,U_d\}}[q] : S \subseteq C, (G[C],\mathcal P^o[C]) \models \varphi(S)\}.$$
The number of distinct such types is again upperbounded by a function of $d$ and $q$ only, although $S$ may be any subset on up to $|V(G)|$ vertices.
Thus, any fixed triple $(G,\mathcal P^o,C)$ defines an equivalence relation $S \sim S' \Leftrightarrow \ltp{q,d}(G,\mathcal P^o,C,S) = \ltp{q,d}(G,\mathcal P^o,C,S')$ with $g(d,q)$ classes for some computable function $g$.
For every partitioned graph $(G[C],\mathcal P^o[C])$ induced by a red component, we keep only one representative $S^\tau_{\max} \subseteq C$ per type $\tau = \ltp{q,d}(G,\mathcal P^o,C,S^\tau_{\max})$ such that for every $S \subseteq C$, $S \sim S^\tau_{\max}$ implies $|S| \leqslant |S^\tau_{\max}|$.
That is, for each local partitioned 1-type associated to $(G,\mathcal P^o,C)$, we keep a largest subset of~$C$ realizing that type.

The initialization is still straightforward.
The updates merge maximum-cardinality vertex subsets each realizing some prescribed type, indeed producing a largest vertex subset for the resulting type.
The proof follows again from \cref{lem:merge-game}, where one now adds a unary relation which may arbitrarily overlap with the partition $\mathcal P^o$. 

Finally, any finitary logic for which~\cref{lem:merge-game} still holds, like CMSO (i.e., MSO with additional modular counting predicates, interpreted as ``the size of a set is $p$ modulo $q$'' for some fixed integers $p, q$), also admits a tractable model checking algorithm following the same arguments.  


\subsection{Simpler algorithm for a particular MSO$_1$ problem: \textsc{$q$-Coloring}}\label{subsec:q-coloring}

Like for tree-width and clique-width, one can design more practical algorithms for particular MSO-expressible problems, when the \local twin-width is bounded, still utilizing the viewpoint of contraction sequences.
This gives rise to a different dynamic-programming scheme than the one on tree-decompositions or on clique-width expressions.
It comes naturally \emph{positively-instance driven}~\cite{Tamaki19}, that is, generating only positive subproblems.
This is known to have significantly sped up some exact algorithms, as the computation of tree-width and tree-decompositions (see again the work of Tamaki~\cite{Tamaki19}).
The approach by contraction sequences has other advantages that we will list after we give a particular example.

\begin{theorem}\label{thm:q-coloring}
  Given an $n$-vertex graph $G$ and a contraction sequence $G=G_n, \ldots, G_1$ witnessing that its \local twin-width is at most~$d$, the \textsc{$q$-Coloring} problem can be solved in time $O((2^q-1)^d d^2 n)$.
\end{theorem}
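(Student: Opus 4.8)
The plan is to run a dynamic program along the contraction sequence $G=G_n,\dots,G_1$ that maintains, for every red component, the family of colorings it can realize, recorded only through which colors appear on each of its parts. For a red component $D$ of $G_i$ with parts $P_1,\dots,P_k$ (so $k\le d$, since the \local twin-width is at most $d$) covering the vertex set $C=P_1\cup\dots\cup P_k$, define its \emph{table}
\[
  \mathcal T(D)\ :=\ \bigl\{(S_1,\dots,S_k)\ :\ G[C]\text{ has a proper $q$-coloring }\chi\text{ with }\{\chi(v):v\in P_j\}=S_j\text{ for all }j\bigr\},
\]
a set of $k$-tuples of nonempty subsets of $[q]$, hence $|\mathcal T(D)|\le (2^q-1)^d$. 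The program is initialized on $G_n$, where each red component is a single vertex of $G$ and its table is $\{(\{c\}):c\in[q]\}$, and it finishes on $G_1$, whose only part is $V(G)$; so $G$ is $q$-colorable if and only if the (single) table of $G_1$ is nonempty, which is what we return.

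The conceptual heart is a homogeneity remark. If $D$ and $D'$ are \emph{distinct} red components of the same trigraph $G_i$, then every part $P$ of $D$ and every part $P'$ of $D'$ form a homogeneous pair in $G$ --- otherwise $PP'$ would be a red edge of $G_i$, forcing $D=D'$. Hence the edges of $G$ between the ground sets $C$ and $C'$ form a disjoint union of bicliques, one for each black edge of $G_i$ joining a part of $D$ to a part of $D'$. Iterating over pairwise distinct components $D_1,\dots,D_m$ with ground sets $C_1,\dots,C_m$, a map $\chi\colon\bigcup_a C_a\to[q]$ is a proper coloring of $G[\bigcup_a C_a]$ precisely when each restriction $\chi|_{C_a}$ is a proper coloring of $G[C_a]$ and, for every black edge of $G_i$ between a part $P$ of some $D_a$ and a part $P'$ of some $D_b$ with $a\ne b$, the color sets $\chi(P)$ and $\chi(P')$ are disjoint. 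So joint colorability is captured by a compatibility predicate phrased purely in terms of the tables.

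Now the transition $G_{i+1}\to G_i$: contracting parts $X,X'$ into a new part $Z$. Every red component survives untouched except the at most $d+1$ old components $D_1,\dots,D_m$ that contain $X$, $X'$, or a part forming a non-homogeneous pair with $Z$ (there are at most $d-1$ parts of the last kind, since $Z$ has red degree at most $d-1$ in $G_i$); these fuse into one new component $D$ whose parts are $Z$ together with the parts of the $D_a$'s other than $X$ and $X'$, whence the $D_a$'s have at most $d+1$ parts in total. Since re-partitioning does not change the graph $G[C]$ on the fused ground set $C=\bigcup_a C_a$, the previous paragraph yields
\[
  \mathcal T(D)\ =\ \bigl\{\mathrm{fold}_{X,X'\to Z}(\tau_1\cup\dots\cup\tau_m)\ :\ \tau_a\in\mathcal T(D_a)\text{ for all }a,\ (\tau_a)_a\text{ pairwise black-edge compatible}\bigr\},
\]
where $\mathrm{fold}$ replaces the $X$- and $X'$-coordinates by a single $Z$-coordinate holding their union. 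To compute this within budget I would incorporate $D_1,\dots,D_m$ one at a time, performing the $\mathrm{fold}$ as soon as both $X$ and $X'$ have been brought in, so that every intermediate table is indexed by at most $d$ parts and therefore has at most $(2^q-1)^d$ rows, and each incorporation step scans such a table against the $O(d^2)$ black edges joining the new parts to the already-incorporated ones. This costs $O((2^q-1)^d d^2)$ per contraction, hence $O((2^q-1)^d d^2 n)$ in total; correctness is a direct induction on $i$ from the homogeneity remark.

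The main obstacle is making the second and third paragraphs mesh: one must be confident that the per-part color sets form a complete interface --- no finer information about a red component's colorings is ever consulted, which is exactly what the homogeneity remark buys --- and then organize the update so that fusing up to $d+1$ old components and folding the contracted pair never grows a working table past $(2^q-1)^d$ rows; this is where the ``fold as early as possible'' ordering is needed and where the precise polynomial-in-$d$ accounting has to be done with some care. Everything else (initial and final tables, and the elementary verification that $\mathrm{fold}$ of compatible tuples gives exactly $\mathcal T(D)$) is routine.
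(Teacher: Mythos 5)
Your proposal is correct and follows essentially the same approach as the paper: your ``tables'' indexed by parts of a red component are exactly the paper's ``$q$-coloring profiles,'' your homogeneity remark (edges between distinct red components of $G_i$ are unions of bicliques, one per black $G_i$-edge) is the same correctness invariant, and the update rule---enumerate tuples of old profiles, filter by black-edge compatibility, and merge the $X,X'$ coordinates into a $Z$-coordinate---matches the paper's step for step. The only delta is your ``fold as early as possible'' ordering to keep intermediate tables at $\le d$ parts; the paper simply takes the product over all $d' \le d+1$ old components (which gives $\le(2^q-1)^{d+1}$ tuples to scan and is arguably a factor $(2^q-1)$ away from the stated bound), so if anything your bookkeeping is slightly more careful than the paper's own.
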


\begin{proof}
If $C \subseteq V(G_i)$ is a red component, that is, a connected component in the red graph of $G_i$, we denote by $C(G)$ the set $\bigcup_{u \in C} u(G)$.
A~\emph{$q$-coloring profile} (or \emph{profile} for short) of $C$ is a function $\gamma: V(C) \to 2^{[q]} \setminus \{\emptyset\}$ such that there is a proper $q$-coloring $c$ of $G[C(G)]$ satisfying, for every $u \in C$, that $c(u(G))=\gamma(u)$.
Thus $\gamma$ gives the exact set of colors used by a (contracted) vertex of the red component.
We will maintain for each red component $C$ the complete set of profiles of $C$.

\medskip

\textbf{Description of the algorithm.} Initially in $G_n$, there are $n$ red components isomorphic to the 1-vertex graph.
Thus for each $u \in V(G)$, we store the set of profiles $\{u \mapsto \{1\}, u \mapsto \{2\}, \ldots, u \mapsto \{q\}\}$.
This corresponds to the $q$ ways a vertex can be colored.
Eventually in $G_1$, if there is a profile for the unique red component (again a 1-vertex graph), it means that $G$ is $q$-colorable.

We shall just update the profiles as the red components evolve.
Let $u, v$ be the vertices contracted into a vertex $z$ when going from $G_{i+1}$ to $G_i$.
Let $C$ be the red component of $G_i$ containing $z$, and $C_1, \ldots C_{d'}$ be the red components in $G_{i+1}$ such that $C = \left( \bigcup_{j \in [d']} C_j \setminus \{u,v\} \right) \cup \{z\}$.
Since $|C| \leqslant d$, it holds that $|\bigcup_{j \in [d']} C_j| \leqslant d+1$, and in particular $d' \leqslant d+1$.
Say that $u \in C_a$ and $v \in C_b$.

The update only consists of computing a set of profiles for $C$ (and destroying the set of profiles of $C_1, \ldots, C_{d'}$).
For every $\gamma_1, \ldots, \gamma_{d'}$ in the profile set of $C_1, \ldots, C_{d'}$, respectively, we check in time $O(d^2)$ is there is a black edge between a pair $x \in C_j, y \in C_{j'}$ with $\gamma_j(x) \cap \gamma_{j'}(y) \neq \emptyset$.
If there is no such edge, we add the corresponding union profile $\gamma$ to the profile set of $C$, i.e., $\gamma(z) = \gamma_a(u) \cup \gamma_b(v)$, and $\gamma(x) = \gamma_j(x)$ if $x \neq z$ and $x \in C_j$.
This finishes the description of the algorithm (see~\cref{fig:3-coloring}).

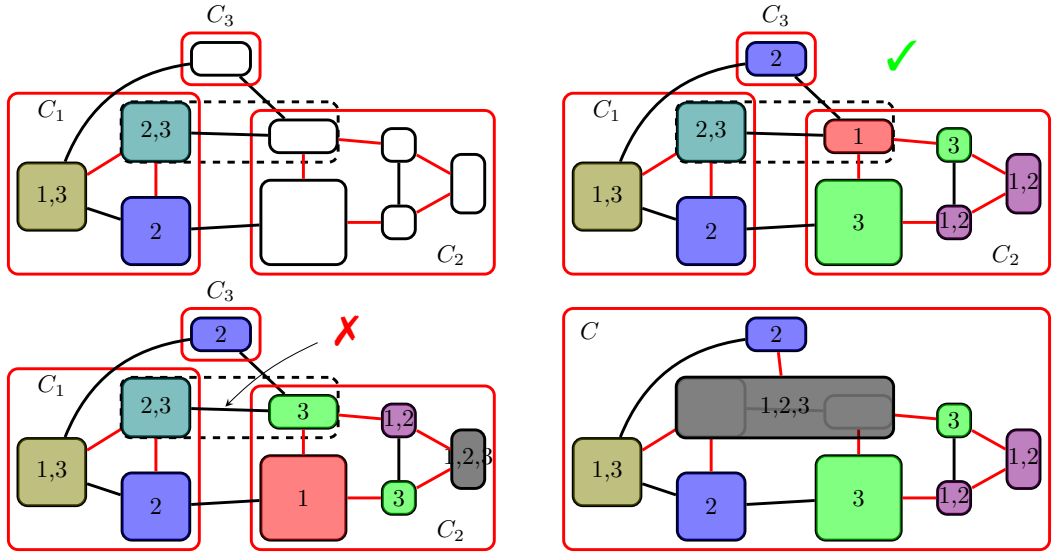
\begin{figure}[h!]
  \centering
  \resizebox{400pt}{!}{
    \begin{tikzpicture}
\def\s{0.5}
\def\op{0.2}

\foreach \i/\j/\l in {0/0/1,1/0/2,0/1/3,1/1/4, 0/3/5,1/3/6,0.5/3.75/7, -2/1/8,-2.5/1/9,-3/1/10,-2/1.5/11,-2.5/1.5/12,-3/1.5/13,-2/2/14,-2.5/2/15,-3/2/16,
 4/0/17,4.75/0/18,5.5/0/19, 4/0.75/20,5.5/0.75/21, 4/1.5/22,4.75/1.5/23,5.5/1.5/24,
 4.25/3.25/25,5.25/3.25/26,
 7.5/0.75/27,7.5/3/28,
 9.5/1.5/29,9.5/2.25/30,
 2/5.5/31,2.75/5.5/32}{
  \node[circle,opacity=\op,inner sep=0.08cm] (v\l) at (\s * \i,\s * \j) {} ;
}

\foreach \i/\l in {{(v1) (v2) (v3) (v4)}/a1,{(v5) (v6) (v7)}/a2,{(v8) (v16)}/a3,{(v17) (v24)}/a4,{(v25) (v26)}/a5,{(v27)}/a6,{(v28)}/a7,{(v29) (v30)}/a8,{(v31) (v32)}/a9}{
\node[draw,very thick, rounded corners, fit=\i] (\l) {} ;
}
\draw[very thick] (a5) -- (a9) ;
\foreach \i/\l in {{(a2) (a5)}/x}{
\node[draw, very thick, dashed, rounded corners, inner sep=-0.008cm, fit=\i] (\l) {} ;
}
\foreach \i/\l in {{(a1) (a2) (a3)}/c1, {(a4) (a5) (a6) (a7) (a8)}/c2, {(a9)}/c3}{
\node[draw, very thick, red, rounded corners, fit=\i] (\l) {} ;
}

\node at (-2.5 * \s, 4 * \s) {$C_1$} ;
\node at (9 * \s, -0.2 * \s) {$C_2$} ;
\node at (2.37 * \s, 6.75 * \s) {$C_3$} ;

\foreach \i/\c/\l in {{(v1) (v2) (v3) (v4)}/blue/2,{(v5) (v6) (v7)}/green!50!blue/{2,3},{(v8) (v16)}/green!50!red/{1,3}}{
\node[fill opacity=0.5,fill=\c,rounded corners, fit=\i] {} ;
\node[rounded corners, fit=\i,label=center:\l] {} ;
}

\foreach \i/\j/\c in {a1/a2/red,a2/a3/red,a1/a3/black,a4/a5/red,a1/a4/black,a2/a5/black,a6/a7/black,a4/a6/red,a5/a7/red,a6/a8/red,a7/a8/red}{
\draw[very thick, \c] (\i) -- (\j) ;
}
\draw[very thick, black] (a3) to [bend left=30] (a9) ;

\begin{scope}[xshift=8cm]
\node[circle] (cm) at (3,2.8) {\textcolor{green}{\LARGE{\cmark}}} ;

\foreach \i/\j/\l in {0/0/1,1/0/2,0/1/3,1/1/4, 0/3/5,1/3/6,0.5/3.75/7, -2/1/8,-2.5/1/9,-3/1/10,-2/1.5/11,-2.5/1.5/12,-3/1.5/13,-2/2/14,-2.5/2/15,-3/2/16,
 4/0/17,4.75/0/18,5.5/0/19, 4/0.75/20,5.5/0.75/21, 4/1.5/22,4.75/1.5/23,5.5/1.5/24,
 4.25/3.25/25,5.25/3.25/26,
 7.5/0.75/27,7.5/3/28,
 9.5/1.5/29,9.5/2.25/30,
 2/5.5/31,2.75/5.5/32}{
  \node[circle,opacity=\op,inner sep=0.08cm] (v\l) at (\s * \i,\s * \j) {} ;
}

\foreach \i/\l in {{(v1) (v2) (v3) (v4)}/a1,{(v5) (v6) (v7)}/a2,{(v8) (v16)}/a3,{(v17) (v24)}/a4,{(v25) (v26)}/a5,{(v27)}/a6,{(v28)}/a7,{(v29) (v30)}/a8,{(v31) (v32)}/a9}{
\node[draw,very thick, rounded corners, fit=\i] (\l) {} ;
}
\draw[very thick] (a5) -- (a9) ;
\foreach \i/\l in {{(a2) (a5)}/x}{
\node[draw, very thick, dashed, rounded corners, inner sep=-0.008cm, fit=\i] (\l) {} ;
}
\foreach \i/\l in {{(a1) (a2) (a3)}/c1, {(a4) (a5) (a6) (a7) (a8)}/c2, {(a9)}/c3}{
\node[draw, very thick, red, rounded corners, fit=\i] (\l) {} ;
}

\node at (-2.5 * \s, 4 * \s) {$C_1$} ;
\node at (9 * \s, -0.2 * \s) {$C_2$} ;
\node at (2.37 * \s, 6.75 * \s) {$C_3$} ;

\foreach \i/\c/\l in {{(v1) (v2) (v3) (v4)}/blue/2,{(v5) (v6) (v7)}/green!50!blue/{2,3},{(v8) (v16)}/green!50!red/{1,3},
{(v17) (v24)}/green/3,{(v25)(v26)}/red/1,{(v27)}/blue!50!red/{1,2},{(v28)}/green/3, {(v29) (v30)}/blue!50!red/{1,2}, {(v31) (v32)}/blue/2}{
\node[fill opacity=0.5,fill=\c,rounded corners, fit=\i] {} ;
\node[rounded corners, fit=\i,label=center:\l] {} ;
}

\foreach \i/\j/\c in {a1/a2/red,a2/a3/red,a1/a3/black,a4/a5/red,a1/a4/black,a2/a5/black,a6/a7/black,a4/a6/red,a5/a7/red,a6/a8/red,a7/a8/red}{
\draw[very thick, \c] (\i) -- (\j) ;
}
\draw[very thick, black] (a3) to [bend left=30] (a9) ;
\end{scope}

\begin{scope}[yshift=-4cm]
\node[circle] (cr) at (3,2.8) {\textcolor{red}{\LARGE{\xmark}}} ;
\draw[thin,-stealth] (cr) to [bend left=-10] ++(-1.75,-1.1) ;

\foreach \i/\j/\l in {0/0/1,1/0/2,0/1/3,1/1/4, 0/3/5,1/3/6,0.5/3.75/7, -2/1/8,-2.5/1/9,-3/1/10,-2/1.5/11,-2.5/1.5/12,-3/1.5/13,-2/2/14,-2.5/2/15,-3/2/16,
 4/0/17,4.75/0/18,5.5/0/19, 4/0.75/20,5.5/0.75/21, 4/1.5/22,4.75/1.5/23,5.5/1.5/24,
 4.25/3.25/25,5.25/3.25/26,
 7.5/0.75/27,7.5/3/28,
 9.5/1.5/29,9.5/2.25/30,
 2/5.5/31,2.75/5.5/32}{
  \node[circle,opacity=\op,inner sep=0.08cm] (v\l) at (\s * \i,\s * \j) {} ;
}

\foreach \i/\l in {{(v1) (v2) (v3) (v4)}/a1,{(v5) (v6) (v7)}/a2,{(v8) (v16)}/a3,{(v17) (v24)}/a4,{(v25) (v26)}/a5,{(v27)}/a6,{(v28)}/a7,{(v29) (v30)}/a8,{(v31) (v32)}/a9}{
\node[draw,very thick, rounded corners, fit=\i] (\l) {} ;
}
\draw[very thick] (a5) -- (a9) ;
\foreach \i/\l in {{(a2) (a5)}/x}{
\node[draw, very thick, dashed, rounded corners, inner sep=-0.008cm, fit=\i] (\l) {} ;
}
\foreach \i/\l in {{(a1) (a2) (a3)}/c1, {(a4) (a5) (a6) (a7) (a8)}/c2, {(a9)}/c3}{
\node[draw, very thick, red, rounded corners, fit=\i] (\l) {} ;
}

\node at (-2.5 * \s, 4 * \s) {$C_1$} ;
\node at (9 * \s, -0.2 * \s) {$C_2$} ;
\node at (2.37 * \s, 6.75 * \s) {$C_3$} ;

\foreach \i/\c/\l in {
{(v1) (v2) (v3) (v4)}/blue/2,{(v5) (v6) (v7)}/green!50!blue/{2,3},{(v8) (v16)}/green!50!red/{1,3},
{(v17) (v24)}/red/1,{(v25) (v26)}/green/3,{(v27)}/green/3,{(v28)}/blue!50!red/{1,2}, {(v29) (v30)}/black/{1,2,3}, {(v31) (v32)}/blue/2}{
\node[fill opacity=0.5,fill=\c,rounded corners, fit=\i] {} ;
\node[rounded corners, fit=\i,label=center:\l] {} ;
}

\foreach \i/\j/\c in {a1/a2/red,a2/a3/red,a1/a3/black,a4/a5/red,a1/a4/black,a2/a5/black,a6/a7/black,a4/a6/red,a5/a7/red,a6/a8/red,a7/a8/red}{
\draw[very thick, \c] (\i) -- (\j) ;
}
\draw[very thick, black] (a3) to [bend left=30] (a9) ;
\end{scope}

\begin{scope}[yshift=-4cm, xshift=8cm]
\foreach \i/\j/\l in {0/0/1,1/0/2,0/1/3,1/1/4, 0/3/5,1/3/6,0.5/3.75/7, -2/1/8,-2.5/1/9,-3/1/10,-2/1.5/11,-2.5/1.5/12,-3/1.5/13,-2/2/14,-2.5/2/15,-3/2/16,
 4/0/17,4.75/0/18,5.5/0/19, 4/0.75/20,5.5/0.75/21, 4/1.5/22,4.75/1.5/23,5.5/1.5/24,
 4.25/3.25/25,5.25/3.25/26,
 7.5/0.75/27,7.5/3/28,
 9.5/1.5/29,9.5/2.25/30,
 2/5.5/31,2.75/5.5/32}{
  \node[circle,opacity=\op,inner sep=0.08cm] (v\l) at (\s * \i,\s * \j) {} ;
}

\foreach \i/\l/\op in {{(v1) (v2) (v3) (v4)}/a1/1,{(v5) (v6) (v7)}/a2/0.2,{(v8) (v16)}/a3/1,{(v17) (v24)}/a4/1,{(v25) (v26)}/a5/0.2,{(v27)}/a6/1,{(v28)}/a7/1,{(v29) (v30)}/a8/1,{(v31) (v32)}/a9/1}{
\node[opacity=\op,draw,very thick, rounded corners, fit=\i] (\l) {} ;
}

\foreach \i/\c/\l in {{(v1) (v2) (v3) (v4)}/blue/2,{(v8) (v16)}/green!50!red/{1,3},
{(v17) (v24)}/green/3,{(v27)}/blue!50!red/{1,2},{(v28)}/green/3, {(v29) (v30)}/blue!50!red/{1,2}, {(v31) (v32)}/blue/2}{
\node[fill opacity=0.5,fill=\c,rounded corners, fit=\i] {} ;
\node[rounded corners, fit=\i,label=center:\l] {} ;
}

\foreach \i/\j/\c in {a1/a2/red,a2/a3/red,a1/a3/black,a4/a5/red,a1/a4/black,a2/a5/{black!20!white},a6/a7/black,a4/a6/red,a5/a7/red,a6/a8/red,a7/a8/red}{
\draw[very thick, \c] (\i) -- (\j) ;
}
\draw[very thick, black] (a3) to [bend left=30] (a9) ;

\node[draw,very thick,fill opacity=0.5,fill=black,rounded corners, inner sep=-0.005cm, fit=(a2) (a5)] (z) {} ;
\node[rounded corners, inner sep=-0.005cm, fit=(a2) (a5),label=center:{1,2,3}] () {} ;
\draw[very thick,red] (z) -- (a9) ;
\node[draw, very thick, red, rounded corners, fit=(a1) (a2) (a3) (a4) (a5) (a6) (a7) (a8) (a9)] {} ;
\node at (-3 * \s, 5.6 * \s) {$C$} ;
\end{scope}
\end{tikzpicture}
  }
  \caption{Illustration of $q$-coloring profiles for $q=3$, with 1 in red, 2 in blue, 3 in green, and update of the profiles when the two vertices in the dashed box are contracted. Example of a $q$-profile of $C_1$ (top-left), ``compatible'' set of profiles of $C_1, C_2, C_3$ (top-right) giving rise to a union profile of~$C$ (bottom-right), and ``incompatible'' set of profiles of $C_1, C_2, C_3$ (bottom-left).
   }
\label{fig:3-coloring}
\end{figure}

\medskip

\textbf{Correctness and running time.}
The correctness comes from the invariant that every red component is associated to its set of profiles.
Indeed a black edge between $xy \in E(G_i)$ means that there is a biclique between $x(G)$ and $y(G)$, thus these sets cannot use a shared color.
The running time is as indicated since $$\prod_{j \in [d']} \#\text{profile}(C_j) \leqslant \prod_{j \in [d']} (2^q-1)^{|C_j|}=(2^q-1)^{\sum\limits_{j \in [d']}|C_j|} = (2^q-1)^{d+1},$$
where $\#\text{profile}(C_j)$ is the number of profiles of red component $C_j$.
To actually compute a coloring, one can simply augment profiles with one representative coloring.
\end{proof}

\textbf{Advantages.} Assuming the SETH,\footnote{For Strong Exponential Time Hypothesis, the assumption that for every $\varepsilon > 0$, there is an integer $k$ such that $n$-variable \textsc{$k$-SAT} cannot be solved in time $(2-\varepsilon)^n$ by a classical algorithm.} this new approach will for instance not improve the theoretically best algorithm for \textsc{$q$-Coloring} parameterized by clique-width, since Lampis showed that running time $O^*((2^q-2)^{\text{cw}})$ is achievable and essentially optimal~\cite{Lampis20}.
However our algorithm presents some practical advantages.

The first remarkable feature is its simplicity.
Contrary to dynamic programming on clique-width expressions, which has to deal with unions, joins, and relabelings (or tree-decompositions with their forget, introduce, and join internal nodes), we have only one operation to handle: the contraction of two vertices, where all optimization efforts can be invested.
We have only $n-1$ operations in total, while tree-decompositions and clique-width parse trees typically have $O(n)$ nodes, incurring a multiplicative overhead.

We do not maintain partial solutions that turn out to be locally infeasible.
When a red component $C$ has at least one profile, we know that $G[C(G)]$ is $q$-colorable.
On the contrary, in the usual algorithm parameterized by clique-width, a join between two labels sharing at least one color can happen long after the corresponding vertices were introduced.
This causes to maintain a lot of unnecessary partial solutions.

\section{Oriented twin-width}\label{sec:orientedtww}

Oriented twin-width is ``fairer'' than twin-width in the following sense: In the partition sequence, when merging two parts $X,Y$ of $\mathcal P_{i+1}$ to form $\mathcal P_i$, the only red arcs of $D_i$ that are created are directed from $X \cup Y$.
Indeed, if $Z$ is homogeneous to $X$ and to $Y$, it is also homogeneous to $X \cup Y$.
Thus any error due to a contraction is only attributed to the contracted vertices and does not wildly spread to their neighbors.
This locality of error makes one's life much easier when designing partition sequences.
Let us illustrate why.

Given a graph $G$ and two non necessarily adjacent vertices $x,y$, we denote by $G/\{x,y\}$ the graph obtained by contracting $x,y$ into one vertex $\{x,y\}$ and joining it to all neighbors of $x$ and $y$ in $G$.
We say that a class $\mathcal C$ of graphs is \emph{$d$-contractible} if for every graph $G$ of $\mathcal C$ there are two vertices $x,y$ such that $G/\{x,y\}$ is also in $\mathcal C$ and is such that the degree of $\{x,y\}$ in the resulting graph is at most $d$.
For instance the following lemma due to Norine et al.~\cite{Norine06} implies that $K_t$-minor free graphs are $2^{\Tilde{O}(t)}$-contractible, where $\Tilde{O}(t) = O(t) \log^k (t)$ for some fixed $k$. 

\begin{lemma}[Lemma 2.2. in \cite{Norine06}]\label{lem:minor-free}
	Let $G$ be a $K_t$-minor free graph, for some integer $t \geq 3$.
	Then there are two vertices $u, v \in V(G)$, both of degree~$2^{\Tilde{O}(t)}$, that are either false twins or adjacent.
\end{lemma}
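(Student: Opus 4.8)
The plan is to recover this lemma of Norine, Seymour, Thomas, and Wollan by an elementary extremal/combinatorial argument, and I sketch such a proof. Set $D := 2^{\tilde O(t)}$ for the target degree bound, with the implicit constants fixed along the way, and assume $|V(G)| \ge 2$. First I would invoke the Kostochka--Thomason theorem: a $K_t$-minor-free graph, and hence every subgraph of $G$, has average degree $O(t\sqrt{\log t})$. Consequently at least $|V(G)|/2$ vertices of $G$ have degree at most $D_0 := c\,t\sqrt{\log t}$ for a suitable absolute constant $c$; call this set $L$ (the \emph{light} vertices) and put $R := V(G)\setminus L$. Two trivial cases are disposed of immediately: if $G[L]$ contains an edge, its two endpoints are adjacent and of degree at most $D_0 \le D$; and if $|V(G)| \le D$, then any edge of $G$ works, or, should $G$ be edgeless, any two vertices are false twins of degree $0$. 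So from now on $L$ is an independent set and $|V(G)| > D$.

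In this remaining regime every vertex of $L$ has all of its (at most $D_0$) neighbours inside $R$, so $\{N(u) : u \in L\}$ is a family of subsets of $R$, each of size at most $D_0$. If two of these neighbourhoods coincide, the corresponding pair of light vertices are false twins of degree at most $D_0 \le D$ and we are done; so assume all of them (there are $|L| \ge |V(G)|/2 > D/2$) are pairwise distinct. The Erd\H{o}s--Rado sunflower lemma then applies: a family of more than $D_0!\,(t-1)^{D_0}$ distinct sets of size at most $D_0$ contains a sunflower with $t$ petals. Since $D_0!\,(t-1)^{D_0} = 2^{O(D_0\log D_0)} = 2^{\tilde O(t)}$, fixing the constant in $D := 2^{\tilde O(t)}$ large enough guarantees $D/2 > D_0!\,(t-1)^{D_0}$, and so we obtain light vertices $u_1,\dots,u_t$ with $N(u_i) = Y \cup P_i$, where the petals $P_i$ are non-empty and pairwise disjoint and $Y$ (the core) is fixed. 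Note that the factorial here is the only place where a super-polynomial-in-$t$ loss is incurred, which is exactly what produces the bound $2^{\tilde O(t)}$ rather than a polynomial one.

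It remains to derive a contradiction from this sunflower, i.e.\ to exhibit a $K_t$-minor of $G$, and this is the step I expect to be the real obstacle: a bare sunflower is a ``star-like'' object and does not by itself contain a clique minor when its core is empty or small. I would handle it by refining the previous step---choosing, among the huge family of distinct neighbourhoods, a sunflower whose core and petals have prescribed sizes, and then combining the core $Y$, the private petals $P_i$, and short connections inside $G[R]$ (whose sparsity is again controlled via Kostochka--Thomason) into $t$ pairwise-adjacent branch sets; alternatively one first passes to a bounded-size ``kernel'' in $R$ hit by many light vertices and argues there. Getting this branch-set construction right, while keeping the overall bound at $2^{\tilde O(t)}$, is the crux; once a $K_t$-minor is produced the hypothesis is contradicted, so the independent/distinct-neighbourhood configuration cannot occur, and one of the two easy cases above always applies.
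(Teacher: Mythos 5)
This lemma is not proved in the paper: it is quoted as Lemma~2.2 of Norine, Seymour, Thomas, and Wollan~\cite{Norine06} and used as a black box, so there is no in-paper proof to compare against; your proposal has to be judged on its own terms against the cited source.

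Your opening steps are sound and standard: Kostochka--Thomason gives average degree $O(t\sqrt{\log t})$, so at least half the vertices are ``light''; an edge inside the light set $L$ or a repeated neighbourhood inside $L$ finishes the proof; otherwise $L$ is independent with pairwise distinct neighbourhoods in $R$. The problem is exactly where you flag it, and it is a genuine gap rather than a technicality. The Erd\H{o}s--Rado sunflower lemma gives no lower bound whatsoever on the core size; in particular the core $Y$ may be empty, in which case the $u_i$ together with their disjoint petals $P_i$ form vertex-disjoint stars containing no $K_t$-minor at all (they need not even lie in the same connected component once you look at the induced subgraph on $\bigcup_i(\{u_i\}\cup P_i)$). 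The sunflower structure is genuinely useful only when the core is large: if $|Y|\geq t$ one gets a $K_{t,t}$ between $\{u_1,\dots,u_t\}$ and a $t$-subset of $Y$, and $K_{t,t}$ does contain a $K_t$-minor (pair $a_i$ with $b_i$), so that case would contradict $K_t$-minor-freeness. But the sunflower lemma does not let you steer towards a large core, and the disjoint-neighbourhood regime (empty core, many petals) is entirely consistent with being $K_t$-minor-free, e.g.\ a large matching. So the reduction to ``exhibit a $K_t$-minor from the sunflower'' is not a loose end to be tidied but the whole content of the lemma, and as written the argument does not close. Your fallback remark about ``passing to a bounded-size kernel in $R$'' points in a plausible direction, but it is a sketch of a sketch. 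The actual proof in~\cite{Norine06} does not go through the sunflower lemma; it relies on a more targeted structural bound on the number of distinct small neighbourhoods that a $K_t$-minor-free graph can realize into a fixed vertex set, and that bound is precisely what you would need to supply here.
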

Moreover, by a direct application of the discharging method, Kotzig~\cite{Kotzig55} could show that planar graphs are $9$-contractible (and the bound is attained by the so-called \emph{stellated icosahedron}).

\begin{lemma}\label{lem:seq-bd-deg}
	For every integer $d$, every $d$-contractible class of graphs $\mathcal C$ has oriented twin-width at most $d$.
\end{lemma}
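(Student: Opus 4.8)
The plan is to take an arbitrary $G \in \mathcal{C}$ on $n$ vertices and build a partition sequence $\mathcal{P}_n, \dots, \mathcal{P}_1$ by simply replaying the contractions guaranteed by $d$-contractibility, carrying along the invariant that the \emph{total graph} of the quotient trigraph $G/\mathcal{P}_i$ — the graph on the parts of $\mathcal{P}_i$ in which two parts are joined exactly when $G$ has an edge between them — lies in $\mathcal{C}$. First I would note that this invariant propagates for free: writing $H_{i+1}$ for the total graph of $G/\mathcal{P}_{i+1}$, and $X,Y$ for the two parts that $d$-contractibility hands us for the graph $H_{i+1}$, the partition $\mathcal{P}_i$ obtained by merging $X$ and $Y$ has total graph exactly $H_{i+1}/\{X,Y\}$ — because ``merge two parts and connect the result to every neighbour of either part'' is precisely the operation $G/\{x,y\}$ — so it again lies in $\mathcal{C}$, and moreover the new part $Z := X\cup Y$ has degree at most $d$ in it. The base case is immediate: $\mathcal{P}_n$ is the partition into singletons, its total graph is $G$, and its directed trigraph has no arc at all (a singleton part is a module), so its oriented width is $0$.

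The core of the proof is then to show that a single merge keeps the oriented width at most $d$. Set $H_i := H_{i+1}/\{X,Y\}$, so $\deg_{H_i}(Z) \le d$. For the out-degree of $Z$ in the directed trigraph $D_i$: an arc $Z \to W$ means $Z$ is not homogeneous to $W$, which forces an edge of $G$ between $Z$ and $W$, hence an edge $ZW$ of $H_i$; so $Z$ has at most $\deg_{H_i}(Z) \le d$ out-neighbours. For the out-degree of any other part $W \ne Z$: towards a part $V$ left untouched by the merge, the arc $W \to V$ is present in $D_i$ iff it is present in $D_{i+1}$, since the defining condition (``$W$ is not a module in $G[W\cup V]$'') depends only on $W$ and $V$ as subsets of $V(G)$; and the one potentially new arc out of $W$, namely $W\to Z$, can occur only if $W$ already had an arc to $X$ or to $Y$ in $D_{i+1}$. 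I would conclude $\mathrm{outdeg}_{D_i}(W)\le \mathrm{outdeg}_{D_{i+1}}(W)$, and then an induction down to $\mathcal{P}_1=\{V(G)\}$ gives $w_o(\mathcal{P}_i)\le d$ for all $i$, i.e.\ $\otww(G)\le d$.

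The step I expect to be the crux — though it is short — is the last observation: if a part $W$ is homogeneous to $X$ and to $Y$, then it is homogeneous to $X\cup Y$. This ``module-union'' fact is exactly the locality-of-error phenomenon that sets oriented twin-width apart from twin-width; it is what allows a $d$-contractible class to spend its entire degree budget with no spillover onto the neighbours of the contracted part (whereas under plain twin-width an error incident to $Z$ would also be charged to every neighbour of $Z$, and the bound would degrade). Apart from this, the only points requiring a little care are bookkeeping: the identification of parts of $\mathcal{P}_i$ with vertices of $H_i$, and the reminder that red loops are dropped for oriented twin-width, so a non-singleton part contributes nothing to its own out-degree.
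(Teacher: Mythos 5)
Your proof is correct and follows the same approach as the paper's: iterate the contraction guaranteed by $d$-contractibility on the total graph of the current quotient, observing that the out-degree of the newly merged part is bounded by its total degree (at most $d$), and that no other part's out-degree can increase because homogeneity to both $X$ and $Y$ implies homogeneity to $X\cup Y$. Your version simply makes explicit the bookkeeping (the identification of the contracted graph with the total graph of the quotient, and the monotonicity of out-degrees) that the paper handles by citing the discussion at the start of its oriented-twin-width section.
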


\begin{proof}
  Let $G$ be a graph on $n$ vertices in $\mathcal C$ and $x,y$ two vertices such that $G/\{x,y\}$ is in $\mathcal C$ and $\{x,y\}$ has degree at most $d$.
  To start the partition sequence, consider $\mathcal P_{n-1}$ consisting of singletons and part $\{x,y\}$.
  Note that the only red arcs created by the contraction stem from $\{x,y\}$, yielding out-degree at most $d$ in $D_{n-1}$ (recall~\cref{subsec:part-seq}).
  We inductively iterate the argument on $G/\{x,y\}$ to form a partition sequence in which every vertex in $D_i$ has out-degree at most~$d$.
\end{proof}

In particular, $K_t$-minor free graphs have oriented twin-width $2^{\Tilde{O}(t)}$, and planar graphs have oriented twin-width at most 9. 

Is bounded oriented twin-width a new notion?
Surprisingly, the answer turns out to be negative.
This is quite fortunate since it allows for more flexibility when looking for contraction sequences: One may just worry about the red out-degree. 
In contrast with the easy arguments above, the original proof that proper minor-closed classes have bounded twin-width~\cite{twin-width1} is quite tedious, involving a carefully chosen depth-first-search tree.
Up to our knowledge, no classic result on minor-closed classes directly implies bounded twin-width.


\begin{theorem}\label{thm:oriented}
	Oriented twin-width and twin-width are \equi.	
\end{theorem}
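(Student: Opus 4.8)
The inequality $\otww(G)\le\tww(G)$ is immediate and is already recorded in~\cref{subsec:part-seq}: when two parts $X,Y$ of $\mathcal P_{i+1}$ are merged into $X\cup Y$, any part homogeneous to both $X$ and $Y$ is also homogeneous to $X\cup Y$, so the only new red arcs of $D_i$ point out of $X\cup Y$. It therefore remains to bound $\tww$ by a function of $\otww$, and the plan is to route this through the mixed value: by~\cref{thm:mixedmatrixgraph} it suffices to exhibit, for a graph $G$ with $\otww(G)\le d$, a vertex ordering $\sigma$ for which $Adj_\sigma(G)$ is $f(d)$-mixed free for a fixed function $f$.

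First I would fix a partition sequence $\mathcal S=\mathcal P_n,\ldots,\mathcal P_1$ witnessing $\otww(G)\le d$, i.e.\ with $w_o(\mathcal P_i)\le d$ for all $i$. The parts occurring along $\mathcal S$ form a laminar family on $V(G)$ (for $A\in\mathcal P_i$, $B\in\mathcal P_j$ with $i\le j$, the partition $\mathcal P_j$ refines $\mathcal P_i$, so $A$ and $B$ are nested or disjoint), hence they are the nodes of the binary contraction tree $\mathcal T$, and the left-to-right leaf order of a plane drawing of $\mathcal T$ gives a total order $\sigma$ of $V(G)$ compatible with $\mathcal S$: every part of every $\mathcal P_i$ is an interval of $\sigma$. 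Set $M:=Adj_\sigma(G)$.

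The heart of the matter is to show that $M$ is $f(d)$-mixed free; this is essentially the matrix-mixedness argument of~\cite{twin-width1}, which we only need to observe goes through with the red out-degree in place of the red degree. Suppose for contradiction that $M$ has a $t$-mixed minor $(\{R_1<\cdots<R_t\},\{C_1<\cdots<C_t\})$ with $t$ large in terms of $d$; by compatibility every $R_a$ and every $C_b$ is a $\sigma$-interval, and every zone $R_a\cap C_b$ is mixed, hence contains a corner, that is, vertices $r,r'\in R_a$ and $c,c'\in C_b$ inducing a mixed $2\times 2$ pattern. The crucial translation is that a corner witnesses that $\{r,r'\}$ is not a module in $G[\{r,r',c\}]$ (or in $G[\{r,r',c'\}]$); so at any stage $\mathcal P_i$ at which $r,r'$ lie in a common part $X$ while $c$ lies in a part $Y$, the arc $X\to Y$ belongs to $D_i$ and counts towards the out-degree of $X$. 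I would then run through $\mathcal S$ from the singleton partition towards $\{V(G)\}$ and, by a pigeonhole on $\mathcal T$, pick one stage $\mathcal P_i$ and one part $X$ that simultaneously contains the ``distinguishing rows'' of corners lying in more than $d$ distinct column-blocks $C_b$, whose corresponding vertices $c$ are then spread over more than $d$ distinct parts of $\mathcal P_i$. This makes the out-degree of $X$ in $D_i$ exceed $d$, contradicting $w_o(\mathcal P_i)\le d$, provided $t$ was chosen above a suitable affine function of $d$; hence $M$ is $f(d)$-mixed free and~\cref{thm:mixedmatrixgraph} yields $\tww(G)\le g(\otww(G))$.

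The main obstacle is precisely the bookkeeping in this last step: one must pin a number of distinct out-arcs, growing with $t$, onto a single part of a single partition, and the choice of the stage $i$ and of the part $X$ — including discarding the bounded number of ``boundary'' column-blocks that happen to be straddled by a part of $\mathcal P_i$ — is what requires care. This is also the only place where the directed refinement is used, and it is why the proof is \emph{indirect}: it certifies bounded twin-width via the mixed value rather than by converting a bounded-$w_o$ partition sequence into a bounded-$w_d$ one, which is exactly what makes~\cref{thm:oriented} surprising.
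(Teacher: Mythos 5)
Your proposal takes essentially the same route as the paper: the easy inequality $\otww\le\tww$ is handled directly, and the converse is reduced to mixed freeness via~\cref{thm:mixedmatrixgraph} by fixing a vertex ordering compatible with the partition sequence, assuming a large mixed minor, and using corners to certify red out-arcs at a suitably chosen stage of the sequence. The paper's instantiation of your ``pigeonhole'' step is to take the \emph{first} stage $\ell$ at which a part of $\mathcal P_\ell$ fully contains a row- or column-block of the mixed minor; at that moment at most one block on the opposite side has been straddled, so the corners in the remaining zones send out at least $d$ arcs from the swallowing part.
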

\begin{proof}
	We already observed that a class with twin-width $d$ has a fortiori oriented twin-width as most $d$.
	Moreover mixed value and twin-width are \equi for graphs by \Cref{thm:mixedmatrixgraph}. Thus to show that 
	\begin{compactenum}
		
		\item\label{it:btww}  twin-width,
		
		\item\label{it:botww} oriented twin-width,
		
		\item\label{it:mf} mixed value
		
	\end{compactenum}
	are all \equi, we just need to argue that bounded oriented twin-width implies bounded mixed value. Actually this is similar to the proof that bounded twin-width implies bounded mixed value, which is done in~\cite{twin-width1}. We reproduce the arguments here for completeness.
	
	We show the contrapositive. Let $G$ be a graph with mixed value greater than $2d+2$, hence such that every adjacency matrix of $G$ has a $2d+2$-mixed minor. Fix a partition sequence ${\mathcal S}=\mathcal{P}_n, \ldots, \mathcal{P}_1$ of $G$. Let $\sigma$ be a vertex ordering compatible with $\mathcal S$. Let $\mathcal D=(\mathcal R=\{R_1, \ldots, R_{2d+2}\},\mathcal C=\{C_1, \ldots, C_{2d+2}\})$ be a $2d+2$-mixed minor of $M := \adj{\sigma}{G}$. By design, the partition sequence $\mathcal S$ defines a symmetric division sequence over $M$ since when merging two subsets of vertices, one can contract (simultaneously) the corresponding columns and the corresponding rows.
	
	Recall that the vertices of the red directed graphs $D_i$ are subsets of vertices of $G$. Let $\ell$ be the maximum index such that a vertex of $D_\ell$ fully contains a part $P$ of $\mathcal D$. Without loss of generality, we may assume that $P$ is a column part, thus $P=C_j$ for some $j \in [2d+2]$. As there is a corner in every cell $M[R_i,C_j]$ with $i \in [2d+2]$, there is in particular at least one \emph{row} $r_i$ in each $R_i$ such that $M[r_i,C_j]$ contains two distinct values.
	In $D_\ell$, the $d+1$ vertices $v_1, v_3, \ldots, v_{2d+1}$ respectively corresponding to rows $r_1, r_3, \ldots, r_{2d+1}$ are all in different parts, except possibly one pair $v_{2h-1}, v_{2h+1}$.
	Indeed as we performed a symmetric division sequence on $M$, and stopped the first time a part of the $2d+2$-mixed minor $\mathcal D$ was contained in a part of the sequence, there is at most one part $R_h$ that is contained in a part of $\mathcal{P}_\ell$.
	(One may observe that $\mathcal D$ need not be symmetric, so $h$ is not necessarily equal to $j$.)
	Thus the vertex of $D_\ell$ corresponding to $C_j$ is the source of at least $d$ red arcs.
	Therefore $G$ has oriented twin-width at least~$d$.
\end{proof}

Note that our proof of~\cref{thm:oriented} shows that $\otww(\mathcal C) \leqslant \tww(\mathcal C) \leqslant \exp(\exp(O(\otww(\mathcal C))))$.
It would be interesting to improve the bound given by the second inequality and/or to complement it by a lower bound. As a consequence, $d$-contractible classes have twin-width~$2^{2^{O(d)}}$, and $K_t$-minor free graphs have twin-width $2^{2^{2^{\Tilde{O}(t)}}}$.

\section{Partial contraction sequences to a target class}\label{sec:partial}

In this section, we present a couple of FO model-checking algorithms based on partial contraction sequences.
It consists of pipelining the algorithm of~\cite{twin-width1} with other elements of the meta-algorithmic toolbox.

\medskip

\textbf{Partial sequences.}
For two non-negative integers $d, \Delta$, let $\cbd{d}{\Delta}$ be the class of graphs admitting a partial $d$-sequence to a trigraph of total degree at most~$\Delta$.
A class $\mathcal C$ is said to be~\emph{collapsible to bounded degree} if there are two integers $d, \Delta$ such that $\cC$ is included in $\cbd{d}{\Delta}$.
For a non-negative integer $d$ and a non-decreasing function $f: \mathbb N \to \mathbb N$, let $\cbe{d}{f}$ be the class of graphs admitting a partial $d$-sequence to a trigraph whose total graph has expansion bounded by $f$.
We refer the reader to~\cref{subsec:graph-theory} for the definition of expansion.
A class $\mathcal C$ is said~\emph{collapsible to bounded expansion} if there is an integer $d$ and a function $f: \mathbb N \to \mathbb N$ such that $\mathcal C$ is included in $\cbe{d}{f}$.
Similarly we may say that a class $\mathcal C$ is \emph{collapsible to class $\mathcal C'$} if there is an integer~$d$ such that every graph $G \in \mathcal C$ has a partial $d$-sequence to a trigraph whose total graph is in $\mathcal C'$.

\medskip

\textbf{The FO model checking algorithm in~\cite{twin-width1}.}
We will not need a full description of the algorithm.
It is enough to recall the following.
Let $G_n, \ldots, G_1$ be a contraction sequence of $G$, and $\mathcal P_n, \ldots, \mathcal P_1$ the corresponding partition sequence.
Like the algorithm presented in this paper for MSO model checking in~\cref{subsec:courcelle}, we maintain the \emph{local theory} of sentences of quantifier rank~$q$ rooted at each vertex $u$ of each trigraph $G_i$ of the sequence.
In the case of bounded \local twin-width, the \emph{local theory} was naturally limited to the red component of $u$.
Now that the red graphs can have arbitrary large components, the \emph{local theory} is limited to vertices at distance less than $3^q$ from $u$ in the red graph of $G_i$.
Since the red degree is assumed to be bounded by~$d$, this represents a set, say, $S_{q,d}(u)$ of less than~$d^{3^q}$ vertices.

In~\cite{twin-width1} the \emph{local theory} is not materialized by types but by a tree (called \emph{reduced morphism-tree}), denoted here by $\mathcal T_{q,d}(u)$, of depth $q$ and total size function of $q$ only, containing all the \emph{possible games} in the partitioned graph $(G,\mathcal P_i)[\bigcup_{v \in S_{q,d}(u)} v(G)]$ up to \emph{equivalent moves}.
More precisely, the root of $\mathcal T_{q,d}(u)$ is labeled by the empty sequence, and every child adds a new vertex of $\bigcup_{v \in S_{q,d}(u)} v(G)$ (new move) to the current sequence (branch from the root to the current node).
At this stage, it is not determined yet if a move is played by $\exists$ or $\forall$ player.
One can define by induction what two~\emph{equivalent moves} are.
At the level of leaves (depth~$q$) two equivalent moves are siblings defining the same induced substructures in $(G,\mathcal P_i)[\bigcup_{v \in S_q(u)} v(G)]$ (with equality).
Then two sibling internal nodes are equivalent if there is a bijection between their children such that the paired children would be equivalent if they had the same parent.

Initially, the tree $\mathcal T_{q,d}(v)$ for each vertex $v \in V(G_n)=V(G)$ is easy to compute: it is a path of length $q$ where the only possible move is $v$. 
The tree $\mathcal T_{q,d}(u)$ when $u$ is the unique vertex of $G_1$ is enough to decide $G \models \varphi$ for every sentence $\varphi$ with quantifier rank~$k$.
Indeed such sentence can be effectively rewritten as a prenex\footnote{A prenex formula or sentence has all its quantifiers as a prefix, followed by a quantifier-free formula.} sentence of depth $q := \text{Tower}(k+\log^* k+3)$~\cite[Theorem 2.2 and inequalities (32)]{Pikhurko11}.
As usual the crux of the algorithm is how to update the trees $\mathcal T_{q,d}(u)$ after one contraction is performed.
Per tree, this takes time function of $q$ and $d$ only, while at most $d^{3^q}$ trees may require an update after one contraction; hence the overall running time of $f(d,q)n$ for some computable function $f$.
We will however not need to detail how the update is done.

\medskip

\textbf{Algorithms based on partial sequences.}
We first observe that we can pipeline the FO model checking algorithm on graphs given with a (complete) $O(1)$-sequence, developed in~\cite{twin-width1}, with Gaifman's locality theorem.
Thus, given a corresponding partial sequence, FO model checking is {\FPT} on collapsible classes to bounded degree.
We recall Gaifman's locality theorem.
An \emph{$r$-local formula} with one free variable is a formula $\phi(x)$ such that $\forall \mathscr A, a$: $\mathscr A \models \phi(a)$ if and only if $\mathscr A[N^r_{\mathscr A}[a]] \models \phi(a)$, where $N^r_{\mathscr A}[a]$ is the $r$-neighborhood of $a$, that is, the set of elements at distance at most $r$ from $a$ in the Gaifman graph of $\mathscr A$.
\begin{theorem}[Gaifman's Locality Theorem~\cite{Gaifman82}]\label{thm:gaifman}
  For every positive integer $q$, there are computable integers $k$ and $r$ such that every FO sentence $\varphi$ with quantifier rank~$q$ is equivalent to a Boolean combination of sentences of the form
  $$\exists x_1 \ldots \exists x_k \underset{1 \leqslant i < j \leqslant k}{\bigland} d(x_i,x_j) > 2r~\land \underset{1 \leqslant i \leqslant k}{\bigland} \phi(x_i)$$ where $\phi(x)$ is an $r$-local formula.  
\end{theorem}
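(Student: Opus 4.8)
This is the classical Gaifman locality theorem, and the plan is to prove the usual, stronger statement that actually carries an induction: for every FO${}_\tau$ formula $\varphi(\bar x)$ with free variables $\bar x=(x_1,\dots,x_n)$ and quantifier rank $q$, there are integers $r,k$ (bounded by functions exponential in $q$, and computable since every step below is effective) such that $\varphi$ is logically equivalent to a Boolean combination of \emph{$r$-local formulas around $\bar x$} --- formulas whose truth on $(\mathscr A,\bar a)$ depends only on $\mathscr A[N^r_{\mathscr A}[\bar a]]$ with the tuple $\bar a$ marked --- and \emph{basic $r$-local sentences} $\exists x_1\dots\exists x_k\,(\bigwedge_{i<j}d(x_i,x_j)>2r \wedge \bigwedge_i\phi(x_i))$ with $\phi(x)$ an $r$-local formula in the sense of the preliminaries. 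The theorem is the case $n=0$: then $N^r[\bar a]=\emptyset$, so ``local formulas around the empty tuple'' are just $\top$ and $\bot$, and only the basic $r$-local sentences survive. I would induct on the construction of $\varphi$; atomic formulas are $0$-local around their variables, the two families and their Boolean combinations are closed under $\neg$ and $\wedge$, so the whole content is the existential quantifier.

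For $\varphi(\bar x)=\exists y\,\psi(\bar x,y)$, I would apply the induction hypothesis to $\psi$, put the resulting Boolean combination in disjunctive normal form over its local and basic-local-sentence constituents, distribute $\exists y$ over the disjunction, and pull out the basic-local-sentence conjuncts (which do not mention $y$). This reduces the task to $\exists y\,\theta(\bar x,y)$ where $\theta$ is a conjunction of $s$-local formulas around $(\bar x,y)$ and their negations, for some $s$. Then split on where the witness sits. \emph{Near case:} if $d(y,x_i)\leqslant 2s+1$ for some $i$, then $N^s[\bar x]\cup N^s[y]\subseteq N^{3s+1}[\bar x]$ and $y\in N^{2s+1}[\bar x]$, so $\exists y\,(\theta\wedge d(y,x_i)\leqslant 2s+1)$ is determined by $\mathscr A[N^{3s+1}[\bar x]]$, hence equivalent to a $(3s{+}1)$-local formula around $\bar x$ (disjoin over the finitely many $i$).

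\emph{Far case:} if $d(y,x_i)>2s+1$ for all $i$, then for $u\in N^s[y]$, $v\in N^s[\bar x]$ one gets $d(u,v)\geqslant 2$, so $\mathscr A[N^s[\bar x]\cup N^s[y]]$ is the \emph{disjoint union} of $\mathscr A[N^s[\bar x]]$ (with $\bar x$ marked) and $\mathscr A[N^s[y]]$ (with $y$ marked). A composition argument for disjoint unions, in the spirit of \cref{lem:merge-game} and provable by the Ehrenfeucht--Fra\"iss\'e bookkeeping of \cref{lem:ef}, shows the FO${}_\tau[q]$-theory of this disjoint union is a function of the FO${}_\tau[q]$-theories of the two pieces; hence $\theta$, restricted to the far case, is equivalent to a finite disjunction $\bigvee_\ell \alpha_\ell(\bar x)\wedge\beta_\ell(y)$ with $\alpha_\ell$ $s$-local around $\bar x$ and $\beta_\ell$ $s$-local around the single variable $y$. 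Then $\exists y\,(\mathrm{far}\wedge\alpha_\ell(\bar x)\wedge\beta_\ell(y))\equiv\alpha_\ell(\bar x)\wedge\exists y\,(\mathrm{far}\wedge\beta_\ell(y))$, and ``there is a $\beta_\ell$-witness far from $\bar x$'' is reassembled, by counting witnesses up to a threshold, from basic $s$-local sentences $\exists z_1\dots z_m(\bigwedge_{a<b}d(z_a,z_b)>2s\wedge\bigwedge_a\beta_\ell(z_a))$ (how many mutually-far $\beta_\ell$-witnesses exist) and a $(3s{+}1)$-local formula around $\bar x$ (how many such witnesses fall in the bounded region $N^{2s+1}[\bar x]$). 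Gathering the near and far cases keeps $\varphi$ in the normal form, with $s$ and $k$ multiplied by a fixed constant per quantifier removed, which gives the claimed bounds.

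The \textbf{main obstacle} is the far case, and specifically two linked points in it. First, passing from ``$\theta$ is $s$-local around the merged tuple $(\bar x,y)$, read on a structure that happens to split as a disjoint union'' to the factored form $\bigvee_\ell\alpha_\ell(\bar x)\wedge\beta_\ell(y)$: this is exactly a Feferman--Vaught-style composition theorem for disjoint unions, cleanest via EF games, and one must take care that the marking of $\bar x$ versus $y$ is respected across the split and that only finitely many local FO${}_\tau[q]$-types occur. Second, the bookkeeping that turns ``some $\beta_\ell$-witness avoiding the bounded neighborhood of $\bar x$'' into honest basic local sentences: one trades a bounded region around $\bar x$ (absorbed into a local formula about $\bar x$) against a genuinely scattered family of witnesses, and must track how the witness count grows --- this is precisely where the multi-witness shape of the basic local sentences, and the parameter $k$ in the statement, come from.
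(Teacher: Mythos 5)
The paper does not prove this theorem: it states it as a known result and cites Gaifman~\cite{Gaifman82}, so there is no in-paper proof to compare against. Your proposal is a faithful sketch of the classical proof as found in the literature (e.g., Libkin~\cite{Libkin}, Theorem~4.22): strengthen to a normal-form theorem for formulas with free variables, induct on formula structure, and handle $\exists y$ by splitting into a near case (absorbed into a local formula of larger radius around $\bar x$) and a far case (handled by a Feferman--Vaught-style composition for disjoint unions plus a counting/scattering argument). The two obstacles you single out --- the disjoint-union composition lemma and the bookkeeping that turns ``some far witness exists'' into a Boolean combination of basic local sentences and a local formula around $\bar x$ --- are indeed the crux of the argument, and your plan to resolve them via EF-games and threshold counting of pairwise-scattered witnesses is the standard and correct route. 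One small thing worth making explicit if you were to write this out in full: in the far case the scattering radius used in the basic local sentences must be taken large enough (roughly $2s{+}1$ rather than $s$) so that at most one scattered witness can lie near any given $x_i$, which is what makes the pigeonhole step work; and the equivalence one ends up proving is a \emph{Boolean} combination comparing counts, not a single implication, which is why the conclusion of Gaifman's theorem permits negations of basic local sentences as well.
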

In the previous statement, $d(x_i,x_j) > 2r$ is a short-hand for the fact that there is no path of length at most $2r$ between $x_i$ and $x_j$.

\begin{theorem}\label{thm:seq-to-bd}
  Given $G \in \cbd{d}{\Delta}$ with a partial sequence $G=G_n, \ldots, G_s$ such that $G_s$ has total degree $\Delta$, and a sentence $\varphi \in \text{FO}_{E}[k]$, one can decide whether $G \models \varphi$ in time $f(d,\Delta,k) n$ for some computable function $f$.  
\end{theorem}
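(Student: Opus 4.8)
The plan is to pipeline the morphism-tree bookkeeping of the FO model-checking algorithm of~\cite{twin-width1} with Gaifman's locality theorem~(\cref{thm:gaifman}). Fix an integer $q$ that is a sufficiently large function of $k$ (its exact value is dictated by Gaifman's theorem, see below). First I would run, along the given partial sequence $G_n, \ldots, G_s$, the update of the reduced morphism-trees $\mathcal T_{q,d}(u)$ exactly as in~\cite{twin-width1}: since every $G_i$ with $i \geq s$ has red degree at most $d$, this takes time $f(d,q)\,n$ and produces, for every vertex $u$ of $G_s$, its morphism-tree $\mathcal T_{q,d}(u)$, an object of size bounded by a function of $d$ and $q$, drawn from a palette of $F(d,q)$ possibilities. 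By hypothesis the total graph of $G_s$, call it $H$, has maximum degree at most $\Delta$. The one structural fact I would record is the following. For two distinct vertices $u,w$ of $G_s$: if $uw$ is a non-edge of $H$ then there are no edges of $G$ between $u(G)$ and $w(G)$; if $uw$ is a black edge of $G_s$ then $u(G)\cup w(G)$ spans a complete bipartite graph of $G$. Consequently, for $a\in u(G)$ and $b\in w(G)$ with $u\neq w$ one has $\mathrm{dist}_G(a,b)\geq \mathrm{dist}_H(u,w)$ (projecting a shortest $G$-path to $G_s$ yields an $H$-walk of no larger length), hence for every radius $\rho$ the ball $N^\rho_G[a]$ is contained in $\bigcup_{w\in B_\rho(u)} w(G)$, where $B_\rho(u)$ is the radius-$\rho$ ball of $u$ in $H$, a set of at most $1+\Delta+\cdots+\Delta^\rho$ vertices of $G_s$.

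Next I would apply~\cref{thm:gaifman} to $\varphi$, obtaining integers $m,r$ (functions of $k$) and $r$-local formulas $\phi(x)$ of quantifier rank at most $k$ such that $\varphi$ is equivalent to a Boolean combination of basic local sentences $\exists x_1\cdots\exists x_m\big(\bigland_{i<j} d(x_i,x_j)>2r \;\land\; \bigland_i \phi(x_i)\big)$; I then fix $q$ large enough (a function of $r$, $m$, $d$) for the claims below. The key claim is that the $r$-local behaviour of a vertex of $G$ is a bounded function of local data on $G_s$: for $a\in u(G)$, whether $G\models\phi(a)$ depends only on the isomorphism type of the trigraph $G_s[B_r(u)]$ with $u$ marked, together with the tuple $(\mathcal T_{q,d}(w))_{w\in B_r(u)}$. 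Indeed, since $\phi$ is $r$-local, $G\models\phi(a)$ iff $G[W]\models\phi(a)$ where $W:=\bigcup_{w\in B_r(u)} w(G)$, because any path of length at most $r$ from $a$ in $G$ stays inside $W$, so $N^r_{G[W]}[a]=N^r_G[a]$. Now $G[W]$ is obtained from the partitioned structures $(G,\mathcal P_s)[\bigcup_{v\in S_{q,d}(w)}v(G)]$, $w\in B_r(u)$, by identifying the copies of each common part, adding complete bipartite graphs along the black edges of $G_s[B_r(u)]$ and nothing along its non-edges, the red-edge interactions being already present since a red edge $ww'$ of $G_s$ has $w'\in S_{q,d}(w)$. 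Hence the rank-$q$ FO type of $(G[W],a)$ is obtained from $(\mathcal T_{q,d}(w))_{w\in B_r(u)}$ by iterating $|B_r(u)|$ times the composition step underlying one morphism-tree update in~\cite{twin-width1} (a Feferman--Vaught--style composition, in the spirit of~\cref{lem:merge-game} but for FO and along a bounded trigraph rather than a single contraction); since $|B_r(u)|$ is bounded in $\Delta$ and $r$, this costs time bounded in $d,\Delta,k$, and reading off whether $\phi$ lies in that type settles $G\models\phi(a)$.

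With this in hand I would form a vertex-coloured graph $\widehat H$: its vertex set is $V(G_s)$, its edges are those of $H$ (so $\widehat H$ has maximum degree at most $\Delta$), and each vertex $u$ carries as a colour the pair consisting of the isomorphism type of the marked trigraph $G_s[B_r(u)]$ and of $(\mathcal T_{q,d}(w))_{w\in B_r(u)}$, a colour from a bounded palette; I further enrich the colour of $u$ by, for each Gaifman-local $\phi$, the number (capped at $m$) of pairwise $2r$-scattered $\phi$-witnesses of $G$ that lie inside $u(G)$, which is again a bounded function of the colour already assigned, provided $q$ was chosen large enough. By the previous paragraph and the distance comparison $\mathrm{dist}_G(a,b)\geq \mathrm{dist}_H(u,w)$, each basic local sentence of the Gaifman normal form of $\varphi$ translates into a first-order sentence over $\widehat H$: existence of $m$ pairwise-far $\phi$-witnesses in $G$ amounts to choosing, in $\widehat H$, pairwise sufficiently distant parts, with the allowance of several witnesses inside a single part, which is exactly what the extra capped-count colours record. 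Finally, FO model checking on the bounded-degree coloured structure $\widehat H$ is solvable in linear time by a further application of Gaifman's theorem~(\cref{thm:gaifman}), since local formulas over $\widehat H$ are decidable in time $\Delta^{O(1)}$ per vertex and scattered-witness existence is then read off in one pass. Composing the $f(d,q)\,n$-time bookkeeping with this $O_{d,\Delta,k}(n)$-time finish yields the claimed running time $f(d,\Delta,k)\,n$.

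The main obstacle I expect is the second paragraph: setting up carefully the composition of the boundedly many morphism-trees along the trigraph $G_s[B_r(u)]$ so that it genuinely recovers the rank-$q$ type of the $r$-ball of $a$ in $G$ (including fixing $q$ large enough for this and for the capped within-part counts), and then transporting Gaifman's scattered-existential condition to $\widehat H$ while correctly handling witnesses that share a part $u(G)$. Everything else is a routine reuse of the machinery of~\cite{twin-width1} together with the bounded-degree case of first-order model checking.
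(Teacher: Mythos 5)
Your proposal shares the paper's two-stage plan (morphism-tree bookkeeping along the partial sequence, then Gaifman's locality theorem over a bounded-degree remainder), but the paper uses a shortcut that sidesteps both of the delicate points you flag, and your version of the second stage has a genuine gap. On the first point: rather than keeping the purely red-local trees $\mathcal T_{q,d}(u)$ and gluing them over $G_s[B_r(u)]$, the paper recolours black edges to red whenever a vertex's total degree drops to at most~$\Delta$, maintaining trees $\mathcal T_{q,d+\Delta}(u)$ along the recoloured sequence $G'_n,\ldots,G'_s$. Since all edges of $G'_s$ are red, the red ball of radius $3^q$ around $u$ already \emph{is} the total ball, so $\mathcal T_{q,d+\Delta}(u)$ encodes by itself the whole $r$-local theory around $u(G)$, and no Feferman--Vaught gluing over a bounded trigraph is needed. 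Your gluing step over the overlapping supports $S_{q,d}(w)$ is not absurd (the objects are full game trees, not just types), but it is a new lemma that you would have to prove; you are right to flag it as the main obstacle.

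The actual gap is in the Gaifman stage. The paper applies Gaifman's theorem to the augmented structure $\mathscr A = (G,\mathcal P_s,D)$, where $\sim$ is the partition relation and $D$ blows up $R(G'_s)$. Its Gaifman graph is precisely the blow-up of $G'_s$: any two elements of a common part are at distance~$1$, and elements $a\in u(G)$, $b\in w(G)$ are at $\mathscr A$-distance $\mathrm{dist}_{G'_s}(u,w)$. Hence a scattered set of witnesses contains at most one element per part, and scattered witnesses correspond exactly to $2r$-scattered vertices of $G'_s$, which a Seese-style bounded search tree finds. You instead apply Gaifman to $G$ itself, and your inequality $\mathrm{dist}_G(a,b)\geq\mathrm{dist}_H(u,w)$ only goes the unhelpful way: $H$-scattered parts yield $G$-scattered witnesses, but not conversely. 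Take $u,w$ adjacent in $H$ by a red edge of $G_s$ whose witnessing $G$-edges avoid $a\in u(G)$ and $b\in w(G)$; then $a$ and $b$ may well be at $G$-distance $>2r$ and hence form a legitimate Gaifman-scattered pair for $\varphi$ over $G$, yet your $\widehat H$ construction cannot see it: the capped counts only record within-part witnesses, and the scattered-set search on $\widehat H$ will never propose the adjacent parts $u,w$. To repair this you would either need to enrich the colouring of $\widehat H$ with boundedly many bits of cross-part $G$-distance and co-witness data (and reprove a transfer of basic local sentences for that richer signal), or simply adopt the paper's device and apply Gaifman to $\mathscr A$ so that the distance mismatch never arises.
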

\begin{proof}
  We run the algorithm of~\cite{twin-width1} on the partial sequence $G_n, \ldots, G_s$ with a small nuance.
  When the total degree of a vertex $u \in V(G_i)$ becomes at most~$\Delta$, we turn all its black incident edges into red.
  We denote this new trigraph $G'_i$ and proceed to the next contraction on $G'_i$ (not $G_i$).
  By that process, the red degree may exceed $d$ but remains bounded by $d+\Delta$ (in the extreme case when $\Delta$ black edges were turned red).
  We thus maintain trees $\mathcal T_{q,d+\Delta}(u)$, with $q$ function of $k$ as given in~\cref{thm:gaifman}.
  When we reach the trigraph $G'_s$, by design all its edges are red, since its total degree is at most~$\Delta$.
  We may therefore interpret $G'_s$ as a mere graph.
  Up to this point the algorithm takes time $g(d,\Delta,q)(n-s)$ for some computable function $g$.

  To apply Gaifman's locality theorem directly, we adopt the partition viewpoint on the contraction sequence.
  Recall that there is a partial partition sequence $\mathcal P_n, \ldots, \mathcal P_s$ corresponding to the partial trigraph sequence $G_n, \ldots, G_s$.
  We consider the structure $\mathscr A := (G, \mathcal P_s, D := \{ab~:~a \in u(G), b \in v(G), uv \in R(G'_s)\})$.
  We add the ``dummy'' graph~$D$ so that the Gaifman graph of $\mathscr A$ is simply a blow-up of the Gaifman graph of $G'_s$.
  We apply~\cref{thm:gaifman} with $r = 3^q$ on FO$_{E,\sim,D}$ sentences that are \emph{not} using the relation $D$.
  The tree $\mathcal T_{q,d+\Delta}(u)$ for every $u \in V(G'_s)$ allows us to determine every such $r$-local sentence $\phi(x)$ of quantifier rank at most~$q$.
  We can therefore mark the vertices $u \in V(G'_s)$ such that $\phi(a)$ holds for at least one vertex $a \in u(G)$.
  This takes time linear in $s$.
  We conclude as in the FO model-checking algorithm for bounded-degree structures of Seese~\cite{Seese96}, by observing that finding a $2r$-scattered set of size $k$ (i.e., $k$ vertices pairwise at distance more than $2r$) among the marked vertices in the graph $G'_s$ can be done in time $h(k,r)s$ for some computable function $h$ (with a bounded search tree).
  Hence the overall running time.
\end{proof}

Typical graphs collapsible to bounded degree --but of unbounded twin-width and unbounded degree-- are blow-ups (replace every vertex by a clique module of arbitrary size) of bounded-degree graphs; and more generally any modular decomposition where all the modules have bounded twin-width, while the core has bounded degree.
We believe that~\cref{thm:seq-to-bd} should hold more generally for collapsible classes to bounded expansion.
We will only show the result for \emph{existential} first-order sentences.

\begin{theorem}\label{thm:seq-to-be}
  Given $G \in \cbe{d}{g}$ with a partial sequence $G=G_n, \ldots, G_s$ such that the total graph of $G_s$ has expansion $g$, and a sentence $\varphi = \exists x_1 \exists x_2 \ldots \exists x_q \psi \in \exists \text{FO}_{E}[q]$ with $\psi$ a quantifier-free formula, one can decide whether $G \models \varphi$ in time $f(d,g,q) n$ for some computable function $f$.  
\end{theorem}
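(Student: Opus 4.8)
The plan is to strip off the logic, reduce $G \models \varphi$ to detecting one of finitely many fixed ``patterns'' in an enriched relational structure built on $G_s$ whose Gaifman graph has bounded expansion, and then detect those patterns with low tree-depth covers. First, since $\psi$ is quantifier-free over the signature $\{E\}$, whether $\psi(a_1, \ldots, a_q)$ holds depends only on the atomic diagram of $(a_1, \ldots, a_q)$ --- which pairs coincide and which are edges. Enumerating the finitely many consistent atomic diagrams that satisfy $\psi$, each is the atomic diagram of a graph on at most $q$ vertices, so $G \models \varphi$ iff $G$ contains, as an \emph{induced} subgraph, one graph from a finite family $\mathcal H$ of graphs, each on at most $q$ vertices (all computable from $\varphi$). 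It thus suffices, for each fixed $H \in \mathcal H$ with $k := |V(H)| \leqslant q$, to decide whether $G$ has an induced copy of $H$.

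Next I would run the algorithm of~\cite{twin-width1} along the partial $d$-sequence $G_n, \ldots, G_s$, maintaining at each vertex $u$ of each trigraph $G_i$ the reduced morphism-tree $\mathcal T_{q',d}(u)$ for a suitable $q' = O(q)$, which records all length-$q'$ games in the partitioned graph $(G, \mathcal P_i)$ restricted to the parts in the red ball $S_{q',d}(u)$. The crucial point is that the red degree stays at most $d$ along the whole partial sequence, \emph{$G_s$ included}, so this red ball has at most $d^{3^{q'}}$ parts and $\mathcal T_{q',d}(u)$ has size bounded in terms of $q'$ only; this phase takes time $O_{d,q}(n-s)$. Since the partition is part of the recorded structure, $\mathcal T_{q',d}(u)$ moreover determines, up to isomorphism, the quotient trigraph $G_i$ induced on $S_{q',d}(u)$ (a pair of parts is red exactly when the two parts are not homogeneous in $(G,\mathcal P_i)$, a property of the partitioned structure), and hence, for any red-connected set $S$ of at most $q$ parts of $G_s$ with $u \in S$ (automatically $S \subseteq S_{q',d}(u)$) and any ``internal pattern'' --- a trigraph on $S$ with loop-markings together with a prescription of the $G$-adjacencies required inside parts carrying a red loop and across red edges of $S$ --- it determines whether that internal pattern is realised by $G$ inside $\bigcup_{w \in S} w(G)$.

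Then I would package this into a relational structure $\mathscr B$ on universe $V(G_s)$: the red and black edge relations of $G_s$; for each rooted bounded trigraph $B$ (a possible red ball) and each vertex $\ell$ of $B$, a binary relation recording that $u$'s red ball is isomorphic to $B$ with $u$ at the root and $v$ at $\ell$; and for each isomorphism type of reduced morphism-tree a unary predicate recording the type of $\mathcal T_{q',d}(u)$. This is computable in time $O_{d,q}(|V(G_s)|)$, with a number of relations bounded in $d$ and $q$. Because an induced copy of $H$ hits at most $k \leqslant q$ parts of $G_s$, whose red trace splits into red-connected pieces of size at most $q$ pairwise linked in $G_s$ only by black edges and non-edges, and because --- by the previous paragraph --- the realisability of the free adjacencies along each such piece is read off, at any of its parts, from the morphism-tree type together with the ``position in the red ball'' relations, there is a fixed sentence $\Phi$ (computable from $H$, $d$, $q$) that is a disjunction of existentially quantified conjunctions of $O(q)$ literals --- i.e.\ it asks for one of finitely many fixed coloured, edge-coloured induced substructures of $\mathscr B$, each on $q'' := O(q)$ vertices --- with $G$ containing an induced copy of $H$ iff $\mathscr B \models \Phi$. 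The Gaifman graph of $\mathscr B$ is contained in the $3^{q'}$-th power of the total graph of $G_s$; since the latter has expansion $g$ and, by a classical fact~\cite{sparsity}, powers of bounded-expansion graphs have bounded expansion, the Gaifman graph of $\mathscr B$ has expansion $g'$ for some $g' = g'(g,d,q)$. Applying~\cref{thm:td-exp} to (the monotone closure of) $\{\,$Gaifman graph of $\mathscr B\,\}$ with parameter $q''$ yields, in linear time, sets $X_1, \ldots, X_{f(q'')}$ covering all $q''$-subsets with each $\mathscr B[X_i]$ of tree-depth, hence tree-width, at most $q''$; a sought substructure, if present, lies inside some $\mathscr B[X_i]$ and is detected there in linear time, e.g.\ by the Courcelle-type dynamic program of~\cref{subsec:courcelle}. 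Ranging over the $X_i$ and over $H \in \mathcal H$ decides $G \models \varphi$ in time $O_{d,g,q}(|V(G_s)|)$; with the $O_{d,q}(n-s)$ of the first phase and $s \leqslant n$, this gives the claimed $f(d,g,q)\,n$.

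The main obstacle --- and the exact reason for the restriction to $\exists$FO --- is the clash between the two kinds of boundedness. Bounded red degree keeps the red balls, and hence the morphism-trees, of bounded size, which lets the ``free'' adjacencies of $G$ (inside red loops and across red edges) be certified locally and pushed into the finite structure $\mathscr B$; but the total graph of $G_s$ is only of bounded expansion, not bounded degree, so the gluing of the bounded pieces cannot be kept local. For a fixed existential pattern this global step is exactly what low tree-depth covers deliver, but it is genuinely insufficient for an arbitrary first-order sentence: the blow-up turning $G_s$ back into $G$ can be dense (a clique blow-up of a sparse graph is dense), so one cannot simply run general FO model checking on it, and Gaifman-localising does not help, since $r$-balls of $G$ are themselves unbounded blow-ups. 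I expect the fiddly part of a full write-up to be pinning down $\mathscr B$ and $\Phi$ precisely --- in particular the ``position in the red ball'' relations and the check that $\Phi$, through them, correctly matches the parts of each red-connected piece with their images in a chosen anchor's red ball --- and verifying that the Gaifman graph of $\mathscr B$ indeed sits in a bounded power of a bounded-expansion graph.
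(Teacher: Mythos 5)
Your overall architecture --- run the morphism-tree machinery of~\cite{twin-width1} along the partial $d$-sequence down to $G_s$, then invoke low tree-depth covers --- is the same as the paper's, and your reduction of $\exists$FO model checking to finitely many induced-subgraph queries is correct. But the second half diverges substantially: the paper computes a low tree-depth cover $X_1,\dots,X_h$ of the total graph $G_s'$ itself (which has expansion $g$ by hypothesis), and for each piece $X_j$ it \emph{trims} the morphism-trees to $X_j$ and simply \emph{resumes} the contraction sequence inside $G_s[X_j]$, which has bounded tree-depth and hence an $f'(q)$-sequence. You instead package the morphism-tree types into an auxiliary relational structure $\mathscr B$ on $V(G_s)$ and run a pattern-detection argument on the Gaifman graph of $\mathscr B$.

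This is where there is a genuine gap. You justify the bounded expansion of the Gaifman graph of $\mathscr B$ by the assertion that ``by a classical fact~\cite{sparsity}, powers of bounded-expansion graphs have bounded expansion.'' That is false: $K_{1,n}$ is a tree (expansion~$1$), yet $K_{1,n}^2 = K_{n+1}$, so already squares of stars have unbounded expansion, and more generally bounded powers do not preserve bounded expansion. What you actually need is the narrower claim that $G_s' \cup R^{3^{q'}}$ has bounded expansion, where $R$ is the red graph of $G_s$ --- a graph of maximum degree $d$. This is a much more plausible statement (the added edges form a bounded-degree graph, and they are ``short'' in a bounded-degree subgraph of $G_s'$), but it is not a classical fact, it does not follow from the power-closure statement you cite, and you would need to prove it, e.g.\ by controlling weak coloring numbers; the naive substitution of an $R$-path for each $R^{3^{q'}}$-edge does not respect the ordering condition in the definition of $\mathrm{WReach}_r$ and so does not immediately go through. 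The paper's route sidesteps this entirely by covering $G_s'$ directly: the at most $q$ parts hit by a witness of $\varphi$ lie in some $X_j$ by the cover property, and the trimmed sequence on the bounded-tree-depth trigraph $G_s[X_j]$ detects the witness. If you want to retain your ``reduce to pattern detection'' framing, you should likewise apply the cover to $G_s'$ (not to the Gaifman graph of $\mathscr B$), and argue separately that the type and red-ball position information can be read off locally inside each $X_j$ --- but that argument is precisely what would have to replace the false step.
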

\begin{proof}
  We run the algorithm of~\cite{twin-width1} on the partial sequence $G_n, \ldots, G_s$.
  When we reach $G_s$, we compute a \emph{low tree-depth cover} $X_1, \ldots, X_h$ of the total graph $G'_s$ of $G_s$ with parameters $h, f=f(\cbe{d}{g})$ such that, we recall, $G'_s[X_j]$ has tree-depth at most $q$ for every $j \in [h]$, every subset of $V(G'_s)$ of size at most $q$ is fully contained in at least one $X_j$, and $h=f(k)$~\cref{thm:td-exp}.
  As $G'_s[X_j]$ has tree-depth at most~$q$ for every $j \in [h]$, it has twin-width bounded by a function of~$q$, and a (complete) $f'(q)$-sequence can be found in polynomial time for some function $f'$.
  For every $j \in [h]$, we perform the following~\emph{run}.
  We trim all the trees $\mathcal T_{q,d}(u)$ with $u \in V(G_s)$ by deleting every move that is not in $X_j$ (and its subtree).
  We resume the algorithm of~\cite{twin-width2} with the \emph{trigraph} $G_s[X_j]$ and the twin-width bound set to $f'(q)$.

  If $G \models \varphi$ indeed holds, let $X_j$ be such that $G'_s[X_j]$ contains $q$ vertices $(v_1, \ldots, v_q)$ such that $G \models \psi(v_1, \ldots, v_q)$ (recall that $\varphi$ is existential).
  The corresponding runs detects a solution.
  If $G \models \varphi$ does not hold, every run is negative.
  The claimed overall running time is easy to derive.
\end{proof}

\cref{thm:seq-to-bd} tackles more general graph classes than the FO model-checking algorithm of~\cite{twin-width2}, combining the features of graphs with bounded twin-width and of sparse graphs.
The interest of such algorithms might also be in easing the computation of the (partial) sequence.
It is still unknown if there is an approximation algorithm outputting $f(d)$-sequences for graphs of twin-width at most~$d$, in say, fixed-parameter time.  
A reason why this may be a delicate issue is that $\Omega(\log n)$-subdivisions of $n$-vertex graphs have bounded twin-width, while $o(\log n)$-subdivisions do not~\cite{twin-width2}.
However $\Omega(\log n)$-subdivisions are trivially collapsible to bounded expansion, and easily shown collapsible to bounded degree. 

\section{Spanning twin-width}\label{sec:spanningtww}
 
Let $\preceq$ be a partial order on a set $X$. When $x\preceq y$, we say that $x$ is an \emph{ancestor} of $y$, that $y$ is a \emph{descendant} of $x$, and that $x,y$ are \emph{comparable}. 
A \emph{forest order} $\prec$ on $X$ is a partial order such that whenever $x,y\preceq z$, then $x,y$ are comparable. A \emph{tree order} is a forest order with a minimum element. We write $x\prec^c y$ when $x\prec y$ and there is no $z$ such that $x\prec z\prec y$. The binary relation $(X,\prec^c)$ is the \emph{Hasse diagram} of $\preceq$.
Notice that the Hasse diagram of a tree order is a directed tree. 
 
Let $G$ be a connected graph.
A tree order $\preceq$ on $V(G)$ is \emph{compatible with $G$} if $uv$ is an edge of $G$ whenever $u\prec^c v$.
Put in another way, a tree order compatible with $G$ can be seen as the transitive closure of some oriented rooted spanning tree of $G$.
 
Adding a tree order to a graph can help to design partition sequences.
For instance, the key ingredient in the proof of~\cite{twin-width1} that minor-closed classes have bounded twin-width lies in the fact that if $G$ is $K_t$-minor free, then it has a tree order $\preceq$ (a kind of Lex-DFS) such that $(G,\preceq)$ seen as a binary structure has bounded twin-width.
This is one appealing feature of twin-width: One often has to guess which additional information will guide the sequence.
Here the tree order can be seen as an intermediate step between the mere graph $G$ and the full partition sequence.
 
In order to refine the landscape between bounded tree-width and bounded twin-width, a natural candidate is the \emph{spanning twin-width} of a connected graph $G$, defined as the minimum twin-width of $(G,\preceq)$ taken
over all tree orders $\preceq$ compatible with $G$.
We extend the notion to disconnected graphs by taking the maximum over all connected components.
Observe that this parameter is not monotone.
Indeed the spanning twin-width of a subgraph (even induced) can be larger than the one of the host graph.
Nevertheless for monotone classes, it exactly captures classes excluding a minor.
 
\begin{theorem} \label{thm:monotoneSparseStwwMinor}
A monotone graph class $\MC$ has bounded spanning twin-width if and only if there exists some $t$ such that no graph in $\MC$ admits $K_t$ as a minor. 
\end{theorem}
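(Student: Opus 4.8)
The two directions are of quite different natures, and the forward implication is essentially already available.

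\emph{From excluded minor to bounded spanning twin-width.}
Suppose no graph in $\MC$ admits $K_t$ as a minor. For a connected $G \in \MC$, run a Lex-DFS and let $\preceq$ be the ancestor order of the resulting rooted spanning tree. As an undirected DFS tree has no cross edges, every edge of $G$ joins two $\preceq$-comparable vertices, so $\preceq$ is a tree order compatible with $G$. The key technical step of the proof in~\cite{twin-width1} that minor-closed classes have bounded twin-width is precisely that, for $K_t$-minor free $G$, this $(G,\preceq)$ viewed as a binary structure has twin-width bounded by a function of $t$. Hence $\stww(G)$ is bounded in terms of $t$; taking the maximum over connected components settles the disconnected case, so $\MC$ has bounded spanning twin-width.

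\emph{From bounded spanning twin-width to excluded minor (the main direction).}
Assume $\stww(\MC) \le d$. Since homogeneity of a pair of parts in the richer structure $(G,\preceq)$ implies homogeneity in $G$, any partition sequence witnessing $\tww(G,\preceq)\le d$ also witnesses $\tww(G)=O(d)$; thus $\MC$ has bounded twin-width, hence is $\chi$-bounded~\cite{twin-width3}, hence is not the class of all graphs, so (by monotonicity) $\MC$ does not contain all cliques. Now suppose for contradiction that for every $t$ some $G_t \in \MC$ has $K_t$ as a minor. By monotonicity $\MC$ contains a \emph{minimal model} $H_t$ of $K_t$: a subgraph with vertex set partitioned into branch sets $B_1,\dots,B_t$, each inducing a tree, with exactly one edge between $B_i$ and $B_j$ for $i\neq j$ and no other edge; minimality makes every leaf of $H_t[B_i]$ an endpoint of a connecting edge, so $|B_i|\le 2t$. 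If at least $t/2$ of the $B_i$ are singletons they pairwise induce an edge, so $H_t$ contains $K_{t/2}$ and, for $t$ large, monotonicity puts all finite graphs in $\MC$, a contradiction. Otherwise discard the singleton branch sets and trim each remaining branch tree to the Steiner tree of its surviving ports; each surviving branch set retains at least $t/2-1\ge 2$ ports, so we obtain (still inside $\MC$) a minimal model $H_{t'}$ of $K_{t'}$ with $t'\ge t/2$ in which every branch set has size at least $2$.

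\emph{Core of the argument.}
Fix a tree order $\preceq$ compatible with $H_{t'}$ (equivalently a rooted spanning tree of $H_{t'}$) with $\tww(H_{t'},\preceq)\le d$, and a vertex ordering $\sigma$ compatible with an optimal contraction sequence. By \cref{thm:gridtheorem} applied to the matrix $M$ canonically encoding the binary structure $(H_{t'},\preceq)$, together with \cref{thm:mixedmatrixgraph}, $M$ is $c(d)$-mixed free for some function $c$. The claim to prove is that this is incompatible with a minimal $K_{t'}$-model in which all branch sets have size at least $2$: the $\binom{t'}{2}$ connecting edges impose a clique pattern on the ports, while the Hasse diagram of $\preceq$, being a spanning tree of $H_{t'}$, is forced to route through the branch trees, so in every row/column ordering the port-clique is interleaved with long $\preceq$-chains. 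A Ramsey-type extraction then produces a $g(t')$-mixed minor of $M$ with $g$ unbounded; comparing with $c(d)$-mixed freeness yields $t'\le g^{-1}(c(d))$, hence $t$ is bounded, and $\MC$ excludes $K_{t_0}$ as a minor for $t_0$ one more than this bound.

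\textbf{Main obstacle.}
Everything except the last paragraph is bookkeeping; the real content is the mixed-minor extraction, and the delicate point is that the blow-up comes from neither ingredient on its own. A plain clique ($K_t$, all branch sets singletons) has twin-width $0$, and a tree order in isolation has bounded twin-width, so one must genuinely exploit their interaction: a spanning tree of a \emph{dense} minor model must spread the connecting edges along long tree branches, and these nested branches together with the clique pattern on their far ends create corners in every ordering. Two instructive test cases for exactly the regime the extraction must capture are the family of half-graphs (each a minimal $K_n$-model, yet of bounded \emph{plain} twin-width) and the family of $\Theta(\log n)$-subdivisions of $K_n$ (of bounded twin-width and even bounded expansion): by the theorem both must have unbounded spanning twin-width. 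A plausible route to the quantitative step is to establish the lower bound first for one such canonical family and then argue that every minimal $K_t$-model either contains a large clique subgraph or contains a member of the family as a subgraph, invoking monotonicity a second time.
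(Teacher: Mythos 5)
Your first direction (excluded minor implies bounded spanning twin-width) matches the paper exactly: both invoke the Lex-DFS construction from~\cite{twin-width1} and take the maximum over connected components.

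The second direction (bounded spanning twin-width implies an excluded minor) is where the proposal has a genuine gap, and you acknowledge it yourself in the ``Main obstacle'' paragraph. The entire substance of this direction in your plan is the assertion that ``a Ramsey-type extraction then produces a $g(t')$-mixed minor of $M$ with $g$ unbounded''; nothing there is executed, and you correctly flag the half-graph and $\Theta(\log n)$-subdivision families as exactly the cases where such an extraction is delicate. There is also a smaller error upstream: minimality of a $K_t$-model does not give $|B_i|\le 2t$. Minimality forces every leaf of a branch tree $H_t[B_i]$ to be a port, bounding the number of leaves by $t-1$, but the branch tree can still be a long path between two ports, so $|B_i|$ is unbounded. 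Your subsequent pruning and port-interleaving argument leans on this bound.

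The paper proves this direction by a fundamentally different route that avoids any direct matrix lower bound. First, a short claim: a monotone class with arbitrarily large clique minors contains an induced subdivision of every cubic graph (inside a $K_t$-minor model one finds a minor containing the cubic graph $H$ as a subgraph; an edge-minimal subgraph of $G$ whose some minor contains $H$ is exactly an induced subdivision of $H$). Second, from a pair $(S,\preceq)$ where $S$ is an induced subdivision of a cubic $H$ and $\preceq$ is a compatible tree order, one recovers $H$ by a fixed first-order transduction: mark the degree-3 vertices of $S$ as the original branch vertices, and recognize subdivided edges as induced paths of degree-2 vertices whose trace in the Hasse diagram of $\preceq$ is a bounded combination of ascending and descending covering steps (with a few exceptional shapes near the root). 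Since FO transductions preserve bounded twin-width~\cite{twin-width1} and cubic graphs have unbounded twin-width~\cite{twin-width2}, bounded spanning twin-width on $\MC$ would force cubic graphs to have bounded twin-width, a contradiction. What this buys is that the lower bound you were trying to manufacture from scratch is outsourced to a known fact about cubic graphs, with the tree order used only to ``un-subdivide'' via a transduction; what it costs is a dependence on the transduction-preserves-twin-width machinery of~\cite{twin-width1}, whereas your intended argument, if it could be completed, would be self-contained at the matrix level.
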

\begin{proof}
As already mentioned, the backward implication is proved in~\cite{twin-width1} using a Lex-DFS. To show the forward direction, we exhibit graphs in $\MC$ with arbitrarily high spanning twin-width. Here an \emph{induced subdivision} of a graph $H$ is any graph obtained from arbitrarily subdividing the edges of $H$ (including the possibility of not subdividing the edge and just keeping it).
 
\begin{claim}
If a monotone class $\MC$ contains arbitrarily large clique minors, then $\MC$ contains an induced subdivision of every cubic graph.
\end{claim}
\begin{proofofclaim}
		To see this, observe that if $H$ is a cubic graph on $t$ vertices and $G$ contains a $K_t$ minor, then $G$ contains a minor that contains $H$ as a subgraph.
		Consider then an edge-minimal subgraph $G'$ of $G$ that contains a minor, itself containing $H$ as a subgraph and observe now that $G'$ is exactly an induced subdivision of $H$.	
\end{proofofclaim}

We will now rely on the notion of first-order (FO) transductions.
In general, a transduction transforms a class of structures into a class of structures, applying some rules definable in first-order logic. 
Here we only need a special case of transductions so we give a simplified definition.
A (non-copying) \emph{FO transduction} $\textsc{T}$ of a graph $G$ in a relational structure $\mathcal A$ consists of augmenting $\mathcal A$ with a constant number of unary relations, and then defining the vertex set and edge set of $G$ by means of two first-order formulas with one and two free variables, respectively, in the vocabulary of $\mathcal A$ augmented with the unary relations.
Since the interpretation of the added unary relations can be arbitrary, $\textsc{T}(\mathcal A)$ is not a single graph $G$ but a set of them.
A graph class $\mathcal C$ is a transduction in a class $\mathcal S$ of relational structures if $\mathcal C \subseteq \bigcup_{\mathcal A \in \mathcal S} \textsc{T}(\mathcal A)$. 
Importantly bounded twin-width is preserved by applying FO transductions~\cite{twin-width1}: If $\mathcal C$ is a transduction in a class of binary structures with bounded twin-width, then $\mathcal C$ itself has bounded twin-width.

Our strategy is to show that if an induced subdivision $S$ of a (connected) cubic graph $H$ is given together with a tree order $\preceq$, then one can retrieve $H$ from $(S,\preceq)$ using a fixed FO transduction.
Since FO transductions keep twin-width bounded~\cite{twin-width1}, the assumption that $\MC$ has bounded spanning twin-width would directly imply that the class of cubic graphs has also bounded twin-width, which is false~\cite{twin-width2}.
 
We only have to show how to recover $H$ from $(S,\preceq)$.
FO transductions allow to introduce a bounded number of unary relations, so we can first identify the set $V$ of vertices of $S$ that have degree~3 (which are the original vertices of $H$).
The technical task is now to decide if two such vertices $x,y\in V$ are joined or not in $H$.
This is the case if and only if there is an induced path $x=u_0, \ldots, u_{\ell}=y$ of $S$ such that all internal vertices are not in $V$.
We denote this path by $P$.
Note that when $\ell=1$ querying the existence of such a path can be directly done by asking if $xy$ is an edge  of $S$.
Let us focus on the case when we want to retrieve an edge $xy$, which has been subdivided at least once, i.e., let us see how the tree order can help to retrieve the edge $xy$ when $\ell>1$.
 
As $\preceq$ is a tree order compatible with $S$, $xy$ is an edge of $H$ that is subdivided if and only if one of the following holds for $P$:
 
	\begin{enumerate}
		\item $u_0 \prec^c  \cdots \prec^c u_{\ell},$
		\item $u_0 \succ^c  \cdots \succ^c u_{\ell},$ or
		\item there exists $i\in [\ell-1]$ such that $u_0 \prec^c \cdots  \prec^cu_i$ and $u_{i+1}\succ^c \cdots\succ^c u_{\ell}$.
		\item there exists $i\in [\ell-1]$ such that $u_i \prec^c \cdots  \prec^cu_0$ and $u_i \prec^c \cdots  \prec^c u_{\ell}$ \\(exceptional case when $u_i$ is the root of the tree-order).
		\item there exists $i\in [\ell-1]$ such that $u_i \prec^c \cdots  \prec^cu_0$ and $u_{\ell} \prec^c \cdots  \prec^c u_{i+1}$ \\(exceptional case when $u_i$ is the root of the tree-order).
		\item there exists $i\in [\ell-1]$ such that $u_i \prec^c \cdots  \prec^c u_{\ell}$ and $u_{0} \prec^c \cdots  \prec^c u_{i-1}$ \\(exceptional case when $u_i$ is the root of the tree-order).
	\end{enumerate}
 
Since all these conditions can be tested with a (long but bounded) first-order formula, $H$~is a first-order transduction of $(S,\prec)$.
\end{proof}
 
An interesting direction would be to investigate which hereditary classes have bounded spanning twin-width, as it could indicate some possible generalization of minor-closed classes to the dense setting.
But even sparse hereditary classes $\cal C$ with bounded spanning twin-width are somewhat mysterious.
For instance, the induced subgraphs of the grid with added diagonals (a non-planar graph), where two edges are added in each facial $C_4$ to form a $K_4$, have bounded spanning twin-width (and arbitrarily large clique minors). 
 
We believe that subdivisions of arbitrary cubic graphs could play an important role in this study.
More specifically, it is plausible that every hereditary class of graphs with girth at least 5 and no induced subdivisions of a fixed subcubic graph has bounded twin-width (and perhaps even, bounded spanning twin-width).
For example, segment intersection graphs (which exclude induced subdivisions of the 1-subdivision of $K_{3,3}$) without $K_{t,t}$ subgraph were shown to have bounded twin-width~\cite{twin-width8}.



\end{document}